\theoremstyle{plain}
\newtheorem{observation}{Remark}[section]
\newtheorem{lemma}[observation]{Lemma}  
\newtheorem{theorem}[observation]{Theorem}
\newtheorem{definition}[observation]{Definition}
\newtheorem{remark}[observation]{Remark}
\newtheorem{proposition}[observation]{Proposition} 
\newtheorem{corollary}[observation]{Corollary} 
\newcommand {\X} {\mathbb{X}}
\newcommand {\Y} {\mathbb{Y}}
\newcommand {\Z} {\mathbb{Z}}
\newcommand {\N} {\mathbb{N}}
\tikzset{every picture/.append style={scale=.6}, every node/.style={scale=0.6}}
\tikzstyle{none}=[inner sep=-1pt]
\tikzstyle{dotted}=[inner sep=-1pt]
\tikzstyle{rectangle}=[shape=rectangle,draw,scale=3]
\tikzstyle{circle}=[shape=circle,draw]
\tikzstyle{circleSmall}=[shape=circle,scale=0.2,draw]
\tikzstyle{nonescaled}=[inner sep=-1pt,scale=1.5]
\tikzstyle{rectangleempty}=[shape=rectangle,minimum height=1cm,draw]
\tikzstyle{rectSmall}=[shape=rectangle,minimum height=0.7cm, minimum width=0.7cm, draw]
\tikzstyle{rectSmallDual}=[ shape=rectangle, minimum height=0.7cm, minimum width=0.7cm, draw]
\tikzstyle{rectMedium}=[shape=rectangle,draw,minimum height = 1.3 cm, minimum width=1 cm]
\tikzset{->-/.style={decoration={
  markings,
  mark=at position .5 with {\arrow{>}}},postaction={decorate}}}
  \tikzset{-<-/.style={decoration={
  markings,
  mark=at position .5 with {\arrow{<}}},postaction={decorate}}}
\tikzstyle{rtriangleb}=[regular polygon, fill=black, regular polygon sides=3, shape border rotate=-90, scale=0.5, draw]
\tikzstyle{ltriangleb}=[regular polygon, fill=black, regular polygon sides=3, shape border rotate=90, scale=0.5, draw]
\tikzstyle{rtrianglew}=[regular polygon,  regular polygon sides=3, shape border rotate=-90, scale=0.5, draw]
\tikzstyle{ltrianglew}=[regular polygon,  regular polygon sides=3, shape border rotate=90, scale=0.5, draw]
\newcommand {\swap} {\mathsf{swap}}
\renewcommand{\bar}[1]{\overline{#1}\hspace*{.05cm}}
\renewcommand{\hat}[1]{|#1\ra}
\tikzset{%
symbol/.style={%
draw=none,
every to/.append style={%
edge node={node [sloped, allow upside down, auto=false]{$#1$}}}
}
}
\tikzset{simple/.style={}}
\tikzset{rn/.style={outer sep=-3.4pt}}
\tikzset{square/.style={draw,fill=white, regular polygon,regular polygon sides=4}}
\tikzset{dot/.style={thick, fill=black, circle, scale=0.4, inner sep = .05cm}}
\tikzset{oplus/.style={draw, scale=0.9,minimum height=.1cm,circle,append after command={
[shorten >=\pgflinewidth, shorten <=\pgflinewidth,]
(\tikzlastnode.north) edge (\tikzlastnode.south)
(\tikzlastnode.east) edge (\tikzlastnode.west)
}
}
}
\tikzset{fanin/.style={
draw,
shape border rotate=30,
regular polygon,
regular polygon sides=3,
fill=white,
inner sep = .1cm
}
}
\tikzset{fanout/.style={
draw,
shape border rotate=-30,
regular polygon,
regular polygon sides=3,
fill=white,
inner sep = .1cm
}
}
\tikzset{onein/.style={
draw,
shape border rotate=30,
regular polygon,
regular polygon sides=3,
fill=black,
inner sep = .04cm,
scale=1.2
}
}
\tikzset{oneout/.style={
draw,
shape border rotate=-30,
regular polygon,
regular polygon sides=3,
fill=black,
inner sep = .04cm,
scale=1.2
}
}
\tikzset{zeroin/.style={
draw,
shape border rotate=30,
regular polygon,
regular polygon sides=3,
fill=white,
inner sep = .04cm,
scale=1.2
}
}
\tikzset{zeroout/.style={
draw,
shape border rotate=-30,
regular polygon,
regular polygon sides=3,
fill=white,
inner sep = .04cm,
scale=1.2
}
}
\tikzstyle{uptriangle}=[regular polygon, regular polygon sides=3, scale=1, draw]
\tikzstyle{downtriangle}=[regular polygon, regular polygon sides=3, shape border rotate=180, scale=1, draw]
\tikzset{wires/.style={}}
\tikzset{box/.style={inner sep=0pt, thick, draw=black, text height=1.5ex, text depth=.25ex, text centered, minimum height=3em, anchor=center}}
\newcommand{\dom}{\mathsf{dom}}
\newcommand{\cod}{\mathsf{cod }}
\newcommand{\Par}{\mathsf{Par}}
\newcommand{\cnot}{\mathsf{cnot}}
\newcommand{\CNOT}{\mathsf{CNOT}}
\newcommand{\cnv}{^{\circ}}
\newcommand{\op}{^{\text{op}}}
\newcommand{\la}{\langle}
\newcommand{\ra}{\rangle}
\newcommand{\onein}{| 1 \ra}
\newcommand{\oneout}{\la 1 |}
\newcommand{\zeroin}{| 0 \ra}
\newcommand{\zeroout}{\la 0|}
\newcommand{\Tor}{\mathsf{Tor}}
\newcommand{\CTor}{\mathsf{CTor}}
\newcommand{\ParIso}{\mathsf{ParIso}}
\newcommand{\Sets}{\mathsf{Set}} 
\newcommand{\Total}{\mathsf{Total}}
\renewcommand{\bf}[1]{\textbf{#1}}
\newcommand{\pullbackcorner}[1][u]{\save*!/#1+1.2pc/#1:(1,-1)@^{|-}\restore}
\newcommand*{\doublerightarrow}[2]{\mathrel{
  \settowidth{\@tempdima}{$\scriptstyle#1$}
  \settowidth{\@tempdimb}{$\scriptstyle#2$}
  \ifdim\@tempdimb>\@tempdima \@tempdima=\@tempdimb\fi
  \mathop{\vcenter{
    \offinterlineskip\ialign{\hbox to\dimexpr\@tempdima+1em{##}\cr
    \rightarrowfill\cr\noalign{\kern.5ex}
    \rightarrowfill\cr}}}\limits^{\!#1}_{\!#2}}}
\newcommand*{\triplerightarrow}[1]{\mathrel{
  \settowidth{\@tempdima}{$\scriptstyle#1$}
  \mathop{\vcenter{
    \offinterlineskip\ialign{\hbox to\dimexpr\@tempdima+1em{##}\cr
    \rightarrowfill\cr\noalign{\kern.5ex}
    \rightarrowfill\cr\noalign{\kern.5ex}
    \rightarrowfill\cr}}}\limits^{\!#1}}}
\newcolumntype{P}[1]{>{\centering\arraybackslash}p{#1}}
\newcolumntype{M}[1]{>{\centering\arraybackslash}m{#1}}
\title{The Category \texorpdfstring{$\CNOT$}{CNOT}}
\author{Robin Cockett  \thanks{partially supported by NSERC} \quad \quad Cole Comfort \quad \quad Priyaa Srinivasan \thanks{partially supported by the NSERC grant of Dr. Barry C. Sanders}
\institute{Department of Computer Science \\ and Institute for Quantum Science and Technology \\
University of Calgary\\
Alberta, Canada}
\email{robin@ucalgary.ca}
}
\begin{document}
\maketitle

\begin{abstract}
We exhibit a complete set of identities for $\CNOT$, the symmetric monoidal category generated by the controlled-not gate, the $\swap$ gate, and the computational ancill\ae. We prove that $\CNOT$ is a discrete inverse category.  Moreover, we prove that $\CNOT$ is equivalent to the category of partial isomorphisms of finitely-generated non-empty commutative torsors of characteristic $2$. Equivalently this is the category of affine partial isomorphisms between finite-dimensional $\mathbb{Z}_2$ vector spaces.
\end{abstract}


\section{Introduction}

In this paper, we model the behaviour of circuits comprised of $\cnot$ gates and the four computational ancill\ae ---which restrict certain inputs and outputs to be either 0 or 1.  We model these circuits as maps in the symmetric monoidal category $\CNOT$ given by finite generators and relations.  Although the $\cnot$ gate is unitary, the ancill\ae\ are not.  This is because ancill\ae\ model state preparation and measurement which are irreversible operations.  Ancilli\ae\ are commonly used in quantum error correction codes \cite{chiaverini2004realization, gottesman1997stabilizer}; moreover, the proof that the Toffolli gate is universal uses ancill\ae\ \cite{1506.03777}.  Although unitary transformations are an active area of research \cite{1408.5728}, and there is a finite, faithful set of identities for circuits composed of $\cnot$ gates \cite{1701.00140}, the structure of circuits composed of $\cnot$ gates \emph{and} ancill\ae\ is not yet studied. 

This research extends the work of Lafont, who classified several similar categories \cite{Lafont}.  We prove that $\CNOT$ is equivalent to a concrete category of torsors and partial maps -- in other words the category of affine partial isomorphisms between finite dimensional $\mathbb{Z}_2$ vector spaces.  We have yet to prove that there is a {\em faithful\/} embedding of $\CNOT$ into the category of dagger Frobenius Algebras in finite-dimensional Hilbert spaces and completely positive maps, $\mathsf{CPM(FHilb)}$ (see \cite{selinger2007dagger}).

\section{Defining the Category \texorpdfstring{$\CNOT$}{CNOT}}

$\CNOT$ is a symmetric monoidal category presented by generating maps and identities.  We use string diagrams to express the maps in $\CNOT$.  For the $\cnot$-gate and the 
upside-down $\cnot$-gate, we use the following notation respectively:
\begin{flalign*}
\begin{tikzpicture}[baseline={([yshift=-.5ex]current bounding box.center)}]
\begin{pgfonlayer}{nodelayer}
\node [style=rn] (0) at (0, .5) {};
\node [style=dot] (1) at (.5, .5) {};
\node [style=rn] (2) at (1, .5) {};
\node [style=rn] (10) at (0, .) {};
\node [style=oplus] (11) at (.5, 0) {};
\node [style=rn] (12) at (1, .0) {};
\end{pgfonlayer}
\begin{pgfonlayer}{edgelayer}
\draw [style=simple] (0) to (2);
\draw [style=simple] (10) to (12);
\draw [style=simple] (1) to (11);
\end{pgfonlayer}
\end{tikzpicture}
\hspace*{.5cm}
\hspace*{.5cm}
\begin{tikzpicture}[baseline={([yshift=-.5ex]current bounding box.center)}]
\begin{pgfonlayer}{nodelayer}
\node [style=rn] (0) at (0, .5) {};
\node [style=oplus] (1) at (.5, .5) {};
\node [style=rn] (2) at (1, .5) {};
\node [style=rn] (10) at (0, .) {};
\node [style=dot] (11) at (.5, 0) {};
\node [style=rn] (12) at (1, .0) {};
\end{pgfonlayer}
\begin{pgfonlayer}{edgelayer}
\draw [style=simple] (0) to (2);
\draw [style=simple] (10) to (12);
\draw [style=simple] (1) to (11);
\end{pgfonlayer}
\end{tikzpicture}
:=
\begin{tikzpicture}[baseline={([yshift=-.5ex]current bounding box.center)}]
\begin{pgfonlayer}{nodelayer}
\node [style=rn] (0) at (0, 0) {};
\node [style=rn] (1) at (.25, 0) {};
\node [style=oplus] (2) at (1, 0) {};
\node [style=rn] (3) at (1.75, 0) {};
\node [style=rn] (4) at (2, 0) {};
\node [style=rn] (10) at (0, .5) {};
\node [style=rn] (11) at (.25, .5) {};
\node [style=dot] (12) at (1, .5) {};
\node [style=rn] (13) at (1.75, .5) {};
\node [style=rn] (14) at (2, .5) {};
\end{pgfonlayer}
\begin{pgfonlayer}{edgelayer}
\draw plot [smooth, tension=1] coordinates { (0) (1) (12) (3) (4)};
\draw plot [smooth, tension=1] coordinates { (10) (11) (2) (13) (14)};
\draw plot [smooth, tension=0.3] coordinates { (2) (12)};
\end{pgfonlayer}
\end{tikzpicture}\\
\end{flalign*}

We call the wire of the $\cnot$ gate with the dot the \textbf{control bit} and the other wire with the $\oplus$ the \textbf{operating bit}.
We graphically denote the input and output ancillary bits for 1 by 
\tikz{
\node [style=onein] (0) at (0, 0) {};
\node [style=rn] (1) at (1, 0) {};
\draw [style=simple] (0) to (1);
}
 and 
\tikz{
\node [style=rn] (0) at (0, 0) {};
\node [style=oneout] (1) at (1, 0) {};
\draw [style=simple] (0) to (1);
}
.  Algebraically we denote the input ancillary bit for 1 by $\onein$ and the output ancillary bit for 1 by $\oneout$.  

$\CNOT$ is the symmetric monoidal category generated by $\cnot$, and the 1 ancill\ae\ satisfying the identities \ref{CNTi}-\ref{CNTix}.  Two maps are the same in $\CNOT$ if and only if they can be transformed to another using the following identities:
\begin{enumerate}[label=\textit{(CNT.\arabic*)},leftmargin=*]
\item \leavevmode\vadjust{\vspace{-\baselineskip}}\newline
\vspace*{-.7cm}
\label{CNTi}
\setlength{\jot}{10pt}
\begin{flalign*}
\begin{tikzpicture}[baseline={([yshift=-.5ex]current bounding box.center)}]
\begin{pgfonlayer}{nodelayer}
\node [style=rn] (0) at (0, 0) {};
\node [style=rn] (1) at (0, .5) {};
\node [style=oplus] (2) at (.5, 0) {};
\node [style=dot] (3) at (.5, .5) {};
\node [style=dot] (4) at (1, 0) {};
\node [style=oplus] (5) at (1, .5) {};
\node [style=oplus] (6) at (1.5, 0) {};
\node [style=dot] (7) at (1.5, .5) {};
\node [style=rn] (8) at (2, 0) {};
\node [style=rn] (9) at (2, .5) {};
\end{pgfonlayer}
\begin{pgfonlayer}{edgelayer}
\draw [style=simple] (0) to (8);
\draw [style=simple] (1) to (9);
\draw [style=simple] (2) to (3);
\draw [style=simple] (4) to (5);
\draw [style=simple] (6) to (7);
\end{pgfonlayer}
\end{tikzpicture}
=
\begin{tikzpicture}[baseline={([yshift=-.5ex]current bounding box.center)}]
\begin{pgfonlayer}{nodelayer}
\node [style=rn] (0) at (0, 0) {};
\node [style=rn] (1) at (0, 1) {};
\node [style=rn] (2) at (1, 0.5) {};
\node [style=rn] (3) at (2, 0) {};
\node [style=rn] (4) at (2, 1) {};
\end{pgfonlayer}
\begin{pgfonlayer}{edgelayer}
\draw [style=simple, bend right] (0) to (2);
\draw [style=simple, bend left] (1) to (2);
\draw [style=simple, bend right] (2) to (3);
\draw [style=simple, bend left] (2) to (4);
\end{pgfonlayer}
\end{tikzpicture}
\end{flalign*}

\item \leavevmode\vadjust{\vspace{-\baselineskip}}\newline
\vspace*{-.9cm}
\label{CNTii}
\setlength{\jot}{10pt}

\begin{flalign*}
\begin{tikzpicture}[baseline={([yshift=-.5ex]current bounding box.center)}]
\begin{pgfonlayer}{nodelayer}
\node [style=rn] (0) at (0, 0) {};
\node [style=rn] (1) at (0, .5) {};
\node [style=oplus] (2) at (.5, 0) {};
\node [style=dot] (3) at (.5, .5) {};
\node [style=oplus] (6) at (1, 0) {};
\node [style=dot] (7) at (1, .5) {};
\node [style=rn] (8) at (1.5, 0) {};
\node [style=rn] (9) at (1.5, .5) {};
\end{pgfonlayer}
\begin{pgfonlayer}{edgelayer}
\draw [style=simple] (0) to (8);
\draw [style=simple] (1) to (9);
\draw [style=simple] (2) to (3);
\draw [style=simple] (6) to (7);
\end{pgfonlayer}
\end{tikzpicture}
=
\begin{tikzpicture}[baseline={([yshift=-.5ex]current bounding box.center)}]
\begin{pgfonlayer}{nodelayer}
\node [style=rn] (0) at (0, 0) {};
\node [style=rn] (1) at (0, .5) {};
\node [style=rn] (3) at (1.5, 0) {};
\node [style=rn] (4) at (1.5, .5) {};
\end{pgfonlayer}
\begin{pgfonlayer}{edgelayer}
\draw [style=simple] (0) to (3);
\draw [style=simple] (1) to (4);
\end{pgfonlayer}
\end{tikzpicture}
\end{flalign*}

\item \leavevmode\vadjust{\vspace{-\baselineskip}}\newline
\vspace*{-.9cm}
\label{CNTiii}
\setlength{\jot}{10pt}

\begin{flalign*}
\begin{tikzpicture}[baseline={([yshift=-.5ex]current bounding box.center)}]
\begin{pgfonlayer}{nodelayer}
\node [style=rn] (0) at (0, 1) {};
\node [style=rn] (1) at (0, .5) {};
\node [style=rn] (2) at (0, 0) {};
\node [style=oplus] (3) at (.75, 1) {};
\node [style=dot] (4) at (.75, .5) {};
\node [style=dot] (5) at (1.25, .5) {};
\node [style=oplus] (6) at (1.25, 0) {};
\node [style=rn] (7) at (2, 1) {};
\node [style=rn] (8) at (2, .5) {};
\node [style=rn] (9) at (2, 0) {};
\end{pgfonlayer}
\begin{pgfonlayer}{edgelayer}
\draw [style=simple] (0) to (7);
\draw [style=simple] (1) to (8);
\draw [style=simple] (2) to (9);
\draw [style=simple] (3) to (4);
\draw [style=simple] (5) to (6);
\end{pgfonlayer}
\end{tikzpicture}
=
\begin{tikzpicture}[baseline={([yshift=-.5ex]current bounding box.center)}]
\begin{pgfonlayer}{nodelayer}
\node [style=rn] (0) at (0, 1) {};
\node [style=rn] (1) at (0, .5) {};
\node [style=rn] (2) at (0, 0) {};
\node [style=oplus] (3) at (1.25, 1) {};
\node [style=dot] (4) at (1.25, .5) {};
\node [style=dot] (5) at (.75, .5) {};
\node [style=oplus] (6) at (.75, 0) {};
\node [style=rn] (7) at (2, 1) {};
\node [style=rn] (8) at (2, .5) {};
\node [style=rn] (9) at (2, 0) {};
\end{pgfonlayer}
\begin{pgfonlayer}{edgelayer}
\draw [style=simple] (0) to (7);
\draw [style=simple] (1) to (8);
\draw [style=simple] (2) to (9);
\draw [style=simple] (3) to (4);
\draw [style=simple] (5) to (6);
\end{pgfonlayer}
\end{tikzpicture}
\end{flalign*}

\item \leavevmode\vadjust{\vspace{-\baselineskip}}\newline
\vspace*{-.9cm}
\label{CNTiv}
\setlength{\jot}{10pt}

\begin{flalign*}
\begin{tikzpicture}[baseline={([yshift=-.5ex]current bounding box.center)}]
\begin{pgfonlayer}{nodelayer}
\node [style=onein] (0) at (0, .5) {};
\node [style=rn] (1) at (0, 0) {};
\node [style=dot] (2) at (.5, .5) {};
\node [style=oplus] (3) at (.5, 0) {};
\node [style=rn] (4) at (1, .5) {};
\node [style=rn] (5) at (1, 0) {};
\end{pgfonlayer}
\begin{pgfonlayer}{edgelayer}
\draw [style=simple] (0) to (4);
\draw [style=simple] (1) to (5);
\draw [style=simple] (2) to (3);
\end{pgfonlayer}
\end{tikzpicture}
=
\begin{tikzpicture}[baseline={([yshift=-.5ex]current bounding box.center)}]
\begin{pgfonlayer}{nodelayer}
\node [style=onein] (0) at (0, .5) {};
\node [style=rn] (1) at (0, 0) {};
\node [style=dot] (2) at (.5, .5) {};
\node [style=oplus] (3) at (.5, 0) {};
\node [style=oneout] (4) at (1, .5) {};
\node [style=rn] (5) at (2, 0) {};
\node [style=onein] (6) at (1.5, .5) {};
\node [style=rn] (7) at (2, 0.5) {};
\end{pgfonlayer}
\begin{pgfonlayer}{edgelayer}
\draw [style=simple] (0) to (4);
\draw [style=simple] (1) to (5);
\draw [style=simple] (2) to (3);
\draw [style=simple] (6) to (7);
\end{pgfonlayer}
\end{tikzpicture}
\hspace*{.5cm}
\text{and}
\hspace*{.5cm}
\begin{tikzpicture}[baseline={([yshift=-.5ex]current bounding box.center)}]
\begin{pgfonlayer}{nodelayer}
\node [style=rn] (0) at (0, .5) {};
\node [style=rn] (1) at (0, 0) {};
\node [style=dot] (2) at (.5, .5) {};
\node [style=oplus] (3) at (.5, 0) {};
\node [style=oneout] (4) at (1, .5) {};
\node [style=rn] (5) at (1, 0) {};
\end{pgfonlayer}
\begin{pgfonlayer}{edgelayer}
\draw [style=simple] (0) to (4);
\draw [style=simple] (1) to (5);
\draw [style=simple] (2) to (3);
\end{pgfonlayer}
\end{tikzpicture}
=
\begin{tikzpicture}[baseline={([yshift=-.5ex]current bounding box.center)}]
\begin{pgfonlayer}{nodelayer}
\node [style=oneout] (0) at (2, .5) {};
\node [style=rn] (1) at (2, 0) {};
\node [style=dot] (2) at (1.5, .5) {};
\node [style=oplus] (3) at (1.5, 0) {};
\node [style=onein] (4) at (1, .5) {};
\node [style=rn] (5) at (0, 0) {};
\node [style=oneout] (6) at (.5, .5) {};
\node [style=rn] (7) at (0, 0.5) {};
\end{pgfonlayer}
\begin{pgfonlayer}{edgelayer}
\draw [style=simple] (0) to (4);
\draw [style=simple] (1) to (5);
\draw [style=simple] (2) to (3);
\draw [style=simple] (6) to (7);
\end{pgfonlayer}
\end{tikzpicture}
\end{flalign*}

\item \leavevmode\vadjust{\vspace{-\baselineskip}}\newline
\vspace*{-.9cm}
\label{CNTv}
\setlength{\jot}{10pt}

\begin{flalign*}
\begin{tikzpicture}[baseline={([yshift=-.5ex]current bounding box.center)}]
\begin{pgfonlayer}{nodelayer}
\node [style=rn] (0) at (0, 1) {};
\node [style=rn] (1) at (0, .5) {};
\node [style=rn] (2) at (0, 0) {};
\node [style=dot] (3) at (.75, 1) {};
\node [style=oplus] (4) at (.75, .5) {};
\node [style=oplus] (5) at (1.25, .5) {};
\node [style=dot] (6) at (1.25, 0) {};
\node [style=rn] (7) at (2, 1) {};
\node [style=rn] (8) at (2, .5) {};
\node [style=rn] (9) at (2, 0) {};
\end{pgfonlayer}
\begin{pgfonlayer}{edgelayer}
\draw [style=simple] (0) to (7);
\draw [style=simple] (1) to (8);
\draw [style=simple] (2) to (9);
\draw [style=simple] (3) to (4);
\draw [style=simple] (5) to (6);
\end{pgfonlayer}
\end{tikzpicture}
=
\begin{tikzpicture}[baseline={([yshift=-.5ex]current bounding box.center)}]
\begin{pgfonlayer}{nodelayer}
\node [style=rn] (0) at (0, 1) {};
\node [style=rn] (1) at (0, .5) {};
\node [style=rn] (2) at (0, 0) {};
\node [style=dot] (3) at (1.25, 1) {};
\node [style=oplus] (4) at (1.25, .5) {};
\node [style=oplus] (5) at (.75, .5) {};
\node [style=dot] (6) at (.75, 0) {};
\node [style=rn] (7) at (2, 1) {};
\node [style=rn] (8) at (2, .5) {};
\node [style=rn] (9) at (2, 0) {};
\end{pgfonlayer}
\begin{pgfonlayer}{edgelayer}
\draw [style=simple] (0) to (7);
\draw [style=simple] (1) to (8);
\draw [style=simple] (2) to (9);
\draw [style=simple] (3) to (4);
\draw [style=simple] (5) to (6);
\end{pgfonlayer}
\end{tikzpicture}
\end{flalign*}

\item \leavevmode\vadjust{\vspace{-\baselineskip}}\newline
\vspace*{-.9cm}
\label{CNTvi}
\setlength{\jot}{10pt}

\begin{flalign*}
\begin{tikzpicture}[baseline={([yshift=-.5ex]current bounding box.center)}]
\begin{pgfonlayer}{nodelayer}
\node [style=onein] (0) at (0, 0) {};
\node [style=oneout] (1) at (1, 0) {};
\end{pgfonlayer}
\begin{pgfonlayer}{edgelayer}
\draw [style=simple] (0) to (1);
\end{pgfonlayer}
\end{tikzpicture}
=
\end{flalign*}

\item \leavevmode\vadjust{\vspace{-\baselineskip}}\newline
\vspace*{-.9cm}
\label{CNTvii}
\setlength{\jot}{10pt}

\begin{flalign*}
\begin{tikzpicture}[baseline={([yshift=-.5ex]current bounding box.center)}]
\begin{pgfonlayer}{nodelayer}
\node [style=onein] (0) at (0, 1) {};
\node [style=onein] (1) at (0, .5) {};
\node [style=rn] (2) at (0, 0) {};
\node [style=dot] (3) at (.5, 1) {};
\node [style=oplus] (4) at (.5, .5) {};
\node [style=dot] (5) at (1, .5) {};
\node [style=oplus] (6) at (1, 0) {};
\node [style=oneout] (7) at (1, 1) {};
\node [style=rn] (8) at (1.5, .5) {};
\node [style=rn] (9) at (1.5, 0) {};
\end{pgfonlayer}
\begin{pgfonlayer}{edgelayer}
\draw [style=simple] (0) to (7);
\draw [style=simple] (1) to (8);
\draw [style=simple] (2) to (9);
\draw [style=simple] (3) to (4);
\draw [style=simple] (5) to (6);
\end{pgfonlayer}
\end{tikzpicture}
=
\begin{tikzpicture}[baseline={([yshift=-.5ex]current bounding box.center)}]
\begin{pgfonlayer}{nodelayer}
\node [style=onein] (0) at (0, 1) {};
\node [style=onein] (1) at (0, .5) {};
\node [style=rn] (2) at (0, 0) {};
\node [style=dot] (3) at (.5, 1) {};
\node [style=oplus] (4) at (.5, .5) {};
\node [style=oneout] (7) at (1, 1) {};
\node [style=rn] (8) at (1.5, .5) {};
\node [style=rn] (9) at (1.5, 0) {};
\end{pgfonlayer}
\begin{pgfonlayer}{edgelayer}
\draw [style=simple] (0) to (7);
\draw [style=simple] (1) to (8);
\draw [style=simple] (2) to (9);
\draw [style=simple] (3) to (4);
\end{pgfonlayer}
\end{tikzpicture}
\hspace*{.5cm}
\text{and}
\hspace*{.5cm}
\begin{tikzpicture}[baseline={([yshift=-.5ex]current bounding box.center)}]
\begin{pgfonlayer}{nodelayer}
\node [style=oneout] (0) at (1.5, 1) {};
\node [style=oneout] (1) at (1.5, .5) {};
\node [style=rn] (2) at (1.5, 0) {};
\node [style=dot] (3) at (1, 1) {};
\node [style=oplus] (4) at (1, .5) {};
\node [style=dot] (5) at (.5, .5) {};
\node [style=oplus] (6) at (.5, 0) {};
\node [style=onein] (7) at (.5, 1) {};
\node [style=rn] (8) at (0, .5) {};
\node [style=rn] (9) at (0, 0) {};
\end{pgfonlayer}
\begin{pgfonlayer}{edgelayer}
\draw [style=simple] (0) to (7);
\draw [style=simple] (1) to (8);
\draw [style=simple] (2) to (9);
\draw [style=simple] (3) to (4);
\draw [style=simple] (5) to (6);
\end{pgfonlayer}
\end{tikzpicture}
=
\begin{tikzpicture}[baseline={([yshift=-.5ex]current bounding box.center)}]
\begin{pgfonlayer}{nodelayer}
\node [style=oneout] (0) at (1.5, 1) {};
\node [style=oneout] (1) at (1.5, .5) {};
\node [style=rn] (2) at (1.5, 0) {};
\node [style=dot] (3) at (1, 1) {};
\node [style=oplus] (4) at (1, .5) {};
\node [style=onein] (7) at (.5, 1) {};
\node [style=rn] (8) at (0, .5) {};
\node [style=rn] (9) at (0, 0) {};
\end{pgfonlayer}
\begin{pgfonlayer}{edgelayer}
\draw [style=simple] (0) to (7);
\draw [style=simple] (1) to (8);
\draw [style=simple] (2) to (9);
\draw [style=simple] (3) to (4);
\end{pgfonlayer}
\end{tikzpicture}
\end{flalign*}

\item \leavevmode\vadjust{\vspace{-\baselineskip}}\newline
\vspace*{-.9cm}
\label{CNTviii}
\setlength{\jot}{10pt}

\begin{flalign*}
\begin{tikzpicture}[baseline={([yshift=-.5ex]current bounding box.center)}]
\begin{pgfonlayer}{nodelayer}
\node [style=rn] (0) at (0, 1) {};
\node [style=rn] (1) at (0, .5) {};
\node [style=rn] (2) at (0, 0) {};
\node [style=dot] (3) at (.5, 1) {};
\node [style=oplus] (4) at (.5, .5) {};
\node [style=dot] (5) at (1, .5) {};
\node [style=oplus] (6) at (1, 0) {};
\node [style=dot] (7) at (1.5, 1) {};
\node [style=oplus] (8) at (1.5, .5) {};
\node [style=rn] (9) at (2, 1) {};
\node [style=rn] (10) at (2, .5) {};
\node [style=rn] (11) at (2, 0) {};
\end{pgfonlayer}
\begin{pgfonlayer}{edgelayer}
\draw [style=simple] (0) to (9);
\draw [style=simple] (1) to (10);
\draw [style=simple] (2) to (11);
\draw [style=simple] (3) to (4);
\draw [style=simple] (5) to (6);
\draw [style=simple] (7) to (8);
\end{pgfonlayer}
\end{tikzpicture}
=
\begin{tikzpicture}[baseline={([yshift=-.5ex]current bounding box.center)}]
\begin{pgfonlayer}{nodelayer}
\node [style=rn] (0) at (0, 1) {};
\node [style=rn] (1) at (0, .5) {};
\node [style=rn] (2) at (0, 0) {};
\node [style=dot] (5) at (.5, .5) {};
\node [style=oplus] (6) at (.5, 0) {};
\node [style=dot] (7) at (1, 1) {};
\node [style=oplus] (8) at (1, 0) {};
\node [style=rn] (9) at (1.5, 1) {};
\node [style=rn] (10) at (1.5, .5) {};
\node [style=rn] (11) at (1.5, 0) {};
\end{pgfonlayer}
\begin{pgfonlayer}{edgelayer}
\draw [style=simple] (0) to (9);
\draw [style=simple] (1) to (10);
\draw [style=simple] (2) to (11);
\draw [style=simple] (5) to (6);
\draw [style=simple] (7) to (8);
\end{pgfonlayer}
\end{tikzpicture}
\end{flalign*}

\item \leavevmode\vadjust{\vspace{-\baselineskip}}\newline
\vspace*{-.9cm}
\label{CNTix}
\setlength{\jot}{10pt}

\begin{flalign*}
\begin{tikzpicture}[baseline={([yshift=-.5ex]current bounding box.center)}]
\begin{pgfonlayer}{nodelayer}
\node [style=onein] (0) at (0, 1) {};
\node [style=onein] (1) at (0, .5) {};
\node [style=rn] (2) at (0, 0) {};
\node [style=dot] (3) at (.5, 1) {};
\node [style=oplus] (4) at (.5, .5) {};
\node [style=oneout] (7) at (1, 1) {};
\node [style=oneout] (8) at (1, .5) {};
\node [style=rn] (9) at (1, 0) {};
\end{pgfonlayer}
\begin{pgfonlayer}{edgelayer}
\draw [style=simple] (0) to (7);
\draw [style=simple] (1) to (8);
\draw [style=simple] (2) to (9);
\draw [style=simple] (3) to (4);
\end{pgfonlayer}
\end{tikzpicture}
=
\begin{tikzpicture}[baseline={([yshift=-.5ex]current bounding box.center)}]
\begin{pgfonlayer}{nodelayer}
\node [style=onein] (0) at (0, 1) {};
\node [style=onein] (1) at (0, .5) {};
\node [style=rn] (2) at (0, 0) {};
\node [style=dot] (3) at (1, 1) {};
\node [style=oplus] (4) at (1, .5) {};
\node [style=oneout] (7) at (2, 1) {};
\node [style=oneout] (8) at (2, .5) {};
\node [style=rn] (9) at (2, 0) {};
\node [style=oneout] (10) at (0.75, 0) {};
\node [style=onein] (11) at (1.25, 0) {};
\end{pgfonlayer}
\begin{pgfonlayer}{edgelayer}
\draw [style=simple] (0) to (7);
\draw [style=simple] (1) to (8);
\draw [style=simple] (2) to (10);
\draw [style=simple] (11) to (9);
\draw [style=simple] (3) to (4);
\end{pgfonlayer}
\end{tikzpicture}
\end{flalign*}
\end{enumerate}

Note that all of the identities \ref{CNTi}-\ref{CNTix}\ are horizontally symmetric.  This symmetry expresses a functorial involution which will be useful later.

While the first eight identities are quite familiar, \ref{CNTix}\ may be unexpected: it is reminiscent of the absorbing scalar Axiom {\em (ZO)} in the ZX calculus \cite{ZX}.

Note that $\CNOT$ only has 3 generating gates as we can construct the 0-ancillary bits from $\cnot$ and the 1 ancillary bits.
The following gates are respectively the input and output 0 ancillary bits:
\begin{flalign*}
\begin{tikzpicture}[baseline={([yshift=-.5ex]current bounding box.center)}]
\begin{pgfonlayer}{nodelayer}
\node [style=zeroin] (0) at (0, 0) {};
\node [style=rn] (1) at (.5, 0) {};
\end{pgfonlayer}
\begin{pgfonlayer}{edgelayer}
\draw [style=simple] (0) to (1);
\end{pgfonlayer}
\end{tikzpicture}
:=
\begin{tikzpicture}[baseline={([yshift=-.5ex]current bounding box.center)}]
\begin{pgfonlayer}{nodelayer}
\node [style=onein] (0) at (0, .5) {};
\node [style=dot] (1) at (.5, .5) {};
\node [style=oneout] (2) at (1, .5) {};
\node [style=onein] (10) at (0, 0) {};
\node [style=oplus] (11) at (.5, 0) {};
\node [style=rn] (12) at (1, 0) {};
\end{pgfonlayer}
\begin{pgfonlayer}{edgelayer}
\draw [style=simple] (0) to (2);
\draw [style=simple] (10) to (12);
\draw [style=simple] (1) to (11);
\end{pgfonlayer}
\end{tikzpicture}
\hspace*{.5cm}
\text{ and dually }
\hspace*{.5cm}
\begin{tikzpicture}[baseline={([yshift=-.5ex]current bounding box.center)}]
\begin{pgfonlayer}{nodelayer}
\node [style=rn] (0) at (0, 0) {};
\node [style=zeroout] (1) at (.5, 0) {};
\end{pgfonlayer}
\begin{pgfonlayer}{edgelayer}
\draw [style=simple] (0) to (1);
\end{pgfonlayer}
\end{tikzpicture}
:=
\begin{tikzpicture}[baseline={([yshift=-.5ex]current bounding box.center)}]
\begin{pgfonlayer}{nodelayer}
\node [style=onein] (0) at (0, .5) {};
\node [style=dot] (1) at (.5, .5) {};
\node [style=oneout] (2) at (1, .5) {};
\node [style=rn] (10) at (0, 0) {};
\node [style=oplus] (11) at (.5, 0) {};
\node [style=oneout] (12) at (1, 0) {};
\end{pgfonlayer}
\begin{pgfonlayer}{edgelayer}
\draw [style=simple] (0) to (2);
\draw [style=simple] (10) to (12);
\draw [style=simple] (1) to (11);
\end{pgfonlayer}
\end{tikzpicture}
\end{flalign*}

We denote these ancillary bits algebraically by $\zeroin$ and $\zeroout$ respectively.


\subsection{Restriction and inverse categories}


In this section, we introduce the basic theory and terminology of restriction categories which we use later.

\begin{definition}\cite[Def. 2.1.1]{Cockett}

A \textbf{restriction structure} on a category $\mathbb{X}$ is an assignment $\bar{f}: A\to A$ for each map $f:A\to B$ in $\mathbb{X}$ satisfying the following four axioms:
\begin{enumerate}[label=\textit{(R.\arabic*)},leftmargin=*]
\item \label{R.1} $\bar{f}f =f$ for every map $f$ in $\mathbb{X}$
\item \label{R.2} $\bar{f}\bar{g}=\bar{g}\bar{f}$ whenever $\dom f = \dom g$ for maps $f,g$ in $\mathbb{X}$.
\item \label{R.3} $\bar{\bar{g}f}=\bar{f}\bar{g}$ whenever $\dom f = \dom g$ for maps $f,g$ in $\mathbb{X}$.
\item \label{R.4} $f \bar{g}=\bar{fg}f$ whenever $\cod f = \dom g$ for maps $f,g$ in $\mathbb{X}$.
\end{enumerate}
\end{definition}

A \textbf{restriction category} is a category equipped with a restriction structure. A \textbf{restriction functor} is a functor which preserves the restriction structure.  An endomorphism $e: A \rightarrow A$ is called a \textbf{restriction idempotent} if $e=\bar{e}$.  In particular, each $\bar{f}$  is an idempotent ($\bar{f}\bar{f} = \bar{\bar{f}f} = \bar{f}$) and $\bar{\bar{f}} = \bar{f}$.

In a restriction category, a \textbf{total map} is a map $f$ such that $\bar{f}=1$.  The total maps of a restriction category $\mathbb{X}$ form a subcategory $\Total(\mathbb{X})$ of $\mathbb{X}$. 

A restriction category is a $2$-category with 2-cells given by the partial order $f\leq g\iff f = \bar f g$.  In particular, if $f$ and $g$ are restriction idempotents, then $f \leq g\iff f=f g$ \cite[Sec. 2.1.4]{Cockett}.

A basic example of a restriction categories is a partial map category.  One can form a partial map category from any category which has pullbacks:

\begin{definition}\cite[Sec. 3]{Cockett}  \label{defn:partialMaps}
Given a category $\mathbb{X}$ with pullbacks, \textbf{the category of partial maps} of $\mathbb{X}$, $\text{Par}(\mathbb{X})$ is defined as follows:
\begin{description}
\item[Objects: ] Objects in $\mathbb{X}$.
\item[Maps: ] Every map from $A$ to $B$ is a pair $(m,f)$ such that $m:A'\to A$ and $f:A'\to B$ such that $m$ is monic---up to an equivalence relation $(m,f)\sim (m',f')$ if and only if there exists an isomorphism $\alpha$ such that $\alpha m'=m$ and $\alpha f' = f$.
\item[Identities: ]  The identity on $A$ is the pair $(1_A,1_A)$.
\item[Composition: ] For maps $(m,f):A\to B$ and $(m',g):B\to C$, $(m,f)(m',g):=(m''m,f'g)$ where $m''$ and $f'$ are determined by the following pullback:
\[
\xymatrix{
&&A'' \pullbackcorner \ar[dl]_{m''} \ar[dr]^{f'}\\
&A' \ar[dl]_m \ar[dr]^f &&B' \ar[dl]_{m'} \ar[dr]^g\\
A&&B&&C
}
\]
Composition is well-defined even though pullbacks are determined only up to isomorphism as the maps are taken modulo the equivalence relation.
\end{description}
$\Par(\mathbb{X})$ is endowed with a restriction structure by $\bar{(m,f)}:=(m,m)$.
\end{definition}

The notion of a partial map category can be generalized by restricting the monics to any class of monics closed to composition, isomorphisms and pullbacks \cite[Sec. 3]{Cockett}. 
However, here we consider only the class of all monics.

\begin{definition}\cite[Sec. 2.3]{Cockett}
A map $f$ is a \textbf{partial isomorphism} when there exists another map $g$, called the \textbf{partial inverse} of $f$, such that $\bar f = fg$ and $\bar g = gf$. 
\end{definition}

Partial isomorphisms generalize the notion of an isomorphisms to restriction categories; thus, the composition of partial isomorphisms is a partial isomorphism and partial inverses are unique.
A restriction category $\mathbb{X}$ is an \textbf{inverse category} when all its maps are partial isomorphisms. A one object inverse category is an inverse monoid. 

Given any restriction category $\mathbb{X}$, there is a subcategory of partial isomorphisms of $\mathbb{X}$, denoted by $\ParIso(\mathbb{X})$ which is an inverse category.

There is an important alternate way to view an inverse category:

\begin{theorem}\cite[Thm. 2.20]{Cockett}  \label{defn:inverseCategory}
A category $\mathbb{X}$ is an inverse category if and only if there exists an involution ${(\_)\cnv:\mathbb{X}\op \to\mathbb{X}}$ which is the identity on objects, satisfying the following three axioms:
\begin{enumerate}[label=\textit{(Inv.\arabic*)},leftmargin=*]
\item \label{INV:1} $(c\cnv)\cnv = c$
\item \label{INV:2} ${cc\cnv c = c}$
\item \label{INV:3} ${cc\cnv dd\cnv = dd\cnv cc\cnv}$
\end{enumerate} 
\end{theorem}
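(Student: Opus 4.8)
The plan is to prove the two implications separately, in each case translating explicitly between the partial-inverse data of an inverse category and the algebraic involution $(\_)\cnv$. In one direction I define $c\cnv$ to be the partial inverse of $c$; in the other I define $\bar f := f f\cnv$ and check the restriction axioms \ref{R.1}--\ref{R.4}.

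For the forward direction, suppose $\mathbb{X}$ is an inverse category and set $c\cnv$ equal to the partial inverse of $c$, which is well-defined since partial inverses are unique. The first task is to show $(\_)\cnv$ is a contravariant identity-on-objects functor, i.e. $\id\cnv = \id$ and $(fg)\cnv = g\cnv f\cnv$; this follows from uniqueness together with the fact (noted above) that a composite of partial isomorphisms is again a partial isomorphism, so $g\cnv f\cnv$ is forced to be the partial inverse of $fg$. Axiom \ref{INV:1} is then immediate from the symmetry of the defining equations $\bar c = c c\cnv$ and $\overline{c\cnv} = c\cnv c$, which exhibit $c$ as the partial inverse of $c\cnv$. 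Axiom \ref{INV:2} reduces to $c c\cnv c = \bar c\, c = c$ using $c c\cnv = \bar c$ and \ref{R.1}. For \ref{INV:3}, note that $c c\cnv = \bar c$ and $d d\cnv = \bar d$ are restriction idempotents, and whenever the composite $c c\cnv d d\cnv$ is defined we must have $\dom c = \dom d$, so \ref{R.2} delivers the required commutation.

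For the converse, suppose $(\_)\cnv$ is an involution satisfying \ref{INV:1}--\ref{INV:3}, and define $\bar f := f f\cnv$. Axiom \ref{R.1} is exactly \ref{INV:2}, and \ref{R.2} is exactly \ref{INV:3} after expanding $\bar f \bar g = f f\cnv g g\cnv$. Two preliminary observations streamline the rest: each $\bar f$ is idempotent, since $f f\cnv f f\cnv = f f\cnv$ by \ref{INV:2}, and each $\bar f$ is self-conjugate, $(\bar f)\cnv = (f f\cnv)\cnv = f f\cnv = \bar f$, using functoriality and \ref{INV:1}. With these, \ref{R.3} follows by expanding $\overline{\bar g f} = \bar g f f\cnv (\bar g)\cnv = \bar g\, \bar f\, \bar g$ and then collapsing via idempotence and the commutation already established in \ref{R.2}.

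The step I expect to be the main obstacle is \ref{R.4}, namely $f \bar g = \overline{fg}\, f$. Expanding the right-hand side gives $\overline{fg} = f g g\cnv f\cnv = f \bar g f\cnv$, so $\overline{fg}\, f = f \bar g f\cnv f$, and the content is to show $f \bar g f\cnv f = f \bar g$. The key point is that $f\cnv f = \overline{f\cnv}$ and $\bar g$ are both restriction idempotents on the common object $\cod f = \dom g$, hence commute by \ref{INV:3}; rewriting $f \bar g f\cnv f = f f\cnv f\, \bar g$ and collapsing $f f\cnv f = f$ by \ref{INV:2} yields the claim. Finally $\bar f = f f\cnv$ by definition and $\overline{f\cnv} = f\cnv f$ by \ref{INV:1}, so $f\cnv$ witnesses $f$ as a partial isomorphism; since $f$ was arbitrary, $\mathbb{X}$ is an inverse category, completing the equivalence.
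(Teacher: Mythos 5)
Your proof is correct, but there is nothing in the paper to compare it against: the paper states this theorem as a citation to \cite[Thm.~2.20]{Cockett} and gives no proof of its own. Your argument is the standard one, and both directions check out -- in particular your handling of \ref{R.4}, which you correctly identified as the only nontrivial restriction axiom, is exactly right: $\bar{fg}\,f = f\bar{g}\,(f\cnv f)$, then commute the restriction idempotents $f\cnv f = \bar{f\cnv}$ and $\bar{g}$ on $\cod f = \dom g$ via \ref{INV:3} and collapse with \ref{INV:2}. The one soft spot is in the forward direction, where you claim that existence of a partial inverse for $fg$ plus uniqueness forces that inverse to be $g\cnv f\cnv$; uniqueness alone does not single out $g\cnv f\cnv$ unless you actually verify that it satisfies the defining equations $\bar{fg} = (fg)(g\cnv f\cnv)$ and $\bar{g\cnv f\cnv} = (g\cnv f\cnv)(fg)$ (a short computation using \ref{R.3}, \ref{R.4} and the fact that $\bar{fg}\,\bar{f} = \bar{fg}$). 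Since that verification is precisely how the compositionality fact the paper asserts is proved, this is a presentational shortcut rather than a mathematical gap, and it is easily repaired.
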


Inverse categories have restriction structure given by $\bar c := cc\cnv$.
It is not hard to show that every idempotent in an inverse category is necessarily a restriction idempotent.

\subsection{Discrete restriction categories and inverse products}

If a category $\X$ has products then $\Par(\X)$ has restriction products: these are ``lax'' products for which the pairing operation satisfies  
$\la f,g\ra \pi_0 = \bar{g} f$ (and $\la f,g\ra \pi_1 = \bar{f} g$).  If the category $\X$ has a final object then $\Par(\X)$ has a restriction final object, that is 
an object $!$ for which, for each object $A$, there is a unique total map $!: A \to 1$ so that for any map $k: A \to 1$, $k= \bar{k} !$.  A restriction category with 
restriction products and a restriction terminal object is called a \textbf{Cartesian  restriction category}.  

A Cartesian restriction category in which the diagonal map, $\Delta_A: A \to A \times A$, is a partial isomorphism is called 
a \textbf{discrete Cartesian restriction category}.   The partial map category, $\Par(\X)$, of a category with products and pullbacks is always a 
discrete Cartesian restriction category (as the diagonal map is a partial isomorphism).  Discrete Cartesian restriction categories are
equivalently characterized as those Cartesian restriction categories which  have meets: 

\begin{definition}\cite[Def. 4.3.1]{Giles} Given an inverse category $\mathbb{X}$ equipped with a tensor product ${\_\otimes\_:\mathbb{X}\times\mathbb{X}\to \mathbb{X}}$ which preserves $(\_)\cnv$, we say $\mathbb{X}$ has \bf{inverse products} if there exists a total natural diagonal transformation $\Delta$ which satisfies the properties:

\begin{enumerate}[label=\textit{(DInv.\arabic*)},leftmargin=*]
\item
\label{DNV:1}
$\Delta$ is cocommutative for each $A \in \mathbb{X}$: 
\[\begin{array}[c]{c}
\xymatrix{
A \ar[dr]_{\Delta_A} \ar[r]^{\Delta_A} &A\otimes A\ar[d]^{c_{A,A}}\\
&D
} \end{array}
 ~~~~
 \begin{array}[c]{c}
\begin{tikzpicture}
	\begin{pgfonlayer}{nodelayer}
		\node [style=uptriangle] (0) at (-2.25, 2) {};
		\node [style=uptriangle] (1) at (0.25, 2) {};
		\node [style=none] (2) at (-2.25, 3.25) {};
		\node [style=none] (3) at (-2.75, 0.5) {};
		\node [style=none] (4) at (-1.75, 0.5) {};
		\node [style=none] (5) at (0.75, 0.5) {};
		\node [style=none] (6) at (-0.25, 0.5) {};
		\node [style=none] (7) at (0.25, 3.25) {};
		\node [style=none] (8) at (-2.5, 1.75) {};
		\node [style=none] (9) at (-2, 1.75) {};
		\node [style=none] (10) at (0, 1.75) {};
		\node [style=none] (11) at (0.5, 1.75) {};
		\node [style=none] (12) at (-1, 1.75) {$=$};
		\node [style=none] (13) at (-2.25, 3.5) {$A$};
		\node [style=none] (14) at (0.25, 3.5) {$A$};
		\node [style=none] (15) at (-2.75, 0.25) {$A$};
		\node [style=none] (16) at (-1.75, 0.25) {$A$};
		\node [style=none] (17) at (-0.25, 0.25) {$A$};
		\node [style=none] (18) at (0.75, 0.25) {$A$};
	\end{pgfonlayer}
	\begin{pgfonlayer}{edgelayer}
		\draw [style=none, in=90, out=-135, looseness=0.75] (8.center) to (3.center);
		\draw [style=none, in=90, out=-49, looseness=0.75] (9.center) to (4.center);
		\draw [style=none] (0) to (2.center);
		\draw [style=none] (1) to (7.center);
		\draw [style=none, in=90, out=-135, looseness=0.75] (10.center) to (5.center);
		\draw [style=none, in=90, out=-60, looseness=1.00] (11.center) to (6.center);
	\end{pgfonlayer}
\end{tikzpicture}
\end{array}
\]
\item
\label{DNV:2}
$\Delta$ is coassociative for each $A \in \mathbb{X}$:
\[ \begin{array}[c]{c}
\xymatrix{
A \ar[rr]^{\Delta_A} \ar[d]_{\Delta_A} && A\otimes A \ar[d]^{1_A\otimes \Delta_A}\\
A\otimes A \ar[dr]_{\Delta_A\otimes 1_A} & & A\otimes(A\otimes A)\\
& (A\otimes A)\otimes A \ar[ur]_{a_{A,A,A}} &
}
\end{array}
\begin{array}[c]{c}
\begin{tikzpicture}
	\begin{pgfonlayer}{nodelayer}
		\node [style=uptriangle] (0) at (1, 2) {};
		\node [style=none] (1) at (3.5, 3.25) {};
		\node [style=none] (2) at (3.75, 1.75) {};
		\node [style=none] (3) at (1.75, -0.75) {};
		\node [style=none] (4) at (1.25, 1.75) {};
		\node [style=none] (5) at (3.25, 1.75) {};
		\node [style=uptriangle] (6) at (3.5, 2) {};
		\node [style=none] (7) at (1, 3.25) {};
		\node [style=none] (8) at (0.75, 1.75) {};
		\node [style=none] (9) at (2.75, -0.75) {};
		\node [style=uptriangle] (10) at (4, 0.75) {};
		\node [style=none] (11) at (3.75, 0.5) {};
		\node [style=none] (12) at (4.25, 0.5) {};
		\node [style=none] (13) at (3.5, -0.75) {};
		\node [style=none] (14) at (4.5, -0.75) {};
		\node [style=none] (15) at (0.75, 0.5) {};
		\node [style=none] (16) at (1, -0.75) {};
		\node [style=none] (17) at (0, -0.75) {};
		\node [style=uptriangle] (18) at (0.5, 0.75) {};
		\node [style=none] (19) at (0.25, 0.5) {};
		\node [style=none] (20) at (2.25, 1) {$=$};
		\node [style=none] (21) at (1, 3.5) {$A$};
		\node [style=none] (22) at (3.5, 3.5) {$A$};
		\node [style=none] (23) at (0, -1) {$A$};
		\node [style=none] (24) at (1, -1) {$A$};
		\node [style=none] (25) at (1.75, -1) {$A$};
		\node [style=none] (26) at (2.75, -1) {$A$};
		\node [style=none] (27) at (3.5, -1) {$A$};
		\node [style=none] (28) at (4.5, -1) {$A$};
	\end{pgfonlayer}
	\begin{pgfonlayer}{edgelayer}
		\draw [style=none, in=90, out=-49, looseness=0.75] (4.center) to (3.center);
		\draw [style=none] (0) to (7.center);
		\draw [style=none] (6) to (1.center);
		\draw [style=none, in=90, out=-120, looseness=0.75] (5.center) to (9.center);
		\draw [style=none, in=86, out=-45, looseness=0.75] (2.center) to (10);
		\draw [style=none, in=90, out=-135, looseness=0.75] (11.center) to (13.center);
		\draw [style=none, in=90, out=-45, looseness=0.75] (12.center) to (14.center);
		\draw [style=none, in=90, out=-135, looseness=0.75] (19.center) to (17.center);
		\draw [style=none, in=90, out=-45, looseness=0.75] (15.center) to (16.center);
		\draw [style=none, in=90, out=-135, looseness=0.75] (8.center) to (18);
	\end{pgfonlayer}
\end{tikzpicture}
\end{array}
\]

\item
\label{DNV:3}
$(\Delta,\Delta\cnv)$ is a semi-Frobenius object for each $A \in \mathbb{X}$:
\[\begin{array}[c]{c}
\xymatrix{
A\otimes A \ar[rr]^{(\Delta_A\otimes 1_A)a_{A,A,A}} \ar[dr]^{\Delta_A\cnv} \ar[dd]_{(1_A\otimes \Delta_A)a_{A,A,A}\cnv} && A\otimes(A\otimes A) \ar[dd]^{1_A\otimes \Delta_A\cnv}\\
& A  \ar[dr]^{\Delta_A} &\\
(A\otimes A)\otimes A \ar[rr]^{\Delta_A\cnv\otimes1_A} && A\otimes A
}
\end{array}
\begin{array}[c]{c}
\begin{tikzpicture}
	\begin{pgfonlayer}{nodelayer}
		\node [style=uptriangle] (0) at (-4, 1.25) {};
		\node [style=none] (1) at (-4, 2.5) {};
		\node [style=none] (2) at (-4.5, -1.5) {};
		\node [style=downtriangle] (3) at (-2.75, -0.25) {};
		\node [style=none] (4) at (-2.75, -1.5) {};
		\node [style=none] (5) at (-2.25, 2.5) {};
		\node [style=none] (6) at (-4.25, 1) {};
		\node [style=none] (7) at (-3.75, 1) {};
		\node [style=none] (8) at (-3, 0) {};
		\node [style=none] (9) at (-2.5, 0) {};
		\node [style=none] (10) at (0, 0) {};
		\node [style=uptriangle] (11) at (1, 1.25) {};
		\node [style=downtriangle] (12) at (-0.25, -0.25) {};
		\node [style=none] (13) at (1.5, -1.5) {};
		\node [style=none] (14) at (-0.25, -1.5) {};
		\node [style=none] (15) at (1.25, 1) {};
		\node [style=none] (16) at (-0.5, 0) {};
		\node [style=none] (17) at (1, 2.5) {};
		\node [style=none] (18) at (0.75, 1) {};
		\node [style=none] (19) at (-0.75, 2.5) {};
		\node [style=downtriangle] (20) at (3.5, 1.25) {};
		\node [style=uptriangle] (21) at (3.5, -0.25) {};
		\node [style=none] (22) at (3.25, 1.5) {};
		\node [style=none] (23) at (3, 2.5) {};
		\node [style=none] (24) at (4, 2.5) {};
		\node [style=none] (25) at (3.75, 1.5) {};
		\node [style=none] (26) at (3.25, -0.5) {};
		\node [style=none] (27) at (3.75, -0.5) {};
		\node [style=none] (28) at (3, -1.5) {};
		\node [style=none] (29) at (4, -1.5) {};
		\node [style=none] (30) at (-1.5, 0.5) {$=$};
		\node [style=none] (31) at (2.25, 0.5) {$=$};
		\node [style=none] (32) at (-4, 2.75) {$A$};
		\node [style=none] (33) at (-2.25, 2.75) {$A$};
		\node [style=none] (34) at (-0.75, 2.75) {$A$};
		\node [style=none] (35) at (1, 2.75) {$A$};
		\node [style=none] (36) at (3, 2.75) {$A$};
		\node [style=none] (37) at (4, 2.75) {$A$};
		\node [style=none] (38) at (-4.5, -1.75) {$A$};
		\node [style=none] (39) at (-2.75, -1.75) {$A$};
		\node [style=none] (40) at (-0.25, -1.75) {$A$};
		\node [style=none] (41) at (1.5, -1.75) {$A$};
		\node [style=none] (42) at (3, -1.75) {$A$};
		\node [style=none] (43) at (4, -1.75) {$A$};
	\end{pgfonlayer}
	\begin{pgfonlayer}{edgelayer}
		\draw [style=none] (1.center) to (0);
		\draw [style=none] (3) to (4.center);
		\draw [style=none, in=90, out=-135, looseness=0.50] (6.center) to (2.center);
		\draw [style=none, in=90, out=-90, looseness=1.00] (7.center) to (8.center);
		\draw [style=none, in=-90, out=60, looseness=1.00] (9.center) to (5.center);
		\draw [style=none] (17.center) to (11);
		\draw [style=none] (12) to (14.center);
		\draw [style=none, in=90, out=-45, looseness=0.50] (15.center) to (13.center);
		\draw [style=none, in=90, out=-90, looseness=1.00] (18.center) to (10.center);
		\draw [style=none, in=-90, out=120, looseness=1.00] (16.center) to (19.center);
		\draw [style=none, in=119, out=-91, looseness=1.00] (23.center) to (22.center);
		\draw [style=none, bend right=15, looseness=1.00] (25.center) to (24.center);
		\draw [style=none, bend right=15, looseness=1.00] (26.center) to (28.center);
		\draw [style=none, bend left=15, looseness=1.00] (27.center) to (29.center);
		\draw [style=none] (20) to (21);
	\end{pgfonlayer}
\end{tikzpicture}
\end{array}
\]
\item
\label{DNV:4}
$\Delta$ satisfies the uniform copying identity for each $A,B \in \mathbb{X}$:
\[
\begin{array}[c]{c}
\xymatrix{
A\otimes B \ar[rr]^{\Delta_A\otimes \Delta_B} \ar[dr]_{\Delta_{A\otimes B}}&& (A\otimes A)\otimes (B\otimes B) \ar[dl]^{\mathsf{ex}_{A,B}}\\
&(A\otimes B)\otimes (A\otimes B)
}
\end{array}
\begin{array}[c]{c}
\begin{tikzpicture}
	\begin{pgfonlayer}{nodelayer}
		\node [style=none] (0) at (-3.75, 3.25) {};
		\node [style=uptriangle] (1) at (-3.75, 1.5) {};
		\node [style=none] (2) at (-4.25, -0.5) {};
		\node [style=none] (3) at (-3.25, -0.5) {};
		\node [style=none] (4) at (-4, 1.25) {};
		\node [style=none] (5) at (-3.5, 1.25) {};
		\node [style=none] (6) at (0, 3.25) {};
		\node [style=none] (7) at (0, 2.5) {$\otimes$};
		\node [style=uptriangle] (8) at (-0.75, 1.5) {};
		\node [style=uptriangle] (9) at (1, 1.5) {};
		\node [style=none] (10) at (-1, 1.25) {};
		\node [style=none] (11) at (-0.5, 1.25) {};
		\node [style=none] (12) at (0.75, 1.25) {};
		\node [style=none] (13) at (1.25, 1.25) {};
		\node [style=none] (14) at (-1, 0.25) {$\otimes$};
		\node [style=none] (15) at (1.25, 0.25) {$\otimes$};
		\node [style=none] (16) at (-1, -0.5) {};
		\node [style=none] (17) at (1.25, -0.5) {};

		\node [style=none] (18) at (-2.5, 1.25) {$=$};
		\node [style=none] (19) at (-3.75, 3.5) {$A\otimes B$};
		\node [style=none] (20) at (0, 3.5) {$A\otimes B$};
		\node [style=none] (21) at (-4.25, -0.75) {$A\otimes B$};
		\node [style=none] (22) at (-3.25, -0.75) {$A\otimes B$};
		\node [style=none] (23) at (-1, -0.75) {$A\otimes B$};
		\node [style=none] (24) at (1.25, -0.75) {$A\otimes B$};
	\end{pgfonlayer}
	\begin{pgfonlayer}{edgelayer}
		\draw [style=none] (0.center) to (1);
		\draw [style=none, bend right=15, looseness=1.00] (4.center) to (2.center);
		\draw [style=none, bend left=15, looseness=1.00] (5.center) to (3.center);
		\draw [style=none, bend right=15, looseness=1.00] (10.center) to (14);
		\draw [style=none, bend left=15, looseness=1.00] (12.center) to (14);
		\draw [style=none] (14) to (16.center);
		\draw [style=none, bend right=15, looseness=1.00] (11.center) to (15);
		\draw [style=none, bend left, looseness=1.00] (13.center) to (15);
		\draw [style=none] (15) to (17.center);
		\draw [style=none, bend right=45, looseness=1.25] (7) to (8);
		\draw [style=none, bend left=45, looseness=1.25] (7) to (9);
		\draw [style=none] (7) to (6.center);
	\end{pgfonlayer}
\end{tikzpicture}
\end{array}
\]

\end{enumerate}
\end{definition}

Where the natural isomorphism,
\begin{align*}
{\sf ex} :&= a (1\otimes a\cnv)(1\otimes (c \otimes 1))((1\otimes a)a\cnv)
:(A\otimes B)\otimes(C\otimes D)\to(A\otimes C)\otimes(B \otimes D)
\end{align*}
is called the exchange map.

A \bf{discrete inverse category} is an inverse category with inverse products.  Note that $\Delta$ is total if and only if $\Delta$ is separable (special), that is, $\Delta \Delta\cnv = 1$.  A discrete inverse category  
always has meets \cite{Giles}: $f \cap g := \Delta (f \otimes g) \Delta\cnv$.  Furthermore, the partial isomorphisms of a discrete Cartesian restriction category (such as $\Par(\X)$ for a $\X$ with finite limits) is always a 
discrete inverse category.   Conversely -- and more surprisingly -- every discrete inverse category has a ``completion" to a discrete Cartesian restriction category (see \cite{Giles} for more details).

\section{Torsors}

We will prove that $\CNOT$ is equivalent to the category of partial isomorphism between finitely generated non-empty commutative torsors of characteristic 2, $\ParIso(\CTor_2)^{*}$.   Torsors are essentially groups without a fixed multiplicative identity: the category $\ParIso(\CTor_2)^{*}$ may, thus, also be viewed as the partial isomorphism category of finite-dimensional $\mathbb{Z}_2$ vector spaces with affine maps.

\begin{definition}
\label{defn:Torsor}
A \textbf{torsor} is a set $X$ along with a ternary operator $(\_)\times_{(\_)}(\_): X \times X \times X  \to X$ called para-multiplication, such that for any $a,b,c,d,e \in X$, the following laws hold \cite{torsororig}:
\begin{description}
\item[Para-associative law: ]
$$(a\times_b c)\times_d e = a\times_{d\times_c b} e = a\times_b (c\times_d e)$$
\item[Para-identity law: ]
$$a\times_b b = b\times_b a = a$$
\end{description}
\end{definition}
A torsor is said to be \textbf{commutative}, when $a\times_b c = c \times_b a$.
A torsor is said to have \textbf{characteristic 2}, when $a\times_b a = b$.

The category of torsors $\Tor$ has objects torsors and maps homomorphisms of torsors.  A homomorphisms of torsors, $f:(X,\times)\to (Y,\times)$, is a functions $X\to Y$ which preserve para-multiplication.  As this is a category 
of algebras we know that it is a finitely complete category.  This allows us to form $\Par(\Tor)$ and $\ParIso(\Tor)$ immediately.

Note that the empty set is a torsor, however, if $(X,\times)$ is a non-empty torsor, then $X$ has, for each element of $X$, a group structure. Thus, given any $z \in X$, $X$ is a group under the multiplication 
$$\_\cdot\_:X^2\to X; (x,y)\mapsto x\times_z y.$$ 
Conversely, if $(X,\cdot)$ is a group, then $X$ has a non-empty torsor structure $\_\times_{\_}\_:X^3\to X$ such that $(x,z,y) \mapsto x\cdot z^{-1}\cdot y$ \cite[Sec. 0.2]{torsor}.  Note that this correspondence does not imply that the category of torsors and groups are equivalent since their homomorphisms are different.

Some authors, including their originator \cite[Def. 18]{torsororig}, require the underlying set of a torsor to be nonempty so that torsors always arise as groups.  However, following \cite[Sec. 0.2]{torsor}, we will not impose this condition as we need a category closed to pullbacks, and the empty torsor arises as a pullback.  A torsors is also also known as a ``... heap, groud, flock, herd, principal homogeneous space, abstract coset, pregroup ...'' \cite[Sec. 0.2]{torsor} with the non-emptiness condition appearing in some cases.

\begin{definition}
\label{defn:tor2}
Define $\CTor_2$ to be the full subcategory of torsors whose objects are finitely generated commutative torsors of characteristic 2 (including the empty torsor).
\end{definition}

There is an equivalent characterization of  the objects of $\CTor_2$.

\begin{proposition}
\label{rem:tor2}
Every object in $\CTor_2$ is either empty or isomorphic to a finite dimensional $\mathbb{Z}_2$ vector spaces; furthermore, torsor homomorphisms are precisely the affine maps.
\end{proposition}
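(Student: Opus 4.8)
The plan is to leverage the non-empty-torsor/group correspondence recalled immediately above the statement. The empty torsor is simply the ``empty'' alternative of the dichotomy, so I would fix a non-empty object $(X,\times)$ of $\CTor_2$, choose any basepoint $z \in X$, and use the cited correspondence to regard $X$ as a group under $x\cdot y := x\times_z y$ with unit $z$. It then remains to read off what the two extra axioms of $\CTor_2$ say about this group, what finite generation contributes, and how homomorphisms behave.

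First I would show $(X,\cdot)$ is an elementary abelian $2$-group. Setting $b=z$ in the commutativity law $a\times_b c = c\times_b a$ yields $a\cdot c = c\cdot a$, so the group is abelian; setting $a=x,\ b=z$ in the characteristic-$2$ law $a\times_b a = b$ yields $x\cdot x = x\times_z x = z$, so every element is its own inverse and the group has exponent $2$. An abelian group of exponent $2$ is precisely a module over the field $\mathbb{Z}_2$, i.e. a $\mathbb{Z}_2$ vector space, with the scalar action forced by $0\cdot x = z$ and $1\cdot x = x$.

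Next I would convert finite generation of the torsor into finite dimension of the vector space. Since $X$ is non-empty, its finite generating set $S$ may be taken to contain $z$. In $(X,\cdot)$ one has $a\times_b c = a\cdot b^{-1}\cdot c$, and because inverses are trivial this is just $a\times_b c = a\cdot b\cdot c$; hence a subset of $X$ containing $z$ is a sub-torsor if and only if it is a subgroup. Therefore the sub-torsor generated by $S$ equals the subgroup generated by $S$, namely all of $X$, and a finitely generated abelian group of exponent $2$ is finite. Thus $X$ is a finite, hence finite-dimensional, $\mathbb{Z}_2$ vector space.

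Finally, the homomorphism claim. Writing the groups additively so that $a\times_b c = a - b + c$, a function $f\colon X\to Y$ is a torsor homomorphism exactly when $f(a-b+c) = f(a) - f(b) + f(c)$ for all $a,b,c$. Putting $b$ equal to the basepoint $z$ of $X$ and defining $g(x) := f(x) - f(z)$, this identity collapses to $g(a+c) = g(a) + g(c)$, so $g$ is additive and therefore (over $\mathbb{Z}_2$) linear, whence $f(x) = g(x) + f(z)$ is affine; conversely any affine map $x\mapsto L(x)+v$ with $L$ linear satisfies the displayed identity and so is a torsor homomorphism. I expect the only delicate point to be the finite-generation step, where one must verify that sub-torsors through the chosen basepoint coincide with subgroups, the remaining verifications being immediate once the group structure is in hand.
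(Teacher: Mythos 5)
Your proposal is correct and follows essentially the same route as the paper's proof: fix a basepoint $z$ to pass to the associated abelian group, use commutativity and characteristic $2$ of the torsor to see this group is an elementary abelian $2$-group (hence a $\mathbb{Z}_2$ vector space, finite-dimensional by finite generation), and characterize torsor homomorphisms as affine maps by evaluating the defining identity with the middle argument equal to the basepoint. The only differences are minor: you justify the finite-generation transfer more explicitly (sub-torsors through $z$ coincide with subgroups), which the paper glosses over, while the paper additionally records the edge case that the empty torsor and the empty affine space are strict initial objects, so the homomorphism claim also holds for maps involving the empty object.
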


\begin{proof}
Suppose that $X$ is a finitely generated commutative torsor under the para-multiplication $\_\times_{\_}\_:X^3\to X$. If $X$ is nonempty, fix some element $z \in X$.  As $X$ has characteristic $2$ as a commutative torsor, it has characteristic $2$ as a Abelian group under the addition $\_+\_:=\_\times_z\_:X^2\to X$.  Furthermore, the dimension of such a torsor is one more than the dimension of this corresponding group (as the base point must be added).  Thus, finitely generated commutative torsors of characteristic 2 are finite.  Therefore, by the fundamental theorem of finitely generated Abelian groups:
$$X\cong \mathbb{Z}^0 \oplus \left(\bigoplus_{i=1}^n \mathbb{Z}_2\right)\cong \mathbb{Z}_2^n$$
with para-multiplication given by:
$$(a,b,c) \mapsto a \oplus (-b) \oplus c = a\oplus b\oplus c$$

Given a morphism of non-empty torsors, $f:(\mathbb{Z}_2^n, \_\oplus\_\oplus\_)\to (\mathbb{Z}_2^m, \_\oplus\_\oplus\_)$, for any $x$ and $y$ in $\mathbb{Z}_2^n$,
$$f(x\oplus y) = f(x\oplus 0 \oplus y)= f(x)\oplus f(y) \oplus f(0)$$
Therefore, $f$ is an affine transformation when $\mathbb{Z}_2^n$ and $\mathbb{Z}_2^m$ are seen as vector spaces over $\mathbb{Z}_2$.

Conversely, consider an affine transformation of vector spaces $f:\mathbb{Z}_2^n\to \mathbb{Z}_2^m$.  Then $f$ can be regarded as morphism of torsors as for any $x,y,z \in \mathbb{Z}_2^n$,
\begin{align*}
f(x\oplus y\oplus z) &= f(x\oplus (y\oplus z)) = f(x) \oplus f(y\oplus z)\oplus f(0)\\
&= f(x) \oplus f(y) \oplus f(z) \oplus f(0) \oplus f(0)\\
&= f(x) \oplus f(y) \oplus f(z)
\end{align*}

The empty torsor and the empty affine space are strict initial objects; therefore, they have the same maps in torsors and Abelian groups viewed as affine spaces.
\end{proof}

As  $\CTor_2$ is a category of algebras and so is finitely complete, we may construct $\Par(\CTor_2)$ and $\ParIso(\CTor_2)$. Let $\ParIso(\CTor_2)^*$ denote $\ParIso(\CTor_2)$ without the empty torsor. 

\begin{proposition}
\label{prop:torDiscreteInverse}
$\ParIso(\CTor_2)^*$ is a discrete inverse category.
\end{proposition}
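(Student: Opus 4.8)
The plan is to prove that $\ParIso(\CTor_2)^*$ is a discrete inverse category by verifying the two requirements separately: first that it is an inverse category, and then that it carries inverse products in the sense of the definition above. The inverse-category part comes essentially for free. Since $\CTor_2$ is a category of algebras it is finitely complete, so $\Par(\CTor_2)$ is a discrete Cartesian restriction category, and $\ParIso(\CTor_2)$ is then an inverse category by the general fact (stated earlier) that the partial isomorphisms of any restriction category form an inverse category. Deleting the empty torsor does not disturb this: the empty torsor is a strict initial object (as noted in the proof of Proposition~\ref{rem:tor2}), so it is a zero-like object whose only incoming/outgoing partial isomorphisms are degenerate, and removing it leaves a full subcategory that is still closed under the involution $(\_)\cnv$ and under restriction. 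Hence $\ParIso(\CTor_2)^*$ is an inverse category via the involution $f \mapsto f\cnv$ sending a partial isomorphism to its partial inverse.

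The substantive work is supplying the inverse product structure, i.e.\ a tensor $\_\otimes\_$ preserving $(\_)\cnv$ together with a total natural diagonal $\Delta$ satisfying \ref{DNV:1}--\ref{DNV:4}. First I would fix the tensor: for nonempty finite-dimensional $\mathbb{Z}_2$ vector spaces (equivalently the objects of $\CTor_2^*$ by Proposition~\ref{rem:tor2}) the natural candidate is the direct sum $V \otimes W := V \oplus W$, which stays nonempty and finite-dimensional, with the obvious action on partial isomorphisms by taking the pointwise product of partial maps. Because $(m,f)\cnv = (f,m)$ in $\Par$, the tensor visibly preserves the involution. For the diagonal I would take the total torsor homomorphism $\Delta_V : V \to V \oplus V$, $v \mapsto (v,v)$, which is a genuine (everywhere-defined) affine/linear map and hence a total map in the restriction category, with partial inverse $\Delta_V\cnv$ given by the partial map defined exactly on the diagonal $\{(v,v)\}$ sending $(v,v)\mapsto v$.

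With these choices the four axioms reduce to concrete identities about the copy map on $\mathbb{Z}_2$ vector spaces. Cocommutativity \ref{DNV:1} and coassociativity \ref{DNV:2} are immediate from $v\mapsto(v,v)$ being symmetric and associative under the coherence isomorphisms $c$ and $a$. The Frobenius condition \ref{DNV:3} is the place where I expect the real calculation: I would compute both legs as partial maps on $V\oplus V$ and check they agree as the partial map defined on $\{(u,w)\}$ that merges and re-splits through the diagonal --- concretely, that $\Delta_V\cnv$ followed by $\Delta_V$ equals the common composite, using that $\Delta_V\Delta_V\cnv = 1$ (so $\Delta$ is separable, hence total, which is also what the remark after the definition demands). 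The uniform copying law \ref{DNV:4} is the naturality of copying across the tensor and follows because $(v,w)\mapsto(v,v,w,w)$ and $(v,w)\mapsto((v,w),(v,w))$ agree up to the exchange map $\sf ex$.

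The main obstacle I anticipate is the bookkeeping in \ref{DNV:3}: one must track domains of definition of the partial maps carefully (since $\Delta\cnv$ is only partially defined) and confirm that both composites have the \emph{same} domain of definition as well as the same action, rather than merely agreeing where both are defined. A clean way to discharge this is to note that $\Par(\CTor_2)$ is already a discrete Cartesian restriction category, so its partial isomorphisms $\ParIso(\CTor_2)$ form a discrete inverse category automatically by the general theory cited above (the partial isomorphisms of a discrete Cartesian restriction category always do); the only thing left to check is that passing to the full subcategory $\ParIso(\CTor_2)^*$ on nonempty objects preserves this structure, which it does because the tensor $\oplus$ and the diagonal restrict to nonempty objects and the empty torsor never appears as $V\oplus V$ when $V$ is nonempty.
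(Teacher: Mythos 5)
Your proposal is correct, and its concluding ``clean way'' is exactly the paper's proof: $\Par(\CTor_2)^*$ is a discrete Cartesian restriction category, and the partial isomorphisms of such a category automatically form a discrete inverse category (the paper cites \cite[Theorem 4.3.7]{Giles}), with the restriction to nonempty objects causing no harm since the tensor and diagonal never produce the empty torsor from nonempty ones. The preceding direct verification of \ref{DNV:1}--\ref{DNV:4} is therefore superfluous --- you correctly identified that the domain-of-definition bookkeeping in \ref{DNV:3} is the delicate point, and the general theorem is precisely what discharges it.
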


\begin{proof}
$\ParIso(\CTor_2)^*$ is an inverse category by construction.  Because $\Par(\CTor_2)^*$ is a discrete Cartesian restriction category and $\ParIso(\CTor_2)^*$ is the category of partial isomorphisms, it therefore has inverse products (see \cite[Theorem 4.3.7]{Giles}).  Thus, it is a discrete inverse category. 
\end{proof}

One further property of $\CTor_2$ is worth mentioning: all its monic maps are {\em regular\/} monics.  This means, more concretely, every subobject of a torsor is determined by some set of equations.  This then means that in $\Par(\CTor_2)^{*}$ the restriction idempotents correspond to equations.

\section{Overview of Proof}

The main theorem of this paper is:

\vspace{0.5cm}
\noindent
\textbf{Theorem}  \ref{thm:CNOTEquiv} \\
{\em There is an equivalence of categories between $\CNOT$ and $\ParIso(\CTor)^{*}$.}

\vspace{0.5cm}

The equivalence is shown in the following steps:

\begin{enumerate}

\item Proof that $\CNOT$ is a discrete inverse category.

The first major challenge is to prove that $\CNOT$ is a discrete inverse category.   We approach this by setting up the ``discrete'' part of the structure first. For this, we construct the ``copy'' natural transformation $\Delta$ which is defined inductively. The base case of $1$ wire is defined by applying a $\cnot$ gate to a 0 ancillary bit.  Then we prove that $\Delta$ has the properties required by an inverse product.  This involves showing that the family of maps $\{\Delta_n:n\to 2n\}_{n \in \mathbb{N}}$ is a natural transformation i.e., for any circuit $f:n\to m$ in $\CNOT$, $f\Delta_m = \Delta_{2n} (f\otimes f)$. Naturality of $\Delta$ is proven by a structural induction on $f$.  Next, we prove that $\Delta$ forms a total semi-Frobenius algebra.  

$\CNOT$ has an important symmetry expressed by a functor $(\_)^\circ: \CNOT^{\rm op} \to \CNOT$ which ``horizontally flips'' maps (circuits).   We use this functor to prove that $\CNOT$ is an inverse category.  

\item Construction of a functor $\tilde H_0: \CNOT \rightarrow \ParIso(\CTor_2)^*$.

Our final objective is to prove that the category of partial isomorphisms between non-empty, finitely generated commutative torsors of characteristic $2$, $\ParIso(\CTor_2)^*$ is equivalent to $\CNOT$. In order to establish this, we construct a functor $\tilde H_0: \CNOT \rightarrow \ParIso(\CTor_2)^*$ indirectly by constructing a functor $h_0: \CNOT \rightarrow \Par(\Sets)$. We show that $h_0$ can be factored in two ways:
\begin{itemize}
\item Through the inclusion $\iota: \ParIso(\Sets) \to \Par(\Sets)$ 
\item Through the underlying functor $\Par(U): \Par(\CTor_2)^* \rightarrow \Par(\Sets)$
\end{itemize}
These factorizations imply $H_0$ factors through the pullback of $\iota$ and $\Par(U)$ which is $\ParIso(\CTor_2)^*$. Thus, we obtain a map $\tilde H_0: \CNOT \to \ParIso(\CTor_2)^*$.

There is another important way to describe this functor: as $\CNOT$ is the freely generated symmetric monoidal category on the gates $\cnot$, $\onein$, and $\oneout$, it suffices to interpret these into the category $\ParIso(\CTor_2)^*$ and check that all of the identities hold.  The result is equivalent to the functor we have produced and has the virtue of showing fairly immediately that $\tilde H_0$ preserves discrete inverse structure.

Once we have proven that the functor $\tilde H_0:\CNOT\to\ParIso(\CTor_2)^*$ is well-defined, we prove that it an equivalence of categories.  That is, that it is full, faithful, and essentially surjective.  The proofs are in Appendix \ref{Appendix C}.  The essential surjectivity is straightforward; however, the other two are not.

\item Proof that $\tilde H_0: \CNOT \rightarrow \ParIso(\CTor_2)^*$ is a full functor.

To obtain the fullness we need to show that any partial isomorphism between torsors can be simulated using circuits in $\CNOT$.  However, we first show that we can simulate the graph,  $\la 1,f \ra$,  of a total map $f$ between torsors (see Lemma \ref{simulating_total_maps}).  Next we observe that any partial map between torsors is dominated by a total map: so we can simulate the graph of this total map.   The difficulty is now to introduce the partiality.  To achieve this we use the fact that the functor is full on restriction idempotents (see Theorem \ref{Theorem: restriction faithful} - more on this soon): this allows us to simulate $\la \bar{f},f \ra$ for any partial map $f$.  However, by a general result (see Lemma \ref{lemma:fullCopy}) we observe that simulating graphs of partial isomorphisms is sufficient to ensure we can simulate any partial isomorphism.

\item Proof that $\tilde H_0: \CNOT \rightarrow \ParIso(\CTor_2)^*$ is a faithful functor.

To secure faithfulness we start by reducing the problem to showing faithfulness on restriction idempotents.  This involves two observations.  First we prove that a functor from an inverse category is faithful if and only if it is faithful on restriction idempotents and reflects them (see Lemma \ref{lemma:faithfulidempotents}).  Next, we prove that a functor between discrete inverse categories which preserves the inverse product structure and is faithful on restriction idempotents always reflects restriction idempotents and, thus, is faithful (see Lemma \ref{lemma:faithfullemma}).  Thus to establish faithfulness in our case it suffices to prove that $\tilde H_0$ is faithful on restriction idempotents.

Toward this end we introduce a normal form, called ``clausal form'', for restriction idempotents in $\CNOT$  (Definition \ref{clausalform}).  A circuit is in clausal form if and only if it is the composite of a finite number of clauses.  A clause is a wire starting and ending with ancillary bits which is controlled from the identity (on $n$ wires).  It has the effect of restricting the ``legal'' throughputs on the identity.  Clauses correspond to torsor equations and can be manipulated by Gaussian elimination: and this means they are in bijective correspondence to the restriction idempotents in $\ParIso(\CTor_2)^{*}$.  This is the content of Theorem \ref{Theorem: restriction faithful} which is the crux of the proof of Theorem  \ref{thm:CNOTEquiv}.

\end{enumerate}

\section{Concluding Remarks}

In this work, we provided a complete set of identities for $\CNOT$, the symmetric monoidal category generated by the $\cnot$ gate and the computational ancill\ae. We proved that $\CNOT$ is equivalent to the category of partial isomorphisms of non-empty finitely generated commutative torsors of characteristic $2$, $\ParIso(\CTor_2)^*$.

To the best of our knowledge, our work is the first to provide a complete set of identities for the $\cnot$ gate and computational ancill\ae. The proof we present shows that $\CNOT$ is equivalent to a certain category of torsors. This we admit leaves a gap as one might expect a faithful functor from $\CNOT$ to $\mathsf{CPM(FHilb)}$, the category of finite dimensional Hilbert spaces  with completely positive maps.  Furthermore, as this embedding factors through the 
subcategory of stabilizer circuits which is equivalent to the ZX-calculus \cite{ZXcomplete}, there should also be a {\em faithful\/} embedding of $\CNOT$ into the ZX-calculus.

\medskip
\noindent
\textbf{Acknowledgements:}

We are very grateful to the referees who not only provided many useful comments but also were able to see in our first -- and very rough -- version of this paper that a complete proof might be hiding within!

\bibliographystyle{eptcs}
\bibliography{CNOT}

\newpage
\appendix

\section{Preliminary results for \texorpdfstring{$\CNOT$}{CNOT}}
\label{Appendix A}
Because of Axiom \ref{CNTix}, certain circuits can interfere with the circuits with which they are tensored.  This peculiar behaviour warrants discussion as these circuits are important later.

\begin{definition}
\label{defn:Omega}

The \textbf{degenerate circuit} in $\CNOT$ is:
\begin{align*}
\Omega:=
\begin{tikzpicture}[baseline={([yshift=-.5ex]current bounding box.center)}]
\begin{pgfonlayer}{nodelayer}
\node [style=onein] (0) at (0, .5) {};
\node [style=dot] (1) at (.5, .5) {};
\node [style=oneout] (2) at (1, .5) {};
\node [style=onein] (10) at (0, 0) {};
\node [style=oplus] (11) at (.5, 0) {};
\node [style=oneout] (12) at (1, 0) {};
\end{pgfonlayer}
\begin{pgfonlayer}{edgelayer}
\draw [style=rn]  (0) to (2) {};
\draw [style=rn]  (10) to (12) {};
\draw [style=rn]  (1) to (11) {};
\end{pgfonlayer}
\end{tikzpicture}
:0\to 0
\end{align*}

\end{definition}

This circuit consumes itself in the following sense:

\begin{lemma}
\label{lem:omegaotimesomega}

$\Omega\otimes\Omega = \Omega$

\end{lemma}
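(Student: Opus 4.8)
The plan is to treat $\Omega$ as a scalar and to lean on the absorbing axiom \ref{CNTix}. Since the objects of $\CNOT$ are the natural numbers under $+$, with unit $0$, the degenerate circuit $\Omega\colon 0\to 0$ is an endomorphism of the monoidal unit; hence $\Omega\otimes\Omega=\Omega\circ\Omega$, and the claim is precisely that the scalar $\Omega$ is idempotent. I would first record the scalar identities the ancill\ae\ satisfy. From \ref{CNTvi} we have $\onein\oneout=1_0$, and its dual $\zeroin\zeroout=1_0$ follows in the same way. Unfolding the definitions of the $0$-ancill\ae\ immediately gives the two ``cross'' identities $\onein\zeroout=\zeroin\oneout=\Omega$, since each is just the degenerate circuit read off in a different way. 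I would also derive the $0$-companion of \ref{CNTix}, namely $\Omega\otimes 1=\Omega\otimes(\zeroout\zeroin)$, by conjugating \ref{CNTix} with the $\cnot$-definable negation gate $X$ (a $\cnot$ whose control is a $\onein$ that is later measured) and using $X\circ X=1$.

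With these in hand the natural core computation threads one copy of $\Omega$ along a wire: writing the second factor as $\zeroin\oneout$ gives $\Omega\otimes\Omega=\zeroin\,(\Omega\otimes 1)\,\oneout$, after which \ref{CNTix} (or its companion) rewrites the middle $\Omega\otimes 1$ and the flanking ancill\ae\ collapse via \ref{CNTvi} and its dual; the bookkeeping of which cap meets which ancilla is governed entirely by the four scalar identities above. This is the part I expect to be routine once the companion identity is available.

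The hard part is that this naive scalar chase does not close up on the nose. Rewriting $\Omega\otimes\Omega$ by \ref{CNTix} and cancelling caps by \ref{CNTvi} tends to return either a tautology or only the weaker relation $\Omega^{\otimes 3}=\Omega$, never $\Omega\otimes\Omega=\Omega$ directly: \ref{CNTix} merely relates an identity wire to a $1$-forced (broken) wire, and every wire of $\Omega$ is already capped by ancill\ae\ at both ends, so the break introduced by \ref{CNTix} is annihilated by an adjacent cap before it can do anything. To obtain a genuine reduction one must leave the level of scalars and let the two $\cnot$-gates interact on honestly multi-wire maps. Concretely, my plan is to splice a cancelling pair of cross-copy controlled-nots between the two copies (legitimate by \ref{CNTii}, as $\cnot$ is self-inverse), then use \ref{CNTiv} to reset the $1$-fed controls and \ref{CNTvii} to delete the $\cnot$ whose control has thereby been forced to $0$, fusing the two copies into one before re-closing with the ancill\ae. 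Verifying that this fusion succeeds is the crux of the argument. As a consistency check, under the interpretation into $\ParIso(\CTor_2)^{*}$ the circuit $\Omega$ denotes the empty (zero) partial map, which is visibly idempotent for $\otimes$, so the identity is forced semantically; the real content here is to produce it syntactically from \ref{CNTi}--\ref{CNTix}.
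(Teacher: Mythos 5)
Your setup is sound, and your diagnosis of why the naive scalar argument fails is exactly right: since $\Omega$ is an endomorphism of the monoidal unit, $\Omega\otimes\Omega=\Omega\circ\Omega$, and any manipulation that stays at the level of the scalar identities $\onein\oneout=1_0$, $\zeroin\oneout=\onein\zeroout=\Omega$ and \ref{CNTix} only returns tautologies. The paper's proof agrees with you this far: it writes $\Omega\otimes\Omega$ as the horizontal composite of the two capped $\cnot$'s and uses \ref{CNTiv} (twice) to delete the inner $\oneout\onein$ pairs, so that the two gates sit in series on two genuine wires, $(\onein\otimes\onein)\,\cnot\,\cnot'\,(\oneout\otimes\oneout)$, where $\cnot'$ denotes the upside-down gate.

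The gap is in what you do next. Your fusion plan---splice a cancelling cross-copy pair by \ref{CNTii}, ``reset'' $1$-fed controls by \ref{CNTiv}, delete a $0$-controlled gate by \ref{CNTvii}---is left unverified, and as stated it does not go through. First, \ref{CNTvii} is a three-wire axiom: the gate it deletes must have its operating bit on a wire disjoint from the $\onein$-capped pattern, so it cannot act on the two-wire configuration $\cnot\,\cnot'$ obtained after merging; there is no third wire. Second, if you instead stay in the four-wire picture and insert a cancelling pair $CC$ of cross-copy gates (say $C$ with control on wire $2$, operating bit on wire $3$), then \ref{CNTvii} indeed deletes one copy of $C$---but combined with \ref{CNTii} this only proves that a single cross-copy $\cnot$ may be freely added or removed, which returns you to your starting point; likewise the \ref{CNTiv}-cuts exactly undo the \ref{CNTiv}-merges, so this toolkit circles. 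The missing ingredient, and the crux of the paper's proof, is \ref{CNTi}: after \ref{CNTii} inserts a cancelling pair of $\cnot'$'s in front of the merged circuit, the last three gates alternate as $\cnot'\,\cnot\,\cnot'$, which \ref{CNTi} identifies with $\swap$, and $\swap\,(\oneout\otimes\oneout)=\oneout\otimes\oneout$ by naturality of the symmetry; the circuit collapses to $(\onein\otimes\onein)\,\cnot'\,(\oneout\otimes\oneout)=\Omega$. Without \ref{CNTi} (or some other mechanism relating $\cnot$'s to the symmetry), your fusion step has no engine: none of \ref{CNTii}, \ref{CNTiv}, \ref{CNTvii} can turn two capped $\cnot$'s into one.
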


\begin{proof}
We observe:
\begin{flalign*}
\begin{tikzpicture}[baseline={([yshift=-.5ex]current bounding box.center)}]
\begin{pgfonlayer}{nodelayer}
\node [style=onein] (0) at (0, 1.5) {};
\node [style=dot] (1) at (.5, 1.5) {};
\node [style=oneout] (2) at (1, 1.5) {};
\node [style=onein] (10) at (0, 1) {};
\node [style=oplus] (11) at (.5, 1) {};
\node [style=oneout] (12) at (1, 1) {};
\node [style=onein] (20) at (0, .5) {};
\node [style=dot] (21) at (.5, .5) {};
\node [style=oneout] (22) at (1, .5) {};
\node [style=onein] (30) at (0, 0) {};
\node [style=oplus] (31) at (.5, 0) {};
\node [style=oneout] (32) at (1, 0) {};
\end{pgfonlayer}
\begin{pgfonlayer}{edgelayer}
\draw [style=rn]  (0) to (2) {};
\draw [style=rn]  (10) to (12) {};
\draw [style=rn]  (1) to (11) {};
\draw [style=rn]  (20) to (22) {};
\draw [style=rn]  (30) to (32) {};
\draw [style=rn]  (21) to (31) {};
\end{pgfonlayer}
\end{tikzpicture}
&=
\begin{tikzpicture}[baseline={([yshift=-.5ex]current bounding box.center)}]
\begin{pgfonlayer}{nodelayer}
\node [style=onein] (0) at (1.5, .5) {};
\node [style=oplus] (1) at (2, .5) {};
\node [style=oneout] (2) at (2.5, .5) {};
\node [style=onein] (10) at (1.5, 0) {};
\node [style=dot] (11) at (2, 0) {};
\node [style=oneout] (12) at (2.5, 0) {};
\node [style=onein] (20) at (0, .5) {};
\node [style=dot] (21) at (.5, .5) {};
\node [style=oneout] (22) at (1, .5) {};
\node [style=onein] (30) at (0, 0) {};
\node [style=oplus] (31) at (.5, 0) {};
\node [style=oneout] (32) at (1, 0) {};
\end{pgfonlayer}
\begin{pgfonlayer}{edgelayer}
\draw [style=rn]  (0) to (2) {};
\draw [style=rn]  (10) to (12) {};
\draw [style=rn]  (1) to (11) {};
\draw [style=rn]  (20) to (22) {};
\draw [style=rn]  (30) to (32) {};
\draw [style=rn]  (21) to (31) {};
\end{pgfonlayer}
\end{tikzpicture}\\
&=
\begin{tikzpicture}[baseline={([yshift=-.5ex]current bounding box.center)}]
\begin{pgfonlayer}{nodelayer}
\node [style=oplus] (1) at (1, .5) {};
\node [style=oneout] (2) at (1.5, .5) {};
\node [style=dot] (11) at (1, 0) {};
\node [style=oneout] (12) at (1.5, 0) {};
\node [style=onein] (20) at (0, .5) {};
\node [style=dot] (21) at (.5, .5) {};
\node [style=onein] (30) at (0, 0) {};
\node [style=oplus] (31) at (.5, 0) {};
\end{pgfonlayer}
\begin{pgfonlayer}{edgelayer}
\draw [style=rn]  (20) to (2) {};
\draw [style=rn]  (30) to (12) {};
\draw [style=rn]  (1) to (11) {};
\draw [style=rn]  (21) to (31) {};
\end{pgfonlayer}
\end{tikzpicture} & \ref{CNTiv}\times 2\\
&=
\begin{tikzpicture}[baseline={([yshift=-.5ex]current bounding box.center)}]
\begin{pgfonlayer}{nodelayer}
\node [style=onein] (0) at (0, .5) {};
\node [style=oplus] (1) at (.5, .5) {};
\node [style=oplus] (2) at (1, .5) {};
\node [style=dot] (3) at (1.5, .5) {};
\node [style=oplus] (4) at (2, .5) {};
\node [style=oneout] (5) at (2.5, .5) {};
\node [style=onein] (10) at (0, 0) {};
\node [style=dot] (11) at (.5, 0) {};
\node [style=dot] (12) at (1, 0) {};
\node [style=oplus] (13) at (1.5, 0) {};
\node [style=dot] (14) at (2, 0) {};
\node [style=oneout] (15) at (2.5, 0) {};
\end{pgfonlayer}
\begin{pgfonlayer}{edgelayer}
\draw [style=rn]  (0) to (5) {};
\draw [style=rn]  (10) to (15) {};
\draw [style=rn]  (1) to (11) {};
\draw [style=rn]  (2) to (12) {};
\draw [style=rn]  (3) to (13) {};
\draw [style=rn]  (4) to (14) {};
\end{pgfonlayer}
\end{tikzpicture} & \ref{CNTii}\\
&=
\begin{tikzpicture}[baseline={([yshift=-.5ex]current bounding box.center)}]
\begin{pgfonlayer}{nodelayer}
\node [style=onein] (0) at (0, .5) {};
\node [style=oplus] (1) at (.5, .5) {};
\node [style=oneout] (5) at (1.5, .5) {};
\node [style=onein] (10) at (0, 0) {};
\node [style=dot] (11) at (.5, 0) {};
\node [style=oneout] (15) at (1.5, 0) {};
\end{pgfonlayer}
\begin{pgfonlayer}{edgelayer}
\draw [style=rn]  (0) to (1) to (11) {};
\draw [style=rn]  (1) to (15) {};
\draw [style=rn]  (10) to (11) to (5) {};
\end{pgfonlayer}
\end{tikzpicture} & \ref{CNTi}\\
&=
\begin{tikzpicture}[baseline={([yshift=-.5ex]current bounding box.center)}]
\begin{pgfonlayer}{nodelayer}
\node [style=onein] (0) at (0, .5) {};
\node [style=dot] (1) at (.5, .5) {};
\node [style=oneout] (2) at (1, .5) {};
\node [style=onein] (10) at (0, 0) {};
\node [style=oplus] (11) at (.5, 0) {};
\node [style=oneout] (12) at (1, 0) {};
\end{pgfonlayer}
\begin{pgfonlayer}{edgelayer}
\draw [style=rn]  (0) to (2) {};
\draw [style=rn]  (10) to (12) {};
\draw [style=rn]  (1) to (11) {};
\end{pgfonlayer}
\end{tikzpicture}
\end{flalign*}
\end{proof}

We may generalize $\Omega$ to an arbitrary domain and codomain:

\begin{definition}
\label{defn:Omeganm}

For any $n$ and $m$ in $\mathbb{N}$, define the degenerate circuit from $n$ to $m$, $\Omega_{n,m}:n\to m$ by the circuit $(\otimes^n\oneout)\Omega(\otimes^m\onein)$.

Note that $\otimes^n\_$ denotes the $n$-fold iterated tensor product.
\end{definition}

These circuits exhibit the following ``absorbing'' property similar to Axiom {\em (ZO)} in the ZX calculus \cite{ZX, coecke2011interacting}:

\begin{lemma}~
\label{lemma:tensorOmegaDegen}
\begin{enumerate}[label=(\roman*)]
\item If $f=f\otimes\Omega$, for some $f:n\to m$ then $f=\Omega_{n,m}$
\item If $g: m \rightarrow p$, then $\Omega_{n,m}g = \Omega_{n,p}$
\item If $h: p \rightarrow n$, then $h \Omega_{n,m} = \Omega_{p,n}$
\end{enumerate}
\end{lemma}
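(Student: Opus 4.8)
The plan is to isolate a single ``absorption'' lemma from which all three parts fall out, namely
\[
(\ast)\qquad \Omega\otimes k=\Omega_{a,b}\quad\text{for every map } k:a\to b .
\]
Granting $(\ast)$, the three parts are immediate. For (i), since $\Omega:0\to0$ is a scalar (an endomorphism of the monoidal unit) it is central, so $f\otimes\Omega=\Omega\otimes f$; combining the hypothesis with $(\ast)$ gives $f=f\otimes\Omega=\Omega\otimes f=\Omega_{n,m}$. For (ii), I would first rewrite $\Omega_{n,m}=\oneout^{\otimes n}\,\Omega\,\onein^{\otimes m}=\Omega\otimes(\oneout^{\otimes n}\onein^{\otimes m})$ using centrality, whence $\Omega_{n,m}g=\Omega\otimes\big((\oneout^{\otimes n}\onein^{\otimes m})g\big)=\Omega_{n,p}$ by a second application of $(\ast)$ to the map $(\oneout^{\otimes n}\onein^{\otimes m})g:n\to p$. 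Part (iii) is symmetric: $h\,\Omega_{n,m}=\Omega\otimes\big(h\,\oneout^{\otimes n}\onein^{\otimes m}\big)=\Omega_{p,m}$, the type of the composite $p\to m$.

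The heart of the argument is $(\ast)$, and here I would use Axiom \ref{CNTix}. Read as an equation of circuits, \ref{CNTix} states $\Omega\otimes 1_1=\Omega\otimes(\oneout\onein)$: a wire sitting beside $\Omega$ may be severed into a discard $\oneout$ followed by a fresh preparation $\onein$. Because $\Omega:0\to0$ occupies the unit object, every wire-segment of $k$ is tensor-adjacent to $\Omega$ after conjugating by an appropriate permutation of wires (built from $\swap$), so \ref{CNTix} may be applied to cut any chosen segment. The plan is to cut every segment of $k$. Once the whole diagram is disconnected, each generator is left with all of its input legs fed by $\onein$ and all of its output legs capped by $\oneout$: a $\cnot$ so decorated is exactly $\Omega$ by Definition \ref{defn:Omega}, while an identity segment or an ancilla becomes $\onein\oneout=1_0$ by \ref{CNTvi} (and a $\swap$ of two prepared wires likewise collapses to $1_0$). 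The external legs of $k$ become $\oneout^{\otimes a}$ on the left and $\onein^{\otimes b}$ on the right. Hence $\Omega\otimes k=\Omega\otimes\big(\oneout^{\otimes a}\,\Omega^{\otimes r}\,\onein^{\otimes b}\big)$, where $r$ is the number of $\cnot$ gates in $k$. Finally $\Omega\otimes\Omega^{\otimes r}=\Omega$ by Lemma \ref{lem:omegaotimesomega}, and centrality of scalars rearranges this to $\oneout^{\otimes a}\,\Omega\,\onein^{\otimes b}=\Omega_{a,b}$, proving $(\ast)$.

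I would make ``cut every segment'' precise as an induction on the number of generators of $k$, the inductive step severing one segment via \ref{CNTix} after the swaps that expose it to $\Omega$, and appealing to naturality of the symmetry to move those swaps back out. I expect this bookkeeping --- tracking the permutations needed to bring each segment adjacent to $\Omega$, and verifying the coherence identities that let the scalars $\Omega$ and $1_0$ be gathered and commuted past the unitors --- to be the main obstacle. The conceptual content is entirely contained in \ref{CNTix}, \ref{CNTvi}, Definition \ref{defn:Omega} and Lemma \ref{lem:omegaotimesomega}, and becomes routine once the formal framework for the cutting operation is in place.
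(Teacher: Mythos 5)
Your proposal is correct and takes essentially the same route as the paper: your lemma $(\ast)$ is just an unconditional repackaging of part (i) (the two are interchangeable using centrality of scalars and $\Omega\otimes\Omega=\Omega$), its proof by cutting every wire with \ref{CNTix}, absorbing the resulting copies of $\Omega$ via Lemma \ref{lem:omegaotimesomega}, and deleting $\onein\oneout$ pairs via \ref{CNTvi} is exactly the paper's cutting argument, and both treatments obtain (ii) and (iii) by reducing composites to the absorbing statement. Incidentally, your codomain $\Omega_{p,m}$ in (iii) is the correct one; the $\Omega_{p,n}$ in the lemma statement is a typo, as the paper's own proof confirms.
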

\begin{proof}\
\begin{enumerate}[label=(\roman*)]
\item Consider an arbitrary circuit $f:n\to m$ such that $f=f\otimes \Omega$. Use $\ref{CNTix}$ to cut the wires around every gate in $f$.  Then every cut gate must either be $\Omega$ or $\onein\oneout$.  In the first case, use  Lemma \ref{lem:omegaotimesomega}  to consume the $\Omega$ obtained by cutting.  In the second case, by \ref{CNTvi}, allow one to remove the circuit. Thus $f = f\otimes\Omega = \Omega_{n,m}$.
\item Clearly $\Omega_{n,m} \Omega = \Omega_{n,m} \otimes \Omega = \Omega_{n,m}$ but then $h \Omega_{n,m} = (h \Omega_{n,m}) \Omega$ so $h \Omega_{n,m} = \Omega_{p,m}$.
\item Dual to (ii).
\end{enumerate}
\end{proof}

Next, we prove some identities that will be used later to simplify proofs:

\begin{lemma} ~
\label{lemmas}
\begin{enumerate}[label=(\roman*)]

\item
\label{remark:CNT:10}

\begin{flalign*}
 & \ref{CNTiv}
\end{flalign*}

\end{enumerate}
\end{lemma}

\section{\texorpdfstring{$\CNOT$}{CNOT} is a Discrete Inverse Category}
\label{Appendix B}
In this section, we prove that $\CNOT$ is a discrete inverse category.  We show discreteness before establishing the inverse category properties.  

\subsection{Inverse products in \texorpdfstring{$\CNOT$}{CNOT}}

We begin by defining two families of maps $\Delta_n$ and $\nabla_n$ for all $n \in \N$. Then we show that $\Delta$ is coassociative and satisfies uniform copying law. $\Delta$ and show that $\Delta$ is a copy natural transformation which forms a cosemigroup such that the uniform copying law holds.
 

\begin{definition}
\label{defn:fanoutFanin}
Define two families of maps $\{\Delta_n:n\to 2n\}_{n\in \mathbb{N}}$ and $\{\nabla_n:2n\to n\}_{n \in \mathbb{N}}$ as follows.

On zero wires, define $\Delta_0:=1_0$.

On one wire, define $\Delta_1$, $\nabla_1$, respectively by:
\begin{flalign*}
\begin{tikzpicture}[baseline={([yshift=-.5ex]current bounding box.center)}]
\begin{pgfonlayer}{nodelayer}
\node [style=rn] (0) at (0, 0.5) {};
\node [style=fanout] (1) at (1, 0.5) {};
\node [style=rn] (2) at (2, 1) {};
\node [style=rn] (3) at (2, 0) {};
\end{pgfonlayer}
\begin{pgfonlayer}{edgelayer}
\draw [style=simple] (0) to (1);
\draw [style=simple, bend left] (1) to (2);
\draw [style=simple, bend right] (1) to (3);
\end{pgfonlayer}
\end{tikzpicture}
:=
\begin{tikzpicture}[baseline={([yshift=-.5ex]current bounding box.center)}]
\begin{pgfonlayer}{nodelayer}
\node [style=zeroin] (0) at (0, .5) {};
\node [style=oplus] (1) at (.5, .5) {};
\node [style=rn] (2) at (1, .5) {};
\node [style=rn] (10) at (0, 0) {};
\node [style=dot] (11) at (.5, 0) {};
\node [style=rn] (12) at (1, 0) {};
\end{pgfonlayer}
\begin{pgfonlayer}{edgelayer}
\draw [style=simple] (0) to (2);
\draw [style=simple] (10) to (12);
\draw [style=simple] (1) to (11);
\end{pgfonlayer}
\end{tikzpicture}
\hspace*{.5cm}
\text{ and dually }
\hspace*{.5cm}
\begin{tikzpicture}[baseline={([yshift=-.5ex]current bounding box.center)}]
\begin{pgfonlayer}{nodelayer}
\node [style=rn] (0) at (2, 0.5) {};
\node [style=fanin] (1) at (1, 0.5) {};
\node [style=rn] (2) at (0, 1) {};
\node [style=rn] (3) at (0, 0) {};
\end{pgfonlayer}
\begin{pgfonlayer}{edgelayer}
\draw [style=simple] (0) to (1);
\draw [style=simple, bend right] (1) to (2);
\draw [style=simple, bend left] (1) to (3);
\end{pgfonlayer}
\end{tikzpicture}
:=
\begin{tikzpicture}[baseline={([yshift=-.5ex]current bounding box.center)}]
\begin{pgfonlayer}{nodelayer}
\node [style=rn] (0) at (0, .5) {};
\node [style=oplus] (1) at (.5, .5) {};
\node [style=zeroout] (2) at (1, .5) {};
\node [style=rn] (10) at (0, 0) {};
\node [style=dot] (11) at (.5, 0) {};
\node [style=rn] (12) at (1, 0) {};
\end{pgfonlayer}
\begin{pgfonlayer}{edgelayer}
\draw [style=simple] (0) to (2);
\draw [style=simple] (10) to (12);
\draw [style=simple] (1) to (11);
\end{pgfonlayer}
\end{tikzpicture}
\end{flalign*}
On $n$ wires define $\Delta$, $\nabla$ inductively as follows:
\begin{flalign*}
\begin{tikzpicture}[baseline={([yshift=-.5ex]current bounding box.center)}]
\begin{pgfonlayer}{nodelayer}
\node [style=rn] (0) at (0, 0.5) {};
\node [style=fanout] (1) at (1, 0.5) {};
\node [style=rn] (2) at (2, 1) {};
\node [style=rn] (3) at (2, 0) {};
\end{pgfonlayer}
\begin{pgfonlayer}{edgelayer}
\draw [style=wires]  (0) -- (1) node[pos=.4, above] {$n$};
\draw [style=simple, bend left] (1) to (2);
\draw [style=simple, bend right] (1) to (3);
\end{pgfonlayer}
\end{tikzpicture}
&:=
\begin{tikzpicture}[baseline={([yshift=-.5ex]current bounding box.center)}]
\begin{pgfonlayer}{nodelayer}
\node [style=rn] (0) at (0, 0.5) {};
\node [style=fanout] (1) at (1, 0.5) {};
\node [style=rn] (2) at (2, 1) {};
\node [style=rn] (3) at (2, 0) {};
\node [style=rn] (10) at (0, 1.25) {};
\node [style=fanout] (11) at (1, 1.25) {};
\node [style=rn] (12) at (2, 1.75) {};
\node [style=rn] (13) at (2, 0.75) {};
\end{pgfonlayer}
\begin{pgfonlayer}{edgelayer}
\draw [style=wires]  (10) -- (11) node[pos=.4, above] {$n-1$};
\draw [style=simple, bend left] (1) to (2);
\draw [style=simple, bend right] (1) to (3);
\draw [style=simple] (0) to (1);
\draw [style=simple, bend left] (11) to (12);
\draw [style=simple, bend right] (11) to (13);
\end{pgfonlayer}
\end{tikzpicture}
\hspace*{.5cm}
\text{ and dually }
\hspace*{.5cm}
\begin{tikzpicture}[baseline={([yshift=-.5ex]current bounding box.center)}]
\begin{pgfonlayer}{nodelayer}
\node [style=rn] (0) at (2, 0.5) {};
\node [style=fanin] (1) at (1, 0.5) {};
\node [style=rn] (2) at (0, 1) {};
\node [style=rn] (3) at (0, 0) {};
\end{pgfonlayer}
\begin{pgfonlayer}{edgelayer}
\draw [style=wires]  (0) -- (1) node[pos=.4, above] {$n$};
\draw [style=simple, bend right] (1) to (2);
\draw [style=simple, bend left] (1) to (3);
\end{pgfonlayer}
\end{tikzpicture}
:=
\begin{tikzpicture}[baseline={([yshift=-.5ex]current bounding box.center)}]
\begin{pgfonlayer}{nodelayer}
\node [style=rn] (0) at (2, 0.5) {};
\node [style=fanin] (1) at (1, 0.5) {};
\node [style=rn] (2) at (0, 1) {};
\node [style=rn] (3) at (0, 0) {};
\node [style=rn] (10) at (2, 1.25) {};
\node [style=fanin] (11) at (1, 1.25) {};
\node [style=rn] (12) at (0, 1.75) {};
\node [style=rn] (13) at (0, 0.75) {};
\end{pgfonlayer}
\begin{pgfonlayer}{edgelayer}
\draw [style=wires]  (10) -- (11) node[pos=.4, above] {$n-1$};
\draw [style=simple, bend right] (1) to (2);
\draw [style=simple, bend left] (1) to (3);
\draw [style=simple] (0) to (1);
\draw [style=simple, bend right] (11) to (12);
\draw [style=simple, bend left] (11) to (13);
\end{pgfonlayer}
\end{tikzpicture}
\end{flalign*}
\end{definition}

As we have defined $\CNOT$ by its generators, most of the proofs which follow involve structural induction. 

\begin{definition}
For any circuit $f$ we define the \text{size of the circuit} $|f|$ as the number of $\cnot$ gates and ancill\ae\ in $f$.
\end{definition}

\begin{lemma}  \label{lem:fanoutNatural}
$\Delta$ is a natural transformation.
\end{lemma}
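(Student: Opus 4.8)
The plan is to prove naturality by structural induction on the circuit $f$. Recall that $\CNOT$ is the symmetric monoidal category generated (modulo \ref{CNTi}--\ref{CNTix}) by the gates $\cnot$, $\onein$ and $\oneout$, so every map arises from these generators, the identity, and the symmetry $\swap$ by sequential composition and parallel tensor. Writing $F$ for the ``doubling'' endofunctor $n\mapsto 2n$, $f\mapsto f\otimes f$ (which is functorial since $(gh)\otimes(gh)=(g\otimes g)(h\otimes h)$ by interchange), the claim is that the naturality square $f\,\Delta_m=\Delta_n(f\otimes f)$ commutes for every $f:n\to m$. It therefore suffices to (i) verify the square for each generator, and (ii) show that the class of maps satisfying the square is closed under composition and tensor.

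The two closure steps are the routine part. For composition, if $f=gh$ with $g:n\to k$ and $h:k\to m$ both satisfying naturality, then
\[ f\,\Delta_m = g\,(h\,\Delta_m) = g\,\Delta_k(h\otimes h) = \Delta_n(g\otimes g)(h\otimes h) = \Delta_n(f\otimes f), \]
the last step again by interchange. For the tensor $f=g\otimes h$ with $g:n_1\to m_1$ and $h:n_2\to m_2$, I first record the auxiliary uniform-copying identity $\Delta_{n+m}=(\Delta_n\otimes\Delta_m)\,\mathsf{ex}$ (this is \ref{DNV:4}), which is immediate from the inductive Definition \ref{defn:fanoutFanin} by a short induction using only symmetric-monoidal coherence: fanning out $n+m$ wires is the same as fanning out the two blocks and re-sorting the copies. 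Combining this with the inductive hypotheses for $g$ and $h$, with interchange, and with the naturality of the exchange map $\mathsf{ex}$ (pure coherence), one computes
\[ (g\otimes h)\,\Delta_{m_1+m_2} = \big(g\,\Delta_{m_1}\otimes h\,\Delta_{m_2}\big)\mathsf{ex} = \big(\Delta_{n_1}(g\otimes g)\otimes \Delta_{n_2}(h\otimes h)\big)\mathsf{ex} = \Delta_{n_1+n_2}\big((g\otimes h)\otimes(g\otimes h)\big), \]
which is exactly $\Delta_{n_1+n_2}(f\otimes f)$.

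For the base cases, naturality against the identity is trivial, and naturality against the symmetry $\swap$ reduces to the uniform-copying identity together with the cocommutativity of the one-wire copy $\Delta_1$ (a direct check from its definition) and symmetry coherence. Naturality against $\onein$ and $\oneout$ are short axiom computations. Unfolding $\Delta_1$ as a $\zeroin$-ancilla fed as the target of a $\cnot$ controlled by the live wire, feeding $\onein$ into the control copies the $1$ onto the target by \ref{CNTiv}, giving $\onein\,\Delta_1=\onein\otimes\onein$; dually $\Delta_1(\oneout\otimes\oneout)=\oneout$ follows from \ref{CNTvii} (or by the horizontal-flip involution $(\_)^\circ$). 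Together with $\Delta_0=1_0$ these establish the squares for $\onein$ and $\oneout$.

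The substantive case---and the main obstacle---is naturality against the generator $\cnot$, i.e.\ the identity $\cnot\,\Delta_2=\Delta_2(\cnot\otimes\cnot)$, the assertion that $\cnot$ is ``copyable''. This must be verified by an explicit string-diagram computation: after expanding each $\Delta_2$ into a pair of $\Delta_1$'s with the copies re-sorted, one pushes the four resulting $\cnot$ gates past one another and cancels redundant pairs. This is where the bulk of the CNT axioms are used, principally the sliding/commuting rules \ref{CNTiii}, \ref{CNTv} and \ref{CNTviii}, the bialgebra-type rearrangement \ref{CNTi}, and the cancellation \ref{CNTii}. I expect this diagram chase, rather than the inductive bookkeeping, to be the only genuinely delicate part; once it is in hand the induction closes and $\Delta$ is natural.
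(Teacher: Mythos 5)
Your proposal is correct and takes essentially the same route as the paper: a structural induction over the generators of $\CNOT$, whose substance is the explicit CNT-axiom diagram computations showing $\onein$, $\oneout$, and above all $\cnot$ (via $\cnot\,\Delta_2=\Delta_2(\cnot\otimes\cnot)$) are copied by $\Delta$. The only difference is one of emphasis: the paper carries out in full the generator diagram chases you merely sketch, while leaving implicit the routine closure under composition, tensor and symmetry that you spell out via interchange and the by-construction identity $\Delta_{n+m}=(\Delta_n\otimes\Delta_m)\,\mathsf{ex}$.
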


\begin{proof} We prove $\Delta$ is natural by a structural induction on circuits.

\begin{itemize}
\item
For $\onein$:

\begin{flalign*}

\end{flalign*}

Therefore, the inductive claim holds.

\end{itemize}
\end{proof}

\begin{lemma} \label{lemma:fanoutCanonical} 
      \ref{DNV:4} holds for $\Delta$.
\end{lemma}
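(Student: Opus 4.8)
The statement to prove is the uniform copying law \ref{DNV:4} for the diagonal $\Delta$. Since the objects of $\CNOT$ are the natural numbers with tensor given by addition, this amounts to showing, for all $n,m \in \N$,
\[ \Delta_{n+m} = (\Delta_n \otimes \Delta_m)\,\mathsf{ex}_{n,m}, \]
where $\mathsf{ex}_{n,m} : (n\otimes n)\otimes(m\otimes m) \to (n\otimes m)\otimes(n\otimes m)$ is the exchange map. The plan is to fix $n$ and induct on $m$, peeling one wire at a time off the second tensor factor, exactly as $\Delta$ is built up in Definition \ref{defn:fanoutFanin}.

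The base case $m=0$ is immediate: $\Delta_0 = 1_0$ and $\mathsf{ex}_{n,0}$ is, up to the unit coherence isomorphisms, the identity on $n\otimes n$, so the right-hand side collapses to $\Delta_n = \Delta_{n+0}$. The key observation driving the whole induction is that the special case $(k,1)$ is essentially the recursive clause of Definition \ref{defn:fanoutFanin}: unwinding that clause, $\Delta_{k+1}$ is obtained by applying $\Delta_k$ to the first $k$ wires and $\Delta_1$ to the last wire and then interleaving the two copies of the last wire into the two copies of the first $k$; this interleaving is precisely the wire permutation realised by $\mathsf{ex}_{k,1}$. Hence $\Delta_{k+1} = (\Delta_k \otimes \Delta_1)\mathsf{ex}_{k,1}$, the identity being literally the definition together with Mac Lane coherence identifying the interleaving symmetry with $\mathsf{ex}_{k,1}$.

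For the inductive step I would assume the law for $m$ and prove it for $m+1$. First apply the $(k,1)$ case at $k=n+m$ to get $\Delta_{n+m+1} = (\Delta_{n+m}\otimes\Delta_1)\mathsf{ex}_{n+m,1}$, then substitute the induction hypothesis $\Delta_{n+m} = (\Delta_n\otimes\Delta_m)\mathsf{ex}_{n,m}$, and finally use the $(k,1)$ case at $k=m$ to rewrite $\Delta_{m+1} = (\Delta_m\otimes\Delta_1)\mathsf{ex}_{m,1}$ inside the target expression $(\Delta_n\otimes\Delta_{m+1})\mathsf{ex}_{n,m+1}$. Both the resulting expression for $\Delta_{n+m+1}$ and the target are composites of structural isomorphisms (associators and exchanges/symmetries) surrounding the single map $\Delta_n\otimes\Delta_m\otimes\Delta_1$, with identical source and target objects; the coassociativity established in Lemma \ref{lemma:fanoutAssociative} is what guarantees the two groupings of the three $\Delta$-blocks agree, and the remaining discrepancy is purely a rearrangement of wires.

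The hard part is therefore not any new circuit identity but the coherence bookkeeping: one must check that $(\mathsf{ex}_{n,m}\otimes 1)\,\mathsf{ex}_{n+m,1}$ agrees, modulo associators, with the composite built from $\mathsf{ex}_{n,m+1}$ and $1\otimes\mathsf{ex}_{m,1}$. This is an instance of Mac Lane's coherence theorem for symmetric monoidal categories, since every map in sight is a canonical structural isomorphism between the same pair of objects. In $\CNOT$ all of these structural maps are concretely permutations of wires built from $\swap$ gates, so rather than invoking coherence abstractly I could instead verify the equality by computing the two underlying permutations of the $2(n+m+1)$ output wires and observing they coincide; this is routine but tedious, which is why I would lean on coherence to keep the argument short.
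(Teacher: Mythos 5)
Your proof is correct and takes essentially the same approach as the paper, whose entire proof is the single observation that the uniform copying law holds ``by construction'': your induction, with the recursive clause of Definition \ref{defn:fanoutFanin} serving as the $(k,1)$ case and permutation/coherence bookkeeping handling the exchange maps, is a detailed unfolding of exactly that remark. One minor misattribution does not affect validity: regrouping $\Delta_n\otimes\Delta_m\otimes\Delta_1$ requires only associativity of the (strict) tensor in $\CNOT$, not the coassociativity of $\Delta$ from Lemma \ref{lemma:fanoutAssociative}.
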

\begin{proof}
The uniform copying law holds for $\Delta$ by construction.
\end{proof}

\subsection{\texorpdfstring{$\CNOT$}{CNOT} is an inverse category}

To prove that $\CNOT$ is an inverse category, we need to prove that the functor $(\_)\cnv: \CNOT\op \rightarrow \CNOT$ which horizontally flips circuits satisfies \ref{INV:1}, \ref{INV:2} and \ref{INV:3}. It is immediate that \ref{INV:1} holds. It remains to show that \ref{INV:2} and \ref{INV:3} hold.

\subsubsection{Latchable Circuits}

In order to prove that  \ref{INV:2} holds, we identify the restriction idempotents $ff^\circ$ in $\CNOT$ with what we call ``latchable circuits".  We then show that for every  $f$ that map $ff\cnv$ is latchable.

\begin{definition}
\label{defn:Latchable}

A circuit $f:n\to n$ is \bf{latchable} when
$$
f=\Delta_n (f \otimes 1_n) \nabla_n
$$
That is, when the following holds:
\begin{flalign*}
& \ref{CNTvi}\times 2
\end{flalign*}

\end{proof}

\begin{proposition}{All circuits of the form $ff\cnv$ are latchable.}
\label{prop:symmetricLatchable}
\end{proposition}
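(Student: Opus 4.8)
The plan is to prove that $ff\cnv$ is latchable (Definition \ref{defn:Latchable}) by structural induction on $f$, measured by its number of gates $|f|$. Writing $f$ as a composite $f = g\,f'$ whose leftmost gate-layer is $g = 1_a\otimes s\otimes 1_b$ for a single generator $s$ (after reshuffling the surrounding wires by a permutation, which I fold into $g$), contravariance of $(\_)\cnv$ gives $ff\cnv = g\,(f'f'\cnv)\,g\cnv$. Since $f'$ carries one fewer gate, $e := f'f'\cnv$ is latchable by the inductive hypothesis, so the entire statement reduces to showing that conjugating a latchable circuit by a single gate-layer again yields a latchable circuit.

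The first tool is a conjugation formula valid for \emph{any} $g\colon n\to m$ and any latchable $e\colon m\to m$. Substituting $e = \Delta_m(e\otimes 1_m)\nabla_m$ and pushing $g,g\cnv$ inward using naturality of $\Delta$ (Lemma \ref{lem:fanoutNatural}), $g\Delta_m = \Delta_n(g\otimes g)$, together with its mirror $\nabla_m g\cnv = (g\cnv\otimes g\cnv)\nabla_n$ (apply $(\_)\cnv$ to naturality, using $\Delta\cnv=\nabla$), bifunctoriality of $\otimes$ collapses the middle to
\[
g e g\cnv = \Delta_n\big((g e g\cnv)\otimes(g g\cnv)\big)\nabla_n .
\]
Hence whenever $g g\cnv = 1_n$ the right-hand side is exactly $\Delta_n\big((g e g\cnv)\otimes 1_n\big)\nabla_n$, so $g e g\cnv$ is latchable with no further work.

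It remains to run the casework on the peeled generator $s$. For $s\in\{\cnot,\swap\}$ we have $ss\cnv = ss = 1$ (by \ref{CNTii} for $\cnot$, and for the symmetry), and for $s=\onein$ we have $\onein\oneout = 1_0$ by \ref{CNTvi}; in all three cases $g g\cnv = 1$, so the conjugation formula finishes the case immediately. The one genuinely non-unitary generator is $s=\oneout$, where $g g\cnv = 1_a\otimes(\oneout\onein)\otimes 1_b \neq 1$. Here I instead simplify $g e g\cnv$ directly: writing $g = \sigma(\oneout\otimes 1_{a+b})$ with $\sigma$ the permutation bringing the discarded wire to the top, the interchange law gives $g e g\cnv = \sigma\big((\oneout\onein)\otimes e\big)\sigma\cnv$, and since $\sigma$ is unitary this is latchable as soon as $(\oneout\onein)\otimes e$ is. Now $\oneout\onein$ is latchable by Lemma \ref{lem:symmetricCircuitsLatchable}, so it suffices to know that latchable circuits are closed under $\otimes$: this follows from the uniform copying law \ref{DNV:4} (Lemma \ref{lemma:fanoutCanonical}) and its dual for $\nabla$, which let me factor $\Delta_{a+b}\big((e_1\otimes e_2)\otimes 1\big)\nabla_{a+b}$ through the exchange map as $\big(\Delta(e_1\otimes 1)\nabla\big)\otimes\big(\Delta(e_2\otimes 1)\nabla\big) = e_1\otimes e_2$.

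The main obstacle is precisely this $\oneout$ case: every other generator is ``unitary'' in the sense that $g g\cnv = 1$, which makes the conjugation formula degenerate at once to latchability, whereas $\oneout$ contributes the nontrivial restriction idempotent $\oneout\onein$ that cannot be discarded for free. This is exactly why the base computation $\oneout\onein = \Delta_1(\oneout\onein\otimes 1_1)\nabla_1$ of Lemma \ref{lem:symmetricCircuitsLatchable} was isolated beforehand; combined with $\otimes$-closure it absorbs the obstruction and closes the induction.
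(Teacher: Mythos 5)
Your proof is correct and follows essentially the same route as the paper's own: induction on the number of gates, peeling off the leftmost generator and pushing it through $\Delta$/$\nabla$ via naturality (Lemma \ref{lem:fanoutNatural}), with the one non-unitary generator $\oneout$ handled exactly as in the paper by reducing to $(\oneout\onein)\otimes e$ and invoking Lemma \ref{lem:symmetricCircuitsLatchable} together with closure of latchability under $\otimes$ (a step the paper performs implicitly inside its diagrams via the componentwise construction of $\Delta$). The only differences are cosmetic: you package the unitary cases ($\cnot$, $\swap$, $\onein$) into the single conjugation identity $geg\cnv = \Delta_n\big((geg\cnv)\otimes(gg\cnv)\big)\nabla_n$ where the paper redoes the diagram manipulation per case, and you leave implicit the trivial base case $|f|=0$, where $ff\cnv = 1_n$ and latchability is exactly separability (Lemma \ref{lemma:fanoutFaninInverses}).
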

\begin{proof}
We shall prove that all circuits of the form $ff\cnv$ are latchable by induction on the size of  $f$.

Any circuit $p$ with  $|p|=0$ is a permutation and, thus, $pp\cnv$ is the identity. So being latchable in this case amounts to separability.
Furthermore, adding a permutation, $p$, in front of any latchable circuit, $h$, gives a latchable circuit as:
         $$(ph)(ph)\cnv = ph h\cnv p\cnv = p \Delta(hh\cnv\otimes 1)\nabla p\cnv = \Delta (phh\cnv p\cnv \otimes pp\cnv)\nabla = \Delta (((ph)(ph)\cnv)\otimes 1) \nabla$$
Thus we need only consider adding gates to the top left of circuits: adding a gate anywhere else can be simulated by precomposing with a permutation to move the 
gates wires to the top, then adding the gate at the top left, and then precomposing with the inverse of the permutation.

Thus, it suffices to show  inductively that when a circuit of the form $ff\cnv$ is latchable, for $|f|<k$, then adding a gate to the top left of $f$ results in a circuit which is still latchable.
\begin{description}
\item[Adding $|1 \rangle$:]
Given that $n\geq 1$, the symmetric circuit
$$
{h=((\onein\otimes 1_{n-1}) f)((\onein\otimes 1_{n-1}) f)\cnv:n-1\to n-1}
$$
is latchable, as:
\begin{flalign*}
&\begin{tikzpicture}[baseline={([yshift=-.5ex]current bounding box.center)}]
\begin{pgfonlayer}{nodelayer}
\node [style=rn] (0) at (0, 0) {};
\draw[fill=white] (2,0) circle [radius=.25] node (1) {$h$};
\node [style=rn] (2) at (4, 0) {};
\end{pgfonlayer}
\begin{pgfonlayer}{edgelayer}
\draw [style=wires]  (0) -- (1) node[pos=.4, above] {$n-1$};
\draw [style=wires]  (1) -- (2) node[pos=.4, above] {$n-1$};
\end{pgfonlayer}
\end{tikzpicture}\\
&:=
\begin{tikzpicture}[baseline={([yshift=-.5ex]current bounding box.center)}]
\begin{pgfonlayer}{nodelayer}
\node [style=rn] (0) at (0, 0) {};
\node [style=onein] (10) at (0, 1) {};
%
\draw[fill=white] (2,0) circle [radius=.4] node (20) {$ff\cnv$};
\node [style=rn] (3) at (4, 0) {};
%
\node [style=oneout] (13) at (4, 1) {};
\end{pgfonlayer}
\begin{pgfonlayer}{edgelayer}
\draw [style=wires]  (0) -- (20) node[pos=.4, above] {$n-1$};
\draw [style=wires]  (20) -- (3) node[pos=.4, above] {$n-1$};
\draw [style=simple, bend left] (10) to (20);
\draw [style=simple, bend left] (20) to (13);
\end{pgfonlayer}
\end{tikzpicture}& \text{By supposition}\\
&=
\begin{tikzpicture}[baseline={([yshift=-.5ex]current bounding box.center)}]
\begin{pgfonlayer}{nodelayer}
\node [style=rn] (0) at (0, 1) {};
\node [style=fanout] (1) at (1.5, 1) {};
\node [style=onein] (10) at (0, 1.75) {};
\node [style=fanout] (11) at (1.5, 1.75) {};
\draw[fill=white] (2.5,2) circle [radius=.4] node (20) {$ff\cnv$};
\node [style=fanin] (2) at (3.5, 1) {};
\node [style=rn] (3) at (5, 1) {};
\node [style=fanin] (12) at (3.5, 1.75) {};
\node [style=oneout] (13) at (5, 1.75) {};
\end{pgfonlayer}
\begin{pgfonlayer}{edgelayer}
\draw [style=wires]  (0) -- (1) node[pos=.4, above] {$n-1$};
\draw [style=wires]  (2) -- (3) node[pos=.4, above] {$n-1$};
\draw [style=simple] (1) to (20);
\draw [style=simple] (20) to (2);
\draw [style=simple,bend right] (1) to (2);
\draw [style=wires]  (10) -- (11) node[pos=.4, above] {};
\draw [style=wires]  (12) -- (13) node[pos=.4, above] {};
\draw [style=simple,bend left] (11) to (20);
\draw [style=simple,bend left] (20) to (12);
\draw [style=simple,bend right] (11) to (12);
\end{pgfonlayer}
\end{tikzpicture}& \text{By supposition}\\
&=
\begin{tikzpicture}[baseline={([yshift=-.5ex]current bounding box.center)}]
\begin{pgfonlayer}{nodelayer}
\node [style=rn] (0) at (0, 1) {};
\node [style=fanout] (1) at (1.5, 1) {};
\draw[fill=white] (2.5,2) circle [radius=.4] node (20) {$ff\cnv$};
\node [style=fanin] (2) at (3.5, 1) {};
\node [style=rn] (3) at (5, 1) {};
\node [style=onein] (11) at (1, 2) {};
\node [style=oneout] (12) at (4, 2) {};
\node [style=onein] (30) at (1, 1.5) {};
\node [style=oneout] (31) at (4, 1.5) {};
\end{pgfonlayer}
\begin{pgfonlayer}{edgelayer}
\draw [style=wires]  (0) -- (1) node[pos=.4, above] {$n-1$};
\draw [style=wires]  (2) -- (3) node[pos=.4, above] {$n-1$};
\draw [style=simple] (1) to (20);
\draw [style=simple] (20) to (2);
\draw [style=simple,bend right] (1) to (2);
\draw [style=simple] (11) to (20);
\draw [style=simple] (20) to (12);
\draw [style=simple] (30) to (31);
\end{pgfonlayer}
\end{tikzpicture}& \text{ Lemma } \ref{lem:fanoutNatural}\\
&=
\begin{tikzpicture}[baseline={([yshift=-.5ex]current bounding box.center)}]
\begin{pgfonlayer}{nodelayer}
\node [style=rn] (0) at (0, 1) {};
\node [style=fanout] (1) at (1.5, 1) {};
\draw[fill=white] (2.5,1.5) circle [radius=.45] node (20) {$ff\cnv$};
\node [style=fanin] (2) at (3.5, 1) {};
\node [style=rn] (3) at (5, 1) {};
\node [style=onein] (11) at (1.5, 2) {};
\node [style=oneout] (12) at (3.5, 2) {};
\end{pgfonlayer}
\begin{pgfonlayer}{edgelayer}
\draw [style=wires]  (0) -- (1) node[pos=.4, above] {$n-1$};
\draw [style=wires]  (2) -- (3) node[pos=.4, above] {$n-1$};
\draw [style=simple] (1) to (20);
\draw [style=simple] (20) to (2);
\draw [style=simple,bend right] (1) to (2);
\draw [style=simple] (11) to (20);
\draw [style=simple] (20) to (12);
\end{pgfonlayer}
\end{tikzpicture}& \ref{CNTvi}\\
&=:
\begin{tikzpicture}[baseline={([yshift=-.5ex]current bounding box.center)}]
\begin{pgfonlayer}{nodelayer}
\node [style=rn] (0) at (0, 1) {};
\node [style=fanout] (1) at (1.5, 1) {};
\draw[fill=white] (2.5,1.5) circle [radius=.25] node (20) {$h$};
\node [style=fanin] (2) at (3.5, 1) {};
\node [style=rn] (3) at (5, 1) {};
\end{pgfonlayer}
\begin{pgfonlayer}{edgelayer}
\draw [style=wires]  (0) -- (1) node[pos=.4, above] {$n-1$};
\draw [style=wires]  (2) -- (3) node[pos=.4, above] {$n-1$};
\draw [style=simple] (1) to (20);
\draw [style=simple] (20) to (2);
\draw [style=simple,bend right] (1) to (2);
\end{pgfonlayer}
\end{tikzpicture}\\
\end{flalign*}

\item[Adding $\langle 1 |$:]
Given that $n\geq 1$, the symmetric circuit
$$
h=((\oneout\otimes 1_{n}) f)((\oneout\otimes 1_{n}) f)\cnv:n+1\to n+1
$$
is latchable, as:
\begin{flalign*}
&\begin{tikzpicture}[baseline={([yshift=-.5ex]current bounding box.center)}]
\begin{pgfonlayer}{nodelayer}
\node [style=rn] (0) at (0, 0) {};
\draw[fill=white] (2,0) circle [radius=.25] node (1) {$h$};
\node [style=rn] (2) at (4, 0) {};
\end{pgfonlayer}
\begin{pgfonlayer}{edgelayer}
\draw [style=wires]  (0) -- (1) node[pos=.4, above] {$n+1$};
\draw [style=wires]  (1) -- (2) node[pos=.4, above] {$n+1$};
\end{pgfonlayer}
\end{tikzpicture}\\
&:=
\begin{tikzpicture}[baseline={([yshift=-.5ex]current bounding box.center)}]
\begin{pgfonlayer}{nodelayer}
\node [style=rn] (0) at (0, 0) {};
\node [style=rn] (10) at (0, .5) {};
\node [style=oneout] (11) at (1.25, .5) {};
\draw[fill=white] (2,0) circle [radius=.4] node (20) {$ff\cnv$};
\node [style=rn] (3) at (4, 0) {};
\node [style=onein] (12) at (2.75, .5) {};
\node [style=rn] (13) at (4, .5) {};
\end{pgfonlayer}
\begin{pgfonlayer}{edgelayer}
\draw [style=wires]  (0) -- (20) node[pos=.4, above] {$n$};
\draw [style=wires]  (20) -- (3) node[pos=.4, above] {$n$};
\draw [style=simple] (10) to (11);
\draw [style=simple] (12) to (13);
\end{pgfonlayer}
\end{tikzpicture}& \text{By supposition}\\
&=
\begin{tikzpicture}[baseline={([yshift=-.5ex]current bounding box.center)}]
\begin{pgfonlayer}{nodelayer}
\node [style=rn] (0) at (0, 1) {};
\node [style=fanout] (1) at (1.5, 1) {};
\node [style=rn] (10) at (0, 1.75) {};
\node [style=oneout] (11) at (1.5, 1.75) {};
\draw[fill=white] (2.5,1.5) circle [radius=.4] node (30) {$ff\cnv$};
\node [style=fanin] (2) at (3.5, 1) {};
\node [style=rn] (3) at (5, 1) {};
\node [style=onein] (12) at (3.5, 1.75) {};
\node [style=rn] (13) at (5, 1.75) {};
\end{pgfonlayer}
\begin{pgfonlayer}{edgelayer}
\draw [style=wires]  (0) -- (1) node[pos=.4, above] {$n$};
\draw [style=wires]  (2) -- (3) node[pos=.4, above] {$n$};
\draw [style=simple,bend left] (1) to (30);
\draw [style=simple,bend left] (30) to (2);
\draw [style=simple,bend right] (1) to (2);
\draw [style=simple] (10) to (11);
\draw [style=simple] (12) to (13);
\end{pgfonlayer}
\end{tikzpicture}& \text{By supposition}\\
&=
\begin{tikzpicture}[baseline={([yshift=-.5ex]current bounding box.center)}]
\begin{pgfonlayer}{nodelayer}
\node [style=rn] (0) at (0, 1) {};
\node [style=fanout] (1) at (1.5, 1) {};
\node [style=rn] (10) at (0, 1.75) {};
\node [style=fanout] (11) at (1.5, 1.75) {};
\node [style=oneout] (14) at (2.25, 2) {};
\node [style=oneout] (15) at (2.25, 1) {};
\draw[fill=white] (2.5,1.5) circle [radius=.4] node (30) {$ff\cnv$};
\node [style=fanin] (2) at (3.5, 1) {};
\node [style=rn] (3) at (5, 1) {};
\node [style=fanin] (12) at (3.5, 1.75) {};
\node [style=rn] (13) at (5, 1.75) {};
\node [style=onein] (16) at (2.75, 2) {};
\node [style=onein] (17) at (2.75, 1) {};
\end{pgfonlayer}
\begin{pgfonlayer}{edgelayer}
\draw [style=wires]  (0) -- (1) node[pos=.4, above] {$n$};
\draw [style=wires]  (2) -- (3) node[pos=.4, above] {$n$};
\draw [style=simple,bend left] (1) to (30);
\draw [style=simple,bend left] (30) to (2);
\draw [style=simple,bend right] (1) to (2);
\draw [style=simple] (10) to (11);
\draw [style=simple] (12) to (13);
\draw [style=simple] (11) to (14);
\draw [style=simple,bend right] (11) to (15);
\draw [style=simple] (12) to (16);
\draw [style=simple,bend left] (12) to (17);
\end{pgfonlayer}
\end{tikzpicture}&\text{ As $\Delta$ is natural}\\
&=
\begin{tikzpicture}[baseline={([yshift=-.5ex]current bounding box.center)}]
\begin{pgfonlayer}{nodelayer}
\node [style=rn] (0) at (0, 1) {};
\node [style=fanout] (1) at (1.5, 1) {};
\node [style=rn] (10) at (0, 1.75) {};
\node [style=fanout] (11) at (1.5, 1.75) {};
\node [style=oneout] (14) at (2.25, 2) {};
\draw[fill=white] (2.5,1.5) circle [radius=.4] node (30) {$ff\cnv$};
\node [style=fanin] (2) at (3.5, 1) {};
\node [style=rn] (3) at (5, 1) {};
\node [style=fanin] (12) at (3.5, 1.75) {};
\node [style=rn] (13) at (5, 1.75) {};
\node [style=onein] (16) at (2.75, 2) {};
\node [style=rn] (17) at (2.5, 1) {};
\end{pgfonlayer}
\begin{pgfonlayer}{edgelayer}
\draw [style=wires]  (0) -- (1) node[pos=.4, above] {$n$};
\draw [style=wires]  (2) -- (3) node[pos=.4, above] {$n$};
\draw [style=simple,bend left] (1) to (30);
\draw [style=simple,bend left] (30) to (2);
\draw [style=simple,bend right] (1) to (2);
\draw [style=simple] (10) to (11);
\draw [style=simple] (12) to (13);
\draw [style=simple] (11) to (14);
\draw [style=simple,bend right] (11) to (17) to (12);
\draw [style=simple] (12) to (16);
\end{pgfonlayer}
\end{tikzpicture} & \text{Lemma } \ref{lem:symmetricCircuitsLatchable}\\
&=:
\begin{tikzpicture}[baseline={([yshift=-.5ex]current bounding box.center)}]
\begin{pgfonlayer}{nodelayer}
\node [style=rn] (0) at (0, 1) {};
\node [style=fanout] (1) at (1.5, 1) {};
\draw[fill=white] (2.5,1.5) circle [radius=.25] node (20) {$h$};
\node [style=fanin] (2) at (3.5, 1) {};
\node [style=rn] (3) at (5, 1) {};
\end{pgfonlayer}
\begin{pgfonlayer}{edgelayer}
\draw [style=wires]  (0) -- (1) node[pos=.4, above] {$n+1$};
\draw [style=wires]  (2) -- (3) node[pos=.4, above] {$n+1$};
\draw [style=simple] (1) to (20);
\draw [style=simple] (20) to (2);
\draw [style=simple,bend right] (1) to (2);
\end{pgfonlayer}
\end{tikzpicture}
\end{flalign*}

\item[Adding $\mathsf{cnot}$:]
Given that $n\geq 2$, the symmetric circuit 
$$
h=((\cnot\otimes 1_{n-2}) f)((\cnot\otimes 1_{n-2}) f)\cnv:n\to n
$$
is latchable, as:
\begin{flalign*}
&\begin{tikzpicture}[baseline={([yshift=-.5ex]current bounding box.center)}]
\begin{pgfonlayer}{nodelayer}
\node [style=rn] (0) at (0, 0) {};
\draw[fill=white] (2,0) circle [radius=.25] node (1) {$h\cnv$};
\node [style=rn] (2) at (4, 0) {};
\end{pgfonlayer}
\begin{pgfonlayer}{edgelayer}
\draw [style=wires]  (0) -- (1) node[pos=.4, above] {$n$};
\draw [style=wires]  (1) -- (2) node[pos=.4, above] {$n$};
\end{pgfonlayer}
\end{tikzpicture}\\
&:=
\begin{tikzpicture}[baseline={([yshift=-.5ex]current bounding box.center)}]
\begin{pgfonlayer}{nodelayer}
\node [style=rn] (0) at (0, 1.4) {};
\node [style=oplus] (1) at (1, 1.4) {};
\node [style=oplus] (2) at (3.5, 1.4) {};
\node [style=rn] (3) at (4.5, 1.4) {};
\node [style=rn] (10) at (0, 1) {};
\node [style=dot] (11) at (1, 1) {};
\node [style=dot] (12) at (3.5, 1) {};
\node [style=rn] (13) at (4.5, 1) {};
\node [style=rn] (20) at (0, 0) {};
\node [style=rn] (23) at (4.5, 0) {};
\draw[fill=white] (2.25,0) circle [radius=.45] node (100) {$ff\cnv$};
\end{pgfonlayer}
\begin{pgfonlayer}{edgelayer}
\draw [style=simple] (0) to (1);
\draw [style=simple] (2) to (3);
\draw [style=simple, bend left] (1) to (100);
\draw [style=simple, bend left] (100) to (2);
\draw [style=simple] (10) to (11);
\draw [style=simple] (12) to (13);
\draw [style=simple] (11) to (100);
\draw [style=simple] (100) to (12);
\draw [style=wires]  (20) -- (100) node[pos=.4, above] {$n-2$};
\draw [style=wires]  (100) -- (23) node[pos=.4, above] {$n-2$};
\draw [style=simple] (20) to (100);
\draw [style=simple] (100) to (23);
\draw [style=simple] (1) to (11);
\draw [style=simple] (2) to (12);
\end{pgfonlayer}
\end{tikzpicture}& \text{By supposition}\\
&=
\begin{tikzpicture}[baseline={([yshift=-.5ex]current bounding box.center)}]
\begin{pgfonlayer}{nodelayer}
\node [style=rn] (-1) at (-.5, 1) {};
\node [style=dot] (0) at (0, 1) {};
\node [style=fanout] (1) at (1.5, 1) {};
\node [style=rn] (-10) at (-.5, 1.75) {};
\node [style=oplus] (10) at (0, 1.75) {};
\node [style=fanout] (11) at (1.5, 1.75) {};
\draw[fill=white] (3,2) circle [radius=.45] node (20) {$ff\cnv$};
\node [style=fanin] (2) at (4.5, 1) {};
\node [style=dot] (3) at (6, 1) {};
\node [style=rn] (4) at (6.5, 1) {};
\node [style=fanin] (12) at (4.5, 1.75) {};
\node [style=oplus] (13) at (6, 1.75) {};
\node [style=rn] (14) at (6.5, 1.75) {};
\node [style=rn] (31) at (-.5, 0) {};
\node [style=fanout] (32) at (1.5, 0) {};
\node [style=fanin] (33) at (4.5, 0) {};
\node [style=rn] (34) at (6.5, 0) {};
\end{pgfonlayer}
\begin{pgfonlayer}{edgelayer}
\draw [style=wires]  (0) -- (1) node[pos=.4, above] {};
\draw [style=wires]  (2) -- (3) node[pos=.4, above] {};
\draw [style=simple] (1) to (20);
\draw [style=simple] (20) to (2);
\draw [style=simple,bend right] (1) to (2);
\draw [style=wires]  (10) -- (11) node[pos=.4, above] {};
\draw [style=wires]  (12) -- (13) node[pos=.4, above] {};
\draw [style=simple,bend left] (11) to (20);
\draw [style=simple,bend left] (20) to (12);
\draw [style=simple,bend right] (11) to (12);
\draw [style=simple] (0) to (10);
\draw [style=simple] (3) to (13);
\draw [style=simple] (-1) to (0);
\draw [style=simple] (-10) to (10);
\draw [style=simple] (3) to (4);
\draw [style=simple] (13) to (14);
\draw [style=wires]  (31) -- (32) node[pos=.4, above] {$n-2$};
\draw [style=wires]  (33) -- (34) node[pos=.4, above] {$n-2$};
\draw [style=simple,bend right] (32) to (33);
\draw [style=simple] (32) to (20);
\draw [style=simple] (20) to (33);
\end{pgfonlayer}
\end{tikzpicture}& \text{By supposition}\\
&=
\begin{tikzpicture}[baseline={([yshift=-.5ex]current bounding box.center)}]
\begin{pgfonlayer}{nodelayer}
\node [style=rn] (0) at (0, 1) {};
\node [style=fanout] (1) at (1.5, 1) {};
\node [style=rn] (10) at (0, 1.75) {};
\node [style=fanout] (11) at (1.5, 1.75) {};
\draw[fill=white] (4,2.25) circle [radius=.45] node (20) {$ff\cnv$};
\node [style=fanin] (2) at (6.5, 1) {};
\node [style=rn] (3) at (8, 1) {};
\node [style=fanin] (12) at (6.5, 1.75) {};
\node [style=rn] (13) at (8, 1.75) {};
\node [style=dot] (30) at (3.5, .5) {};
\node [style=dot] (31) at (4.5, .5) {};
\node [style=oplus] (32) at (3.5, 1.1) {};
\node [style=oplus] (33) at (4.5, 1.1) {};
\node [style=dot] (34) at (2.5, 1.75) {};
\node [style=dot] (35) at (5.5, 1.75) {};
\node [style=oplus] (36) at (2.5, 2.25) {};
\node [style=oplus] (37) at (5.5, 2.25) {};
\node [style=rn] (41) at (0, 0) {};
\node [style=fanout] (42) at (1.5, 0) {};
\node [style=fanin] (43) at (6.5, 0) {};
\node [style=rn] (44) at (8, 0) {};
\end{pgfonlayer}
\begin{pgfonlayer}{edgelayer}
\draw [style=wires]  (0) -- (1) node[pos=.4, above] {};
\draw [style=wires]  (2) -- (3) node[pos=.4, above] {};
\draw [style=simple] (1) to (34) to (20);
\draw [style=simple] (20) to (35) to (2);
\draw [style=simple] (1) to (30) to (31) to (2);
\draw [style=wires]  (10) -- (11) node[pos=.4, above] {};
\draw [style=wires]  (12) -- (13) node[pos=.4, above] {};
\draw [style=simple] (11) to (36) to (20);
\draw [style=simple] (20) to (37) to (12);
\draw [style=simple] (11) to (32) to (33) to (12);
\draw [style=simple] (34) to (36);
\draw [style=simple] (35) to (37);
\draw [style=simple] (30) to (32);
\draw [style=simple] (31) to (33);
\draw [style=wires]  (41) -- (42) node[pos=.4, above] {$n-2$};
\draw [style=wires]  (43) -- (44) node[pos=.4, above] {$n-2$};
\draw [style=simple,bend right] (42) to (43);
\draw [style=simple] (42) to (20);
\draw [style=simple] (20) to (43);
\end{pgfonlayer}
\end{tikzpicture} & \text{As $\Delta$ is natural}\\
&=
\begin{tikzpicture}[baseline={([yshift=-.5ex]current bounding box.center)}]
\begin{pgfonlayer}{nodelayer}
\node [style=rn] (0) at (0, 1) {};
\node [style=fanout] (1) at (1.5, 1) {};
\node [style=rn] (10) at (0, 1.75) {};
\node [style=fanout] (11) at (1.5, 1.75) {};
\draw[fill=white] (4,2.25) circle [radius=.45] node (20) {$ff\cnv$};
\node [style=fanin] (2) at (6.5, 1) {};
\node [style=rn] (3) at (8, 1) {};
\node [style=fanin] (12) at (6.5, 1.75) {};
\node [style=rn] (13) at (8, 1.75) {};
%
%
%
\node [style=dot] (34) at (2.5, 1.75) {};
\node [style=dot] (35) at (5.5, 1.75) {};
\node [style=oplus] (36) at (2.5, 2.25) {};
\node [style=oplus] (37) at (5.5, 2.25) {};
\node [style=rn] (41) at (0, 0) {};
\node [style=fanout] (42) at (1.5, 0) {};
\node [style=fanin] (43) at (6.5, 0) {};
\node [style=rn] (44) at (8, 0) {};
\end{pgfonlayer}
\begin{pgfonlayer}{edgelayer}
\draw [style=wires]  (0) -- (1) node[pos=.4, above] {};
\draw [style=wires]  (2) -- (3) node[pos=.4, above] {};
\draw [style=simple] (1) to (34) to (20);
\draw [style=simple] (20) to (35) to (2);
\draw [style=simple,bend right] (1) to (2);
\draw [style=wires]  (10) -- (11) node[pos=.4, above] {};
\draw [style=wires]  (12) -- (13) node[pos=.4, above] {};
\draw [style=simple] (11) to (36) to (20);
\draw [style=simple] (20) to (37) to (12);
\draw [style=simple,bend right] (11) to (12);
\draw [style=simple] (34) to (36);
\draw [style=simple] (35) to (37);
\draw [style=wires]  (41) -- (42) node[pos=.4, above] {$n-2$};
\draw [style=wires]  (43) -- (44) node[pos=.4, above] {$n-2$};
\draw [style=simple,bend right] (42) to (43);
\draw [style=simple] (42) to (20);
\draw [style=simple] (20) to (43);
\end{pgfonlayer}
\end{tikzpicture} & \ref{CNTii}\\
&=:
\begin{tikzpicture}[baseline={([yshift=-.5ex]current bounding box.center)}]
\begin{pgfonlayer}{nodelayer}
\node [style=rn] (0) at (0, 1) {};
\node [style=fanout] (1) at (1.5, 1) {};
\draw[fill=white] (3,1.5) circle [radius=.25] node (20) {$h$};
\node [style=fanin] (2) at (4.5, 1) {};
\node [style=rn] (3) at (6, 1) {};
\end{pgfonlayer}
\begin{pgfonlayer}{edgelayer}
\draw [style=wires]  (0) -- (1) node[pos=.4, above] {$n$};
\draw [style=wires]  (2) -- (3) node[pos=.4, above] {$n$};
\draw [style=simple] (1) to (20);
\draw [style=simple] (20) to (2);
\draw [style=simple,bend right] (1) to (2);
\end{pgfonlayer}
\end{tikzpicture}
\end{flalign*}
\end{description}

Therefore, every circuit of the form $hh\cnv$ is latchable when $|h|=k+1$ completeing the inductive step.
\end{proof}

This allows us to prove that \ref{INV:3} holds:

\begin{proposition} \label{cor:symmetricCommute}
  Circuits of the form $ff\cnv$ commute.
\end{proposition}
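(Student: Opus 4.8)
The plan is to derive this proposition directly from the two results just proved, since all the substantive work has already been done. Given any two circuits $f$ and $g$ with a common domain $n$, the composites $ff\cnv$ and $gg\cnv$ are both endomorphisms $n \to n$, so the products $ff\cnv\, gg\cnv$ and $gg\cnv\, ff\cnv$ are well-defined circuits $n \to n$; the claim is precisely that they coincide.

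First I would invoke Proposition \ref{prop:symmetricLatchable} to conclude that each of $ff\cnv$ and $gg\cnv$ is latchable. Since both are latchable circuits on the same number of wires $n$, Lemma \ref{prop:latchableCircuitsCommuteIdempotent} then applies verbatim and yields $ff\cnv\, gg\cnv = gg\cnv\, ff\cnv$. Reading this with $c := f$ and $d := g$, it is exactly the axiom \ref{INV:3} for the candidate involution $(\_)\cnv$. That is the entire argument: latchability of the symmetric circuits (Proposition \ref{prop:symmetricLatchable}) feeds into the commutativity of latchable circuits (Lemma \ref{prop:latchableCircuitsCommuteIdempotent}).

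I expect no genuine obstacle at this stage, because the difficulty lives entirely in the two cited results: Proposition \ref{prop:symmetricLatchable} is proved by induction on the size $|f|$, handling the insertion of each generator, while Lemma \ref{prop:latchableCircuitsCommuteIdempotent} establishes commutativity using the naturality, cocommutativity, associativity and semi-Frobenius properties of $\Delta$ and $\nabla$. The only point worth flagging is the implicit hypothesis that $f$ and $g$ share a domain, which is what makes $ff\cnv$ and $gg\cnv$ composable endomorphisms; this is automatic in the intended application to \ref{INV:3}. Combined with \ref{INV:1} (immediate, as already observed) and \ref{INV:2} (for which the identification of the $ff\cnv$ with latchable circuits was set up), this completes the verification, via Theorem \ref{defn:inverseCategory}, that $\CNOT$ is an inverse category.
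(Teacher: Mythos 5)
Your proposal is correct and is essentially identical to the paper's own proof, which likewise observes that the claim is immediate from Proposition \ref{prop:symmetricLatchable} (circuits $ff\cnv$ are latchable) combined with Lemma \ref{prop:latchableCircuitsCommuteIdempotent} (latchable circuits commute). The only addition in your write-up is the explicit remark about the common domain, which is a harmless clarification.
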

\begin{proof}  This is immediate from Proposition \ref{prop:symmetricLatchable} and Lemma \ref{prop:latchableCircuitsCommuteIdempotent}.
\end{proof}

It remains to prove that \ref{INV:2} holds which we prove by induction. 

\begin{lemma} 
\label{prop:inv3impinv2} \ref{INV:2} holds in $\CNOT$ under the functor $(\_)\cnv: \CNOT\op \rightarrow \CNOT$.
\end{lemma}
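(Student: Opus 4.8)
The plan is to proceed by structural induction on $c$, exploiting that $\CNOT$ is freely generated as a symmetric monoidal category by $\cnot$, $\onein$ and $\oneout$: every circuit is built from these generators, together with identities and swaps, by sequential composition and by tensoring. Since $(\_)\cnv$ is a contravariant, identity-on-objects functor that also preserves $\otimes$ (it merely flips circuits horizontally, so $(fg)\cnv = g\cnv f\cnv$ and $(f\otimes g)\cnv = f\cnv\otimes g\cnv$), it suffices to verify \ref{INV:2} on the generators and then show that the maps satisfying \ref{INV:2} are closed under $\circ$ and $\otimes$.

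For the base cases: any permutation $p$ (built from identities and swaps) is a genuine isomorphism with $pp\cnv = 1 = p\cnv p$, so $pp\cnv p = p$. For $\cnot$ we have $\cnot\cnv = \cnot$ and, by \ref{CNTii}, $\cnot\,\cnot = 1$, whence $\cnot\,\cnot\cnv\,\cnot = \cnot$. For $\onein$ we use $\onein\cnv = \oneout$ together with $\onein\,\oneout = 1_0$ from \ref{CNTvi}, giving $\onein\,\onein\cnv\,\onein = (\onein\,\oneout)\,\onein = \onein$; the case of $\oneout$ is dual. Closure under $\otimes$ is then routine: using the interchange law and the inductive hypothesis,
\[
(f\otimes g)(f\otimes g)\cnv(f\otimes g) = (f f\cnv f)\otimes(g g\cnv g) = f\otimes g .
\]

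The interesting case, and the main obstacle, is sequential composition. Here I would compute, for $f:a\to b$ and $g:b\to c$,
\[
(fg)(fg)\cnv(fg) = f g g\cnv f\cnv f g = f\,(gg\cnv)\,(f\cnv f)\,g .
\]
Both $gg\cnv$ and $f\cnv f = f\cnv(f\cnv)\cnv$ (using \ref{INV:1}) are restriction idempotents of the form $hh\cnv$ on the object $b$, so by Proposition \ref{cor:symmetricCommute} — that is, by \ref{INV:3} — they commute. Commuting them and regrouping yields
\[
f\,(f\cnv f)\,(gg\cnv)\,g = (f f\cnv f)(g g\cnv g) = fg
\]
by the inductive hypothesis. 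This completes the induction, establishing \ref{INV:2} for every circuit. The crux is exactly this final manipulation: the already-established commutativity of symmetric idempotents \ref{INV:3} is precisely what lets us slide the two ``middle'' idempotents past one another so that the inductive hypothesis applies to $f$ and $g$ separately; without it, the bare contravariance of $(\_)\cnv$ would not suffice.
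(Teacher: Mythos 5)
Your proof is correct, and it reaches the result by a genuinely different decomposition than the paper's. The paper inducts on circuit size, peeling a single leading gate off $f$ and case-splitting on whether that gate is $\oneout$, $\onein$, or $\cnot$; the middle of $ff\cnv f$ is then cancelled by gate-specific identities (\ref{CNTii} for $\cnot$, \ref{CNTvi} for the ancill\ae), with \ref{INV:3} invoked only in the $\onein$ case, where the obstruction $\oneout\onein = \oneout\oneout\cnv$ is a genuine idempotent that must be slid past $gg\cnv$ before \ref{CNTvi} can fire. You instead run a structural induction over the monoidal presentation and handle arbitrary composites uniformly: in $(fg)(fg)\cnv(fg) = f(gg\cnv)(f\cnv f)g$ you recognize both middle factors as symmetric idempotents (using \ref{INV:1} to write $f\cnv f = f\cnv(f\cnv)\cnv$) and commute them via Proposition \ref{cor:symmetricCommute}, after which the inductive hypothesis applies to $f$ and $g$ separately. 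What your route buys is modularity: the only $\CNOT$-specific computations are the generator base cases ($\cnot\,\cnot = 1$ by \ref{CNTii}, $\onein\oneout = 1_0$ by \ref{CNTvi}), while the composition step is a purely formal consequence of \ref{INV:1} and \ref{INV:3} that would hold in any such involutive setting; it also sidesteps the paper's implicit appeal to permutation-conjugation to justify assuming the leading gate acts on the top wires. What the paper's route buys is economy in the use of \ref{INV:3} (a single case) at the price of a three-way case analysis. Both arguments correctly depend on \ref{INV:3} having been established first, which the paper does as Proposition \ref{cor:symmetricCommute}.
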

\begin{proof}
 We will prove \ref{INV:2} holds under the functor $(\_)\cnv$ by induction on the size of circuits.
\begin{itemize}
\item Take $p:n\to n$ to be a circuit with $|p|=0$.  Then ${pp\cnv p = p}$, since $p$ is a permutation.

\item Suppose inductively that \ref{INV:2} holds for all circuits with up to size strictly lesser than $k$.  Consider an arbitrary circuit $f:n\to m$ such that $|f|=k$.  We proceed by cases:

\begin{itemize}
\item Suppose that $f=(\oneout\otimes 1_{n-1}) g$.  Then:

\begin{flalign*}
&\begin{tikzpicture}[baseline={([yshift=-.5ex]current bounding box.center)}]
\begin{pgfonlayer}{nodelayer}
\node [style=rn] (0) at (0, 0) {};
\draw[fill=white] (1,0) circle [radius=.25] node (1) {$f$};
\draw[fill=white] (2,0) circle [radius=.25] node (2) {$f\cnv$};
\draw[fill=white] (3,0) circle [radius=.25] node (3) {$f$};
\node [style=rn] (4) at (4, 0) {};
\end{pgfonlayer}
\begin{pgfonlayer}{edgelayer}
\draw [style=wires]  (0) -- (1) node[pos=.4, above] {$n$};
\draw [style=wires]  (1) -- (2) node[pos=.4, above] {$m$};
\draw [style=wires]  (2) -- (3) node[pos=.4, above] {$n$};
\draw [style=wires]  (3) -- (4) node[pos=.4, above] {$m$};
\end{pgfonlayer}
\end{tikzpicture}\\
&=
\begin{tikzpicture}[baseline={([yshift=-.5ex]current bounding box.center)}]
\begin{pgfonlayer}{nodelayer}
\node [style=rn] (0) at (0, 0) {};
\node [style=oneout] (1) at (1, 0) {};
\draw[fill=white] (1.5,.5) circle [radius=.4] node (2) {$gg\cnv$};
\node [style=onein] (3) at (2, 0) {};
\node [style=oneout] (4) at (2.5, 0) {};
\draw[fill=white] (3,.5) circle [radius=.25] node (5) {$g$};
\node [style=rn] (6) at (4, 0.5) {};
\node [style=rn] (10) at (0, .5) {};
\end{pgfonlayer}
\begin{pgfonlayer}{edgelayer}
\draw [style=simple] (0) to (1);
\draw [style=simple] (3) to (4);
\draw [style=wires]  (10) -- (2) node[pos=.4, above] {$n-1$};
\draw [style=wires]  (2) -- (5) node[pos=.4, above] {$n-1$};
\draw [style=wires]  (5) -- (6) node[pos=.4, above] {$m$};
\end{pgfonlayer}
\end{tikzpicture} & \text{By supposition}\\
&=
\begin{tikzpicture}[baseline={([yshift=-.5ex]current bounding box.center)}]
\begin{pgfonlayer}{nodelayer}
\node [style=rn] (0) at (0, 0) {};
\node [style=oneout] (1) at (1, 0) {};
\draw[fill=white] (1.5,.5) circle [radius=.4] node (2) {$gg\cnv$};
\draw[fill=white] (3,.5) circle [radius=.25] node (5) {$g$};
\node [style=rn] (6) at (4, 0.5) {};
\node [style=rn] (10) at (0, .5) {};
\end{pgfonlayer}
\begin{pgfonlayer}{edgelayer}
\draw [style=simple] (0) to (1);
\draw [style=wires]  (10) -- (2) node[pos=.4, above] {$n-1$};
\draw [style=wires]  (2) -- (5) node[pos=.4, above] {$n-1$};
\draw [style=wires]  (5) -- (6) node[pos=.4, above] {$m$};
\end{pgfonlayer}
\end{tikzpicture} & \ref{CNTvi}\\
&=
\begin{tikzpicture}[baseline={([yshift=-.5ex]current bounding box.center)}]
\begin{pgfonlayer}{nodelayer}
\node [style=rn] (0) at (0, 0) {};
\node [style=oneout] (1) at (1, 0) {};
\draw[fill=white] (1.5,.5) circle [radius=.25] node (2) {$g$};
\node [style=rn] (6) at (3, 0.5) {};
\node [style=rn] (10) at (0, .5) {};
\end{pgfonlayer}
\begin{pgfonlayer}{edgelayer}
\draw [style=simple] (0) to (1);
\draw [style=wires]  (10) -- (2) node[pos=.4, above] {$n-1$};
\draw [style=wires]  (2) -- (6) node[pos=.4, above] {$m$};
\end{pgfonlayer}
\end{tikzpicture} & \text{By the inductive hypothesis}\\
&=
\begin{tikzpicture}[baseline={([yshift=-.5ex]current bounding box.center)}]
\begin{pgfonlayer}{nodelayer}
\draw[fill=white] (1.5,.5) circle [radius=.25] node (2) {$f$};
\node [style=rn] (6) at (3, 0.5) {};
\node [style=rn] (10) at (0, .5) {};
\end{pgfonlayer}
\begin{pgfonlayer}{edgelayer}
\draw [style=wires]  (10) -- (2) node[pos=.4, above] {$n$};
\draw [style=wires]  (2) -- (6) node[pos=.4, above] {$m$};
\end{pgfonlayer}
\end{tikzpicture} & \text{By supposition}\\
\end{flalign*}

\item
Suppose that $f=(\onein \otimes 1_{n}) g$.  Then:

\begin{flalign*}
&\begin{tikzpicture}[baseline={([yshift=-.5ex]current bounding box.center)}]
\begin{pgfonlayer}{nodelayer}
\node [style=rn] (0) at (0, 0) {};
\draw[fill=white] (1,0) circle [radius=.25] node (1) {$f$};
\draw[fill=white] (2,0) circle [radius=.25] node (2) {$f\cnv$};
\draw[fill=white] (3,0) circle [radius=.25] node (3) {$f$};
\node [style=rn] (4) at (4, 0) {};
\end{pgfonlayer}
\begin{pgfonlayer}{edgelayer}
\draw [style=wires]  (0) -- (1) node[pos=.4, above] {$n$};
\draw [style=wires]  (1) -- (2) node[pos=.4, above] {$m$};
\draw [style=wires]  (2) -- (3) node[pos=.4, above] {$n$};
\draw [style=wires]  (3) -- (4) node[pos=.4, above] {$m$};
\end{pgfonlayer}
\end{tikzpicture}\\
&=
\begin{tikzpicture}[baseline={([yshift=-.5ex]current bounding box.center)}]
\begin{pgfonlayer}{nodelayer}
\node [style=onein] (1) at (0.25, 0) {};
\draw[fill=white] (1.5,0.5) circle [radius=.4] node (2) {$gg\cnv$};
\node [style=oneout] (3) at (2.5, 0) {};
\node [style=onein] (4) at (3, 0) {};
\draw[fill=white] (4,0.5) circle [radius=.25] node (5) {$g$};
\node [style=rn] (6) at (5.5, 0.5) {};
\node [style=rn] (10) at (0, .5) {};
\end{pgfonlayer}
\begin{pgfonlayer}{edgelayer}
\draw [style=simple, bend right] (1) to (2);
\draw [style=simple, bend right] (2) to (3);
\draw [style=simple, bend right] (4) to (5);
\draw [style=wires]  (10) -- (2) node[pos=.4, above] {$n-1$};
\draw [style=wires]  (2) -- (5) node[pos=.4, above] {$n-1$};
\draw [style=wires]  (5) -- (6) node[pos=.4, above] {$m$};
\end{pgfonlayer}
\end{tikzpicture} & \text{By supposition}\\
&=
\begin{tikzpicture}[baseline={([yshift=-.5ex]current bounding box.center)}]
\begin{pgfonlayer}{nodelayer}
\draw[fill=white] (2,0.5) circle [radius=.4] node (2) {$gg\cnv g$};
\node [style=rn] (6) at (4, 0.5) {};
\node [style=rn] (10) at (0, .5) {};
\node [style=onein] (20) at (0, 0) {};
\node [style=oneout] (21) at (.5, 0) {};
\node [style=onein] (22) at (1, 0) {};
\end{pgfonlayer}
\begin{pgfonlayer}{edgelayer}
\draw [style=wires]  (10) -- (2) node[pos=.4, above] {$n-1$};
\draw [style=wires]  (2) -- (6) node[pos=.4, above] {$m$};
\draw [style=simple] (20) to (21);
\draw [style=simple, bend right] (22) to (2);
\end{pgfonlayer}
\end{tikzpicture} & \ref{INV:3}\\
&=
\begin{tikzpicture}[baseline={([yshift=-.5ex]current bounding box.center)}]
\begin{pgfonlayer}{nodelayer}
\draw[fill=white] (2,0.5) circle [radius=.4] node (2) {$gg\cnv g$};
\node [style=rn] (6) at (4, 0.5) {};
\node [style=rn] (10) at (0, .5) {};
\node [style=onein] (22) at (1, 0) {};
\end{pgfonlayer}
\begin{pgfonlayer}{edgelayer}
\draw [style=wires]  (10) -- (2) node[pos=.4, above] {$n-1$};
\draw [style=wires]  (2) -- (6) node[pos=.4, above] {$m$};
\draw [style=simple, bend right] (22) to (2);
\end{pgfonlayer}
\end{tikzpicture} & \ref{CNTvi}\\
&=
\begin{tikzpicture}[baseline={([yshift=-.5ex]current bounding box.center)}]
\begin{pgfonlayer}{nodelayer}
\draw[fill=white] (2,0.5) circle [radius=.25] node (2) {$g$};
\node [style=rn] (6) at (4, 0.5) {};
\node [style=rn] (10) at (0, .5) {};
\node [style=onein] (22) at (1, 0) {};
\end{pgfonlayer}
\begin{pgfonlayer}{edgelayer}
\draw [style=wires]  (10) -- (2) node[pos=.4, above] {$n-1$};
\draw [style=wires]  (2) -- (6) node[pos=.4, above] {$m$};
\draw [style=simple, bend right] (22) to (2);
\end{pgfonlayer}
\end{tikzpicture}& \text{By the inductive hypothesis}\\
&=
\begin{tikzpicture}[baseline={([yshift=-.5ex]current bounding box.center)}]
\begin{pgfonlayer}{nodelayer}
\draw[fill=white] (1,0) circle [radius=.25] node (2) {$f$};
\node [style=rn] (6) at (0, 0) {};
\node [style=rn] (10) at (2, 0) {};
\end{pgfonlayer}
\begin{pgfonlayer}{edgelayer}
\draw [style=wires]  (10) -- (2) node[pos=.4, above] {$m$};
\draw [style=wires]  (2) -- (6) node[pos=.4, above] {$n$};
\end{pgfonlayer}
\end{tikzpicture}& \text{By supposition}\\
\end{flalign*}

\item
Suppose that $f=(\cnot \otimes 1_{n-2} )g$.    Then:
\begin{flalign*}
&\begin{tikzpicture}[baseline={([yshift=-.5ex]current bounding box.center)}]
\begin{pgfonlayer}{nodelayer}
\node [style=rn] (0) at (0, 0) {};
\draw[fill=white] (1,0) circle [radius=.25] node (1) {$f$};
\draw[fill=white] (2,0) circle [radius=.25] node (2) {$f\cnv$};
\draw[fill=white] (3,0) circle [radius=.25] node (3) {$f$};
\node [style=rn] (4) at (4, 0) {};
\end{pgfonlayer}
\begin{pgfonlayer}{edgelayer}
\draw [style=wires]  (0) -- (1) node[pos=.4, above] {$n$};
\draw [style=wires]  (1) -- (2) node[pos=.4, above] {$m$};
\draw [style=wires]  (2) -- (3) node[pos=.4, above] {$n$};
\draw [style=wires]  (3) -- (4) node[pos=.4, above] {$m$};
\end{pgfonlayer}
\end{tikzpicture}\\
&=
\begin{tikzpicture}[baseline={([yshift=-.5ex]current bounding box.center)}]
\begin{pgfonlayer}{nodelayer}
\node [style=rn] (0) at (0, 0) {};
\node [style=rn] (20) at (0, -.5) {};
\node [style=oplus] (1) at (0.25, 0) {};
\node [style=dot] (21) at (0.25, -.5) {};
\draw[fill=white] (1.5,0.5) circle [radius=.4] node (2) {$gg\cnv$};
\node [style=oplus] (3) at (2.5, 0) {};
\node [style=dot] (23) at (2.5, -.5) {};
\node [style=oplus] (4) at (3, 0) {};
\node [style=dot] (24) at (3, -.5) {};
\draw[fill=white] (4,0.5) circle [radius=.25] node (5) {$g$};
\node [style=rn] (6) at (5.5, 0.5) {};
\node [style=rn] (10) at (0, .5) {};
\end{pgfonlayer}
\begin{pgfonlayer}{edgelayer}
\draw [style=simple, bend right] (1) to (2);
\draw [style=simple, bend right] (2) to (3);
\draw [style=simple, bend right] (4) to (5);
\draw [style=wires]  (10) -- (2) node[pos=.4, above] {$n-2$};
\draw [style=wires]  (2) -- (5) node[pos=.4, above] {$n-2$};
\draw [style=wires]  (5) -- (6) node[pos=.4, above] {$m$};
\draw [style=simple, bend right] (21) to (2);
\draw [style=simple, bend right] (2) to (23);
\draw [style=simple, bend right] (24) to (5);
\draw [style=simple] (0) to (1);
\draw [style=simple] (20) to (21);
\draw [style=simple] (1) to (21);
\draw [style=simple] (3) to (23);
\draw [style=simple] (4) to (24);
\draw [style=simple] (3) to (4);
\draw [style=simple] (23) to (24);
\end{pgfonlayer}
\end{tikzpicture} & \text{By supposition}\\
&=
\begin{tikzpicture}[baseline={([yshift=-.5ex]current bounding box.center)}]
\begin{pgfonlayer}{nodelayer}
\node [style=rn] (0) at (0, 0) {};
\node [style=rn] (20) at (0, -.5) {};
\node [style=oplus] (1) at (0.25, 0) {};
\node [style=dot] (21) at (0.25, -.5) {};
\draw[fill=white] (1.5,0.5) circle [radius=.45] node (2) {$gg\cnv g$};
\node [style=rn] (6) at (3, 0.5) {};
\node [style=rn] (10) at (0, .5) {};
\end{pgfonlayer}
\begin{pgfonlayer}{edgelayer}
\draw [style=simple, bend right] (1) to (2);
\draw [style=wires]  (10) -- (2) node[pos=.4, above] {$n-2$};
\draw [style=simple, bend right] (21) to (2);
\draw [style=simple] (0) to (1);
\draw [style=simple] (20) to (21);
\draw [style=simple] (1) to (21);
\draw [style=wires]  (2) -- (6) node[pos=.4, above] {$m$};
\end{pgfonlayer}
\end{tikzpicture} & \ref{CNTii}\\
&=
\begin{tikzpicture}[baseline={([yshift=-.5ex]current bounding box.center)}]
\begin{pgfonlayer}{nodelayer}
\node [style=rn] (0) at (0, 0) {};
\node [style=rn] (20) at (0, -.5) {};
\node [style=oplus] (1) at (0.25, 0) {};
\node [style=dot] (21) at (0.25, -.5) {};
\draw[fill=white] (1.5,0.5) circle [radius=.25] node (2) {$g$};
\node [style=rn] (6) at (3, 0.5) {};
\node [style=rn] (10) at (0, .5) {};
\end{pgfonlayer}
\begin{pgfonlayer}{edgelayer}
\draw [style=simple, bend right] (1) to (2);
\draw [style=wires]  (10) -- (2) node[pos=.4, above] {$n-2$};
\draw [style=simple, bend right] (21) to (2);
\draw [style=simple] (0) to (1);
\draw [style=simple] (20) to (21);
\draw [style=simple] (1) to (21);
\draw [style=wires]  (2) -- (6) node[pos=.4, above] {$m$};
\end{pgfonlayer}
\end{tikzpicture}& \text{By the inductive hypothesis}\\
&=
\begin{tikzpicture}[baseline={([yshift=-.5ex]current bounding box.center)}]
\begin{pgfonlayer}{nodelayer}
\draw[fill=white] (1,0) circle [radius=.25] node (2) {$f$};
\node [style=rn] (6) at (0, 0) {};
\node [style=rn] (10) at (2, 0) {};
\end{pgfonlayer}
\begin{pgfonlayer}{edgelayer}
\draw [style=wires]  (10) -- (2) node[pos=.4, above] {$m$};
\draw [style=wires]  (2) -- (6) node[pos=.4, above] {$n$};
\end{pgfonlayer}
\end{tikzpicture} & \text{By supposition}\\
\end{flalign*}

\end{itemize}

Therefore, the inductive claim holds.
\end{itemize}
\end{proof}

\begin{theorem}{$\CNOT$ is a discrete inverse category.}
\label{thm:CNOTINV}
\end{theorem}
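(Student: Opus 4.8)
The plan is to verify the two characterizations already set up in the excerpt: Theorem~\ref{defn:inverseCategory} reduces ``inverse category'' to the existence of an identity-on-objects involution satisfying \ref{INV:1}--\ref{INV:3}, and the inverse-product axioms \ref{DNV:1}--\ref{DNV:4} reduce ``inverse products'' to four properties of a total natural diagonal $\Delta$. Since essentially all of the hard work has been carried out in the preceding lemmas, the proof of this theorem amounts to assembling those results and checking that the candidate structures fit the definitions.

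First I would treat the inverse-category structure. The horizontal-flip functor $(\_)\cnv:\CNOT\op\to\CNOT$ is the identity on objects, and flipping a circuit twice returns it unchanged, so \ref{INV:1} is immediate. Axiom \ref{INV:2}, namely $cc\cnv c=c$, is exactly the content of Lemma~\ref{prop:inv3impinv2}, proved there by induction on circuit size. Axiom \ref{INV:3}, that idempotents of the form $cc\cnv$ commute, is Proposition~\ref{cor:symmetricCommute}, which follows from the latchability of every $ff\cnv$ (Proposition~\ref{prop:symmetricLatchable}) together with the fact that latchable circuits commute (Lemma~\ref{prop:latchableCircuitsCommuteIdempotent}). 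Invoking Theorem~\ref{defn:inverseCategory} then yields that $\CNOT$ is an inverse category, with restriction given by $\bar c=cc\cnv$.

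Next I would verify the inverse products. The tensor of $\CNOT$ preserves $(\_)\cnv$ by construction (flipping a tensor is the tensor of the flips), and $\Delta\cnv=\nabla$, so $\Delta$ together with its partial inverse $\nabla$ supplies the required diagonal transformation. Naturality of $\Delta$ is Lemma~\ref{lem:fanoutNatural}; totality follows from separability $\Delta_n\nabla_n=1_n$ (Lemma~\ref{lemma:fanoutFaninInverses}), since the excerpt records that $\Delta$ is total precisely when $\Delta\Delta\cnv=1$. The four axioms then read off directly: \ref{DNV:1} is cocommutativity (Lemma~\ref{lemma:fanoutCommutative}), \ref{DNV:2} is coassociativity (Lemma~\ref{lemma:fanoutAssociative}), \ref{DNV:3} is the semi-Frobenius law (Lemma~\ref{lemma:fanoutSemiFrobenius}), and \ref{DNV:4} is the uniform copying law (Lemma~\ref{lemma:fanoutCanonical}). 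Hence $\CNOT$ has inverse products, and being an inverse category equipped with inverse products it is, by definition, a discrete inverse category.

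The genuine difficulty lies not in this final assembly but in the lemmas on which it rests. In particular, the main obstacle was establishing \ref{INV:2}: this required first identifying the restriction idempotents $ff\cnv$ with latchable circuits and then showing, by induction on the size of $f$, that latchability is preserved when a generator $\onein$, $\oneout$, or $\cnot$ is prepended (Proposition~\ref{prop:symmetricLatchable}, using the auxiliary identity of Lemma~\ref{lem:symmetricCircuitsLatchable}). I expect the structural-induction naturality argument and the diagrammatic manipulations behind the semi-Frobenius law to be the most calculation-heavy steps, though conceptually unproblematic once the inductive scheme and the degenerate-circuit lemmas are in place.
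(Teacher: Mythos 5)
Your proposal is correct and follows essentially the same route as the paper's own proof: \ref{INV:1} by construction, \ref{INV:2} via Lemma~\ref{prop:inv3impinv2}, \ref{INV:3} via Proposition~\ref{cor:symmetricCommute}, and the inverse-product axioms assembled from Lemmas~\ref{lem:fanoutNatural}, \ref{lemma:fanoutCommutative}, \ref{lemma:fanoutAssociative}, \ref{lemma:fanoutSemiFrobenius}, and \ref{lemma:fanoutCanonical}. Your explicit appeal to separability (Lemma~\ref{lemma:fanoutFaninInverses}) to justify totality of $\Delta$ is a small point the paper's proof leaves implicit, but it is not a different argument.
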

\begin{proof}
The functor $(\_)\cnv:\CNOT\op \rightarrow \CNOT$ which flips circuits horizontally satsifies \ref{INV:1} by construction. It satisfies \ref{INV:2} by Lemma \ref{prop:inv3impinv2} and \ref{INV:3} is satisfied by Proposition \ref{cor:symmetricCommute}. Hence, $\CNOT$ is an inverse category.

$\CNOT$ is equipped with a tensor product ${\_\otimes\_:\CNOT\times\CNOT\to\CNOT}$.  $\Delta$ is a natural transformation by Lemma \ref{lem:fanoutNatural}, which is cocommutative by Lemma \ref{lemma:fanoutCommutative}, coassociative by Lemma \ref{lemma:fanoutAssociative}, a semi-Frobenius algebra by Lemma \ref{lemma:fanoutSemiFrobenius}, and it satisfies \ref{DNV:4} by Lemma \ref{lemma:fanoutCanonical}, where all the dual propositions hold by symmtery. This proves that $\CNOT$ has inverse products. Hence, $\CNOT$ is a discrete inverse category.
\end{proof}
\section{The Equivalence \texorpdfstring{$\CNOT \cong \ParIso(\CTor_2)^{*}$}{CNOT=ParIso(CTor2)}}
\label{Appendix C}
The objective of this appendix is to prove that $\CNOT$ and $\ParIso(\CTor_2)^{*}$ are equivalent. The proof involves in five steps:
\begin{description}
\item{(1)} Defining a functor $\tilde H_0: \CNOT \to \ParIso(\CTor_2)^{*}$, which preserves inverse products.
\item{(2)} Showing that $\tilde H_0$ is full and faithful on restriction idempotents.
\item{(3)} Showing that $\tilde H_0$ is essentially surjective.
\item{(4)} Showing that $\tilde H_0$ is full.
\item{(5)} Showing that $\tilde H_0$ is faithful.
\end{description}
The key technical steps ((4) and (5) above) are to reduce the full and faithfulness of $\tilde H_0$ to its full and faithfulness on restriction idempotents (step (2) above).   This latter result is based on the clausal normal form for restriction idempotents in $\CNOT$, which is developed in Section \ref{C.2}.  


\subsection{Defining the functor \texorpdfstring{$\tilde H_0: \CNOT \to \ParIso(\CTor_2)^{*}$}{H0:CNOT -> ParIso(CTor2)*}}


To construct a functor $\tilde H_0: \CNOT \rightarrow \ParIso(\CTor_2)^{*}$ we consider the following pullback, where $U:\CTor_2 \to \Sets$ is the underlying functor:

\[
\xymatrix{
\ParIso(\CTor_2)^* \ar[rr]^{\ParIso(U)} \ar@{^{(}->}[d] && \ParIso(\mathsf{Set}) \ar@{^{(}->}[d] \\
\Par(\CTor_2)^* \ar[rr]_{\Par(U)} && \Par(\mathsf{Set})
}
\]

To prove that $\CNOT$ is equivalent to $\ParIso(\CTor_2)^*$, the category of partial isomorphisms of finitely generated non-empty commutative torsors of characteristic $2$, we start by considering a functor $h_0: \CNOT \rightarrow \Par(\Sets)$ and on the one hand lift to a functor $H_0: \CNOT \rightarrow \Par(\CTor_2)^*$ and on the other hand lift to a functor $\tilde h_0:
\CNOT\to\ParIso(\Sets)$.  Then by the pullback of the diagram $\Par(\CTor_2)^* \xrightarrow{\Par(U)} \Par(\Sets) \hookleftarrow \ParIso(\Sets)$ we are given a unique functor $\tilde H_0: \CNOT \rightarrow \ParIso(\CTor_2)^*$:


\[
\xymatrix{
\CNOT \ar@/^1pc/[ddrrr]|>>>>>>>>>>>>>>>\hole^<<<<<{h_0} \ar@/^3pc/[drrr]^{\tilde h_0} \ar@/^-2pc/[ddr]_{H_0}  \ar@{.>}@/^-1pc/[dr]^{\tilde H_0} & & & & \\
 & \ParIso(\CTor_2)^* \ar[rr]^{\ParIso(U)} \ar@{^{(}->}[d] && \ParIso(\mathsf{Set}) \ar@{^{(}->}[d] \\
& \Par(\CTor_2)^* \ar[rr]_{\Par(U)} && \Par(\mathsf{Set})
}
\]

The functor $h_0: \CNOT \rightarrow \Par(\Sets)$ which we shall describe is a restriction hom-functor so we first prove a general result about such functors.
Given a restriction category  $\mathbb{X}$ and any $X \in \mathbb{X}$, define the following map $h_X := \Total(\mathbb{X})(X,\_): \mathbb{X} \to \Par(\Sets)$ as follows:
\begin{description}
\item[On objects: ] For each object $Y \in \mathbb{X}$,  $h_X(Y) := \{f \in \mathbb{X}(x,y)| \bar{f}=1_x \}$;
\item[On maps: ] For each map $Y \xrightarrow{f} Z$ in $\mathbb{X}$, for all $g \in h_X(Y)$,

$$(h_X(f))(g)  :=
\begin{cases}
gf & \text{ if } \bar{gf}=1_X\\
\uparrow & \text{ otherwise }
\end{cases}$$
\end{description}

\begin{lemma} 
\label{lemma:restrictionfunctorgeneral} 
 \label{remark:restrictionfunctor}
 $h_X: \X \to \Par(\Sets)$ is a restriction functor.
\end{lemma}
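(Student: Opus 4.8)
The plan is to verify the three requirements packaged in the phrase ``restriction functor'': that $h_X$ is well-defined (each $h_X(f)$ is genuinely a partial map of sets), that it is functorial, and that it preserves restrictions, i.e.\ $h_X(\bar f) = \overline{h_X(f)}$. Throughout I would lean on two consequences of the restriction axioms which I would record at the outset: axiom \ref{R.4} in the form $g\bar f = \overline{gf}\,g$, and the standard monotonicity fact $\overline{fg} \leq \bar f$ (so that $\overline{gf} = 1_X$ is forced whenever $\overline{gff'} = 1_X$). I also use that in $\Par(\Sets)$ the restriction of a partial map is the partial identity on its domain of definition, and that $1_X$ is the top restriction idempotent on $X$.

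First, well-definedness. For $g \in h_X(Y)$, so $\bar g = 1_X$, the prescription sends $g$ to $gf$ exactly when $\overline{gf} = 1_X$; in that case $gf$ is total, hence lies in $h_X(Z)$. Thus $h_X(f)$ is a bona fide partial function $h_X(Y) \rightharpoonup h_X(Z)$ with domain of definition $\{g \in h_X(Y) \mid \overline{gf} = 1_X\}$. Functoriality is next. Preservation of identities is immediate, since $h_X(1_Y)(g) = g\,1_Y = g$ with side condition $\overline{g\,1_Y} = \bar g = 1_X$ valid for every $g \in h_X(Y)$, so $h_X(1_Y) = 1_{h_X(Y)}$. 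For composition along $Y \xrightarrow{f} Z \xrightarrow{f'} W$, the two partial functions $h_X(ff')$ and $h_X(f)h_X(f')$ agree in value wherever both are defined, by associativity; the only real point is that their domains of definition coincide. That reduces to the equivalence that $\overline{gff'} = 1_X$ holds if and only if both $\overline{gf} = 1_X$ and $\overline{(gf)f'} = 1_X$ hold. The backward direction is immediate from associativity, and the forward direction uses $\overline{(gf)f'} \leq \overline{gf}$ to deduce $\overline{gf} = 1_X$ from $\overline{gff'} = 1_X$.

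Finally, restriction preservation. On one side, $\overline{h_X(f)}$ is the partial identity on $\{g \mid \overline{gf} = 1_X\}$. On the other side, $h_X(\bar f)(g) = g\bar f$, and by \ref{R.4} this equals $\overline{gf}\,g$; since $\bar g = 1_X$, a short computation using \ref{R.3} gives $\overline{g\bar f} = \overline{gf}$, so $h_X(\bar f)$ is defined on precisely the same set, and there $\overline{gf} = 1_X$ forces $g\bar f = \overline{gf}\,g = g$. Hence $h_X(\bar f)$ is the partial identity on $\{g \mid \overline{gf} = 1_X\}$, which is exactly $\overline{h_X(f)}$.

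The argument is mostly bookkeeping of domains of definition in $\Par(\Sets)$; the single genuinely restriction-theoretic computation is $\overline{g\bar f} = \overline{gf}$ for total $g$, and I expect this to be the crux. I would derive it from \ref{R.4} (to rewrite $g\bar f$ as $\overline{gf}\,g$) together with \ref{R.3} (to evaluate the restriction of a restriction idempotent composed with a total map), which is where care with the axioms is needed.
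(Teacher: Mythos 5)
Your proposal is correct and follows essentially the same route as the paper's proof: verifying preservation of identities, of composition (by checking that the domains of definition agree, which is exactly the paper's case analysis on whether $\overline{hfg}=1_X$), and of restriction via the computation $g\bar f = \overline{gf}\,g = g$ from \ref{R.4} when $\overline{gf}=1_X$. Your explicit isolation of the monotonicity fact $\overline{fg}\leq\bar f$ and of the identity $\overline{g\bar f}=\overline{gf}$ for total $g$ merely makes precise what the paper uses implicitly.
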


\begin{proof}
       To prove $h_X$ is a restriction functor is to prove $h_X$ preserves identities, composition and restriction structure.
\begin{itemize}
\item
First, we prove that $h_X$ preserves identities.  Take any object $Y \in \mathbb{X}$ and any map $f\in h_X(Y)$.  Then, $(h_X(1_Y))(f) = f1_Y=f$ as $\bar{f1_Y}=\bar{f} = 1_X$.
\item
Next we prove that $h_X$ preserves composition.  Consider arbitrary maps $Y \xrightarrow{f} Z \xrightarrow{g} W$ and an $h \in h_X(Y)$.  

Suppose that $\bar{hfg}=1_X$, then $(h_X(fg))(h) = hfg$ and $\bar{hf}=1_X$, then 
$$h_X(g)((h_X(f))(h)) = (h_X(g))(hf)= hfg.$$

On the otherhand, suppose that $\bar{hfg}\neq 1_X$ then
$$h_X(g)((h_X(f))(h)) = h_X(g)(\uparrow) =\uparrow$$
If $\bar{hf} = 1_X$, then 
$$h_X(g)(h_X(f)(h)) = h_X(g)(hf) = \uparrow.$$
Therefore, $h_X$ preserves composition.

\item
Finally, we prove that $h_X$ preserves restriction.

For any $f:Y\to Z$ and any $g \in h_X(Y)$:
$$(h_X(\bar{f}))(g)=
\begin{cases}
g\bar{f} & \text{ if } \bar{g\bar{f}}=\bar{gf}=1_X\\
\uparrow & \text{ otherwise }
\end{cases}$$

However,

$$
\bar{h_X(f)}(g)=
\begin{cases}
g & \text{ if } (h_X(f))(g)\downarrow\\
\uparrow & \text{ otherwise }
\end{cases}
$$

But, $(h_X(f))(g)$ if and only if $\bar{gf} =1_X$ ; moreover if $\bar{gf} =1_X$, then $g\bar{f} = \bar{gf}g =1_Xg =g$. Therefore, $h_X$ preserves restriction.
\end{itemize}
\end{proof}

Fixing $\mathbb{X}=\CNOT$ and $X=0$, we obtain a functor $h_0:= \Total(\CNOT)(0,\_): \CNOT \to \Par(\Sets)$.  As $\CNOT$ is an inverse category, it follows by Lemma \ref{lemma:restrictionfunctorgeneral} that every map in $h_0(\CNOT)$ is a partial isomorphism.  Therefore $h_0:\CNOT \to \Par(\Sets)$ factors through $\ParIso(\Sets)$ as $\tilde h_0: \CNOT \to \ParIso(\Sets)$.


Now that we have the candidate functor $\tilde h_0:\CNOT \to \ParIso(\Sets)$ for the pullback, we must also show that we can factor $h_0$ through $\Par(U): \Par(\CTor_2)^* \to\Par(\Sets)$.  To do so, we show that the objects of $\CNOT$ have an internal torsor structure and show that $h_0$ preserves this structure:  thus showing it can be factored through $\Par(\CTor_2)^*$.  This will allow us to construct the functor $H_0: \CNOT\to \Par(\CTor_2)^*$ and whence $\tilde H_0$.

Proving that $h_0$ is a functor from $\CNOT$ to $\Par(\CTor_2)^*$ is not such a trivial task.  To this end, we make several observations regarding maps in the homset $\CNOT(0,n)$ for any $n \in \mathbb{N}$.

To succinctly express total maps $0 \xrightarrow{f} n$ for any $n \in \mathbb{N}$ in $\CNOT$, we define the following family of functions.

\begin{definition}
\label{defn:hat}
Define a family of functions $\hat{\_}_n:\mathbb{Z}_2^n \to \Total(\CNOT)(0,n)$ for all $n \in \mathbb{N}$ as follows.

Take $\hat{ } := 1_{0}$. For any $b \in \mathbb{Z}_2^1$ define:

$$
\hat{b} := 
\begin{cases}
\onein & \text{ if } n =1 \\
\zeroin & \text{ otherwise }
\end{cases}
$$

Moreover, for all $n \in \mathbb{N}$ such that $n>1$, define:

$$
\hat{b_1,\cdots,b_n} := \hat{b_1} \otimes \cdots \otimes \hat{b_n}
$$
\end{definition}

\begin{lemma}
\label{lemma:behaveasexpected}
Consider a circuit $f: n \to m$ with no output ancill\ae. Then, for any $x \in \Z_2^n$, there is some $y \in \Z_2^m$ such that $\hat{x}f = \hat{y}$
\end{lemma}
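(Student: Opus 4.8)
The plan is to argue by structural induction on $f$, using the fact that a circuit with no output ancill\ae\ is built, under $\otimes$ and composition, out of the generators $\cnot$, the symmetry $c$, and the input ancill\ae\ $\onein$ and $\zeroin$ only (no $\oneout$ or $\zeroout$ appearing as a top-level gate; we regard $\zeroin$ as an input ancilla even though it is internally presented using an $\oneout$). The base cases for the ancill\ae\ are immediate: here the domain is $0$, so $x$ is the empty string and $\hat{x}=1_0$, whence $1_0\onein=\onein=\hat{1}$ and $1_0\zeroin=\zeroin=\hat{0}$. For the symmetry $c_{1,1}:1\otimes 1\to 1\otimes 1$, naturality of the symmetric monoidal structure gives $(\hat{a}\otimes\hat{b})c_{1,1}=c_{0,0}(\hat{b}\otimes\hat{a})=\hat{b}\otimes\hat{a}=\hat{b,a}$, since $c_{0,0}=1_0$.

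The two inductive steps are purely formal. If $f=g\otimes h$ with $g:n_1\to m_1$ and $h:n_2\to m_2$, split $x=(x_1,x_2)$ along $n=n_1+n_2$; then $\hat{x}=\hat{x_1}\otimes\hat{x_2}$ by Definition \ref{defn:hat}, and the interchange law gives $\hat{x}(g\otimes h)=(\hat{x_1}g)\otimes(\hat{x_2}h)$, so the inductive hypotheses $\hat{x_1}g=\hat{y_1}$ and $\hat{x_2}h=\hat{y_2}$ yield $\hat{x}(g\otimes h)=\hat{y_1,y_2}$. If $f=gh$ with $g:n\to p$ and $h:p\to m$, the inductive hypothesis applied to $g$ gives $\hat{x}g=\hat{z}$ for some $z\in\Z_2^p$, and then applied to $h$ gives $\hat{z}h=\hat{y}$; since $\hat{x}(gh)=(\hat{x}g)h=\hat{z}h$, this closes the step. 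In both cases the subcircuits again have no output ancill\ae, so the induction is well-founded.

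The main obstacle is the base case for $f=\cnot$, i.e.\ establishing the expected action on basis states $\hat{a,b}\,\cnot=\hat{a,a\oplus b}$. I would split on the control value $a$. When $a=1$, axiom \ref{CNTiv} allows the $\onein$ to be pushed through the control while the gate negates the operating bit, giving $\hat{1,b}\,\cnot=\hat{1,1\oplus b}$. When $a=0$, a short derivation in the same spirit as Lemma \ref{lemmas}\ref{lem:zerovanish} (unfolding $\zeroin$ and simplifying with \ref{CNTi}--\ref{CNTix}) shows that a $\zeroin$ on the control leaves the operating bit untouched, so $\hat{0,b}\,\cnot=\hat{0,b}=\hat{0,0\oplus b}$. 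The only genuinely delicate points are this diagrammatic computation and the bookkeeping that no decomposition of $f$ secretly reintroduces an output ancilla; once these are settled the statement follows by induction.
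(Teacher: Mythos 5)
Your strategy coincides with the paper's: reduce by structural induction to the action of a single gate on basis states, then verify the $\cnot$ case from the axioms (the paper compresses your induction scaffolding into one sentence and spends essentially all of its effort on the four possible inputs to a single $\cnot$). The formal parts of your argument are fine; the gap is in the $\cnot$ base case, where you have located the difficulty in exactly the wrong place.

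The control-$0$ case, which you propose to settle by ``a short derivation unfolding $\zeroin$'', is in fact immediate: folding the definition of $\zeroin$ into \ref{CNTvii} gives precisely $(\zeroin\otimes 1_1)\cnot=\zeroin\otimes 1_1$, which is how the paper dispatches the inputs $(0,0)$ and $(0,1)$ in a single step each. Conversely, your claim that \ref{CNTiv} alone yields $\hat{1,b}\,\cnot=\hat{1,1\oplus b}$ fails for $b=0$: \ref{CNTiv} only rewrites $(\onein\otimes\zeroin)\cnot$ as $\onein\otimes(\zeroin\,\mathsf{not})$, where $\mathsf{not}:=(\onein\otimes 1_1)\cnot(\oneout\otimes 1_1)$, and you still owe the identity $\zeroin\,\mathsf{not}=\onein$. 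That identity is exactly the nontrivial content of the lemma --- it is the paper's input-$(1,0)$ case, which takes a half-page chase through \ref{CNTviii}, \ref{CNTiv}, \ref{CNTvii}, \ref{CNTii} and \ref{CNTvi}. (For $b=1$ there is nothing to do, since $\onein\,\mathsf{not}=\zeroin$ is the definition of $\zeroin$.) A clean way to close the gap: first prove $\mathsf{not}\,\mathsf{not}=1_1$ --- by \ref{CNTiv} one has $\mathsf{not}(\onein\otimes 1_1)=(\onein\otimes 1_1)\cnot$, hence $\mathsf{not}\,\mathsf{not}=(\onein\otimes 1_1)\cnot\cnot(\oneout\otimes 1_1)=(\onein\oneout)\otimes 1_1=1_1$ by \ref{CNTii} and \ref{CNTvi} --- and then $\zeroin\,\mathsf{not}=\onein\,\mathsf{not}\,\mathsf{not}=\onein$. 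With that supplied, your induction goes through.
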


\begin{proof}
It will suffice to prove our claim for all  $b\in \mathbb{Z}_2^n$ on a single controlled-not gate, then by induction, the more general claim follows immediately.

\begin{itemize}
\item
Take $b=(0,0)$, then by \ref{CNTvii}:

\begin{align*}
\begin{tikzpicture}[baseline={([yshift=-.5ex]current bounding box.center)}]
\begin{pgfonlayer}{nodelayer}
\node [style=zeroin] (0) at (0, .5) {};
\node [style=dot] (1) at (.5, .5) {};
\node [style=rn] (2) at (1, .5) {};
\node [style=zeroin] (10) at (0, 0) {};
\node [style=oplus] (11) at (.5, 0) {};
\node [style=rn] (12) at (1, 0) {};
\end{pgfonlayer}
\begin{pgfonlayer}{edgelayer}
\draw [style=simple] (0) to (2);
\draw [style=simple] (10) to (12);
\draw [style=simple] (1) to (11);
\end{pgfonlayer}
\end{tikzpicture}
&=
\begin{tikzpicture}[baseline={([yshift=-.5ex]current bounding box.center)}]
\begin{pgfonlayer}{nodelayer}
\node [style=zeroin] (0) at (0, .5) {};
\node [style=rn] (2) at (.5, .5) {};
\node [style=zeroin] (10) at (0, 0) {};
\node [style=rn] (12) at (.5, 0) {};
\end{pgfonlayer}
\begin{pgfonlayer}{edgelayer}
\draw [style=simple] (0) to (2);
\draw [style=simple] (10) to (12);
\end{pgfonlayer}
\end{tikzpicture}
\end{align*}

\item
Take $b=(0,1)$, then by \ref{CNTvii}:

\begin{align*}
\begin{tikzpicture}[baseline={([yshift=-.5ex]current bounding box.center)}]
\begin{pgfonlayer}{nodelayer}
\node [style=zeroin] (0) at (0, .5) {};
\node [style=dot] (1) at (.5, .5) {};
\node [style=rn] (2) at (1, .5) {};
\node [style=onein] (10) at (0, 0) {};
\node [style=oplus] (11) at (.5, 0) {};
\node [style=rn] (12) at (1, 0) {};
\end{pgfonlayer}
\begin{pgfonlayer}{edgelayer}
\draw [style=simple] (0) to (2);
\draw [style=simple] (10) to (12);
\draw [style=simple] (1) to (11);
\end{pgfonlayer}
\end{tikzpicture}
&=
\begin{tikzpicture}[baseline={([yshift=-.5ex]current bounding box.center)}]
\begin{pgfonlayer}{nodelayer}
\node [style=zeroin] (0) at (0, .5) {};
\node [style=rn] (2) at (.5, .5) {};
\node [style=onein] (10) at (0, 0) {};
\node [style=rn] (12) at (.5, 0) {};
\end{pgfonlayer}
\begin{pgfonlayer}{edgelayer}
\draw [style=simple] (0) to (2);
\draw [style=simple] (10) to (12);
\end{pgfonlayer}
\end{tikzpicture}
\end{align*}

\item
Take $b=(1,0)$, then:

\begin{align*}
\begin{tikzpicture}[baseline={([yshift=-.5ex]current bounding box.center)}]
\begin{pgfonlayer}{nodelayer}
\node [style=onein] (0) at (0, .5) {};
\node [style=dot] (1) at (.5, .5) {};
\node [style=rn] (2) at (1, .5) {};
\node [style=zeroin] (10) at (0, 0) {};
\node [style=oplus] (11) at (.5, 0) {};
\node [style=rn] (12) at (1, 0) {};
\end{pgfonlayer}
\begin{pgfonlayer}{edgelayer}
\draw [style=simple] (0) to (2);
\draw [style=simple] (10) to (12);
\draw [style=simple] (1) to (11);
\end{pgfonlayer}
\end{tikzpicture}
&:=
\begin{tikzpicture}[baseline={([yshift=-.5ex]current bounding box.center)}]
\begin{pgfonlayer}{nodelayer}
\node [style=onein] (0) at (0, .5) {};
\node [style=dot] (1) at (1, .5) {};
\node [style=rn] (2) at (1.5, .5) {};
\node [style=onein] (10) at (0, 0) {};
\node [style=oplus] (13) at (.5, 0) {};
\node [style=oplus] (11) at (1, 0) {};
\node [style=rn] (12) at (1.5, 0) {};
\node [style=onein] (20) at (0, -.5) {};
\node [style=dot] (21) at (.5, -.5) {};
\node [style=oneout] (22) at (1, -.5) {};
\end{pgfonlayer}
\begin{pgfonlayer}{edgelayer}
\draw [style=simple] (0) to (2);
\draw [style=simple] (10) to (12);
\draw [style=simple] (1) to (11);
\draw [style=simple] (20) to (22);
\draw [style=simple] (13) to (21);
\end{pgfonlayer}
\end{tikzpicture}\\
&=
\begin{tikzpicture}[baseline={([yshift=-.5ex]current bounding box.center)}]
\begin{pgfonlayer}{nodelayer}
\node [style=onein] (0) at (0, 1) {};
\node [style=oplus] (1) at (.5, 1) {};
\node [style=dot] (2) at (1, 1) {};
\node [style=oplus] (3) at (1.5, 1) {};
\node [style=rn] (4) at (2, 1) {};
\node [style=onein] (10) at (0, .5) {};
\node [style=rn] (11) at (.5, .5) {};
\node [style=oplus] (12) at (1, .5) {};
\node [style=rn] (13) at (1.5, .5) {};
\node [style=rn] (14) at (2, .5) {};
\node [style=onein] (20) at (0, 0) {};
\node [style=dot] (21) at (.5, 0) {};
\node [style=rn] (22) at (1, 0) {};
\node [style=dot] (23) at (1.5, 0) {};
\node [style=oneout] (24) at (2, 0) {};
\end{pgfonlayer}
\begin{pgfonlayer}{edgelayer}
\draw [style=simple] (0) to (4);
\draw [style=simple] (10) to (14);
\draw [style=simple] (20) to (24);
\draw [style=simple] (1) to (21);
\draw [style=simple] (3) to (23);
\draw [style=simple] (2) to (12);
\end{pgfonlayer}
\end{tikzpicture}  & \ref{CNTviii}\\
&=
\begin{tikzpicture}[baseline={([yshift=-.5ex]current bounding box.center)}]
\begin{pgfonlayer}{nodelayer}
\node [style=onein] (0) at (0, 1) {};
\node [style=oplus] (1) at (.5, 1) {};
\node [style=dot] (2) at (1, 1) {};
\node [style=oplus] (3) at (1.5, 1) {};
\node [style=rn] (4) at (2, 1) {};
\node [style=onein] (10) at (0, .5) {};
\node [style=rn] (11) at (.5, .5) {};
\node [style=oplus] (12) at (1, .5) {};
\node [style=rn] (13) at (1.5, .5) {};
\node [style=rn] (14) at (2, .5) {};
\node [style=onein] (20) at (0, 1.5) {};
\node [style=dot] (21) at (.5, 1.5) {};
\node [style=oneout] (22) at (.85, 1.5) {};
\node [style=onein] (25) at (1.15, 1.5) {};
\node [style=dot] (23) at (1.5, 1.5) {};
\node [style=oneout] (24) at (2, 1.5) {};
\end{pgfonlayer}
\begin{pgfonlayer}{edgelayer}
\draw [style=simple] (0) to (4);
\draw [style=simple] (10) to (14);
\draw [style=simple] (20) to (22);
\draw [style=simple] (24) to (25);
\draw [style=simple] (1) to (21);
\draw [style=simple] (3) to (23);
\draw [style=simple] (2) to (12);
\end{pgfonlayer}
\end{tikzpicture}& \ref{CNTiv}\\
&=
\begin{tikzpicture}[baseline={([yshift=-.5ex]current bounding box.center)}]
\begin{pgfonlayer}{nodelayer}
\node [style=onein] (0) at (0, 1) {};
\node [style=oplus] (1) at (.5, 1) {};
\node [style=rn] (2) at (1, 1) {};
\node [style=oplus] (3) at (1.5, 1) {};
\node [style=rn] (4) at (2, 1) {};
\node [style=onein] (10) at (0, .5) {};
\node [style=rn] (11) at (.5, .5) {};
\node [style=rn] (12) at (1, .5) {};
\node [style=rn] (13) at (1.5, .5) {};
\node [style=rn] (14) at (2, .5) {};
\node [style=onein] (20) at (0, 1.5) {};
\node [style=dot] (21) at (.5, 1.5) {};
\node [style=oneout] (22) at (.85, 1.5) {};
\node [style=onein] (25) at (1.15, 1.5) {};
\node [style=dot] (23) at (1.5, 1.5) {};
\node [style=oneout] (24) at (2, 1.5) {};
\end{pgfonlayer}
\begin{pgfonlayer}{edgelayer}
\draw [style=simple] (0) to (4);
\draw [style=simple] (10) to (14);
\draw [style=simple] (20) to (22);
\draw [style=simple] (24) to (25);
\draw [style=simple] (1) to (21);
\draw [style=simple] (3) to (23);
\end{pgfonlayer}
\end{tikzpicture} & \ref{CNTvii}\\
&=
\begin{tikzpicture}[baseline={([yshift=-.5ex]current bounding box.center)}]
\begin{pgfonlayer}{nodelayer}
\node [style=onein] (0) at (0, 1) {};
\node [style=oplus] (1) at (.5, 1) {};
\node [style=rn] (2) at (1, 1) {};
\node [style=oplus] (3) at (1.5, 1) {};
\node [style=rn] (4) at (2, 1) {};
\node [style=onein] (10) at (0, .5) {};
\node [style=rn] (11) at (.5, .5) {};
\node [style=rn] (12) at (1, .5) {};
\node [style=rn] (13) at (1.5, .5) {};
\node [style=rn] (14) at (2, .5) {};
\node [style=onein] (20) at (0, 1.5) {};
\node [style=dot] (21) at (.5, 1.5) {};
\node [style=rn] (22) at (.85, 1.5) {};
\node [style=dot] (23) at (1.5, 1.5) {};
\node [style=oneout] (24) at (2, 1.5) {};
\end{pgfonlayer}
\begin{pgfonlayer}{edgelayer}
\draw [style=simple] (0) to (4);
\draw [style=simple] (10) to (14);
\draw [style=simple] (20) to (24);
\draw [style=simple] (1) to (21);
\draw [style=simple] (3) to (23);
\end{pgfonlayer}
\end{tikzpicture} & \ref{CNTiv}\\
&=
\begin{tikzpicture}[baseline={([yshift=-.5ex]current bounding box.center)}]
\begin{pgfonlayer}{nodelayer}
\node [style=onein] (0) at (0, 1) {};
\node [style=rn] (1) at (.5, 1) {};
\node [style=rn] (2) at (1, 1) {};
\node [style=rn] (3) at (1.5, 1) {};
\node [style=rn] (4) at (2, 1) {};
\node [style=onein] (10) at (0, .5) {};
\node [style=rn] (11) at (.5, .5) {};
\node [style=rn] (12) at (1, .5) {};
\node [style=rn] (13) at (1.5, .5) {};
\node [style=rn] (14) at (2, .5) {};
\node [style=onein] (20) at (0, 1.5) {};
\node [style=rn] (21) at (.5, 1.5) {};
\node [style=rn] (22) at (.85, 1.5) {};
\node [style=rn] (23) at (1.5, 1.5) {};
\node [style=oneout] (24) at (2, 1.5) {};
\end{pgfonlayer}
\begin{pgfonlayer}{edgelayer}
\draw [style=simple] (0) to (4);
\draw [style=simple] (10) to (14);
\draw [style=simple] (20) to (24);
\end{pgfonlayer}
\end{tikzpicture} & \ref{CNTii}\\
&=
\begin{tikzpicture}[baseline={([yshift=-.5ex]current bounding box.center)}]
\begin{pgfonlayer}{nodelayer}
\node [style=onein] (0) at (0, .5) {};
\node [style=rn] (2) at (2, .5) {};
\node [style=onein] (10) at (0, 0) {};
\node [style=rn] (12) at (2, 0) {};
\end{pgfonlayer}
\begin{pgfonlayer}{edgelayer}
\draw [style=simple] (0) to (2);
\draw [style=simple] (10) to (12);
\end{pgfonlayer}
\end{tikzpicture} & \ref{CNTvi}
\end{align*}

\item
Take $b=(1,1)$, then:
\begin{align*}
\begin{tikzpicture}[baseline={([yshift=-.5ex]current bounding box.center)}]
\begin{pgfonlayer}{nodelayer}
\node [style=onein] (0) at (0, .5) {};
\node [style=dot] (1) at (.5, .5) {};
\node [style=rn] (2) at (1, .5) {};
\node [style=onein] (10) at (0, 0) {};
\node [style=oplus] (11) at (.5, 0) {};
\node [style=rn] (12) at (1, 0) {};
\end{pgfonlayer}
\begin{pgfonlayer}{edgelayer}
\draw [style=simple] (0) to (2);
\draw [style=simple] (10) to (12);
\draw [style=simple] (1) to (11);
\end{pgfonlayer}
\end{tikzpicture}
&=
\begin{tikzpicture}[baseline={([yshift=-.5ex]current bounding box.center)}]
\begin{pgfonlayer}{nodelayer}
\node [style=onein] (0) at (0, .5) {};
\node [style=dot] (1) at (.5, .5) {};
\node [style=oneout] (2) at (1, .5) {};
\node [style=onein] (3) at (1.5, .5) {};
\node [style=rn] (4) at (2, .5) {};
\node [style=onein] (10) at (0, 0) {};
\node [style=oplus] (11) at (.5, 0) {};
\node [style=rn] (12) at (2, 0) {};
\end{pgfonlayer}
\begin{pgfonlayer}{edgelayer}
\draw [style=simple] (0) to (2);
\draw [style=simple] (3) to (4);
\draw [style=simple] (10) to (12);
\draw [style=simple] (1) to (11);
\end{pgfonlayer}
\end{tikzpicture} & \ref{CNTiv}\\
&=:
\begin{tikzpicture}[baseline={([yshift=-.5ex]current bounding box.center)}]
\begin{pgfonlayer}{nodelayer}
\node [style=onein] (0) at (0, .5) {};
\node [style=rn] (2) at (2, .5) {};
\node [style=zeroin] (10) at (0, 0) {};
\node [style=rn] (12) at (2, 0) {};
\end{pgfonlayer}
\begin{pgfonlayer}{edgelayer}
\draw [style=simple] (0) to (2);
\draw [style=simple] (10) to (12);
\end{pgfonlayer}
\end{tikzpicture}
\end{align*}

\end{itemize}
\end{proof}

The following lemma is an intuitive result which we shall use:

\begin{lemma}
\label{lemma:totalordegen}
For every $f \in \CNOT(0,n)$, $f$ is either total or degenerate (i.e., $\Omega$).
\end{lemma}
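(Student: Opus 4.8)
The plan is to prove the slightly stronger statement that every $f \in \CNOT(0,n)$ is either degenerate (equal to $\Omega_{0,n}$) or of the form $\hat y$ for some $y \in \Z_2^n$; since each $\hat y$ is a tensor of the total generators $\onein$ and $\zeroin$ (the latter being total by Lemma \ref{lemmas}\ref{lem:zerovanish}, and $\onein$ by \ref{CNTvi}), this immediately yields ``total or degenerate''. I would fix a representative circuit for $f$ and induct on the number $\mu$ of output ancill\ae\ (that is, $\oneout$ gates) occurring in it. Note that $\mu$ is not invariant under the relations, so the induction is run on representatives, while the conclusion is a statement about the morphism they represent.

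For the base case $\mu = 0$ the representative has no output ancill\ae, so Lemma \ref{lemma:behaveasexpected} applies with domain $0$: taking $x$ to be the empty tuple $\hat{\,} = 1_0$ gives $f = 1_0 f = \hat y$ for some $y \in \Z_2^n$, as required.

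For the inductive step, suppose the representative contains at least one $\oneout$. Using the bifunctoriality of $\otimes$ together with the symmetry, I would slide one such $\oneout$ past every gate to its right (each of which acts on disjoint wires, as the measured wire terminates at the $\oneout$) so as to make it the final operation; absorbing the resulting wire permutation, this writes $f = g\,(\oneout \otimes 1_n)$ with $g \colon 0 \to n+1$ having strictly fewer output ancill\ae. By the induction hypothesis $g$ is degenerate or equals some $\hat y$. If $g = \Omega_{0,n+1}$ then $f = \Omega_{0,n+1}(\oneout \otimes 1_n) = \Omega_{0,n}$ by Lemma \ref{lemma:tensorOmegaDegen}(ii). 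If instead $g = \hat y = \hat{y_1} \otimes \cdots \otimes \hat{y_{n+1}}$, then (reordering so the $\oneout$ lands on the first factor) $f = (\hat{y_1}\,\oneout) \otimes \hat{y_2,\dots,y_{n+1}}$. When $y_1 = 1$ we have $\hat{y_1}\,\oneout = \onein\,\oneout = 1_0$ by \ref{CNTvi}, so $f = \hat{y_2,\dots,y_{n+1}}$ is again of the desired form; when $y_1 = 0$ we have $\hat{y_1}\,\oneout = \zeroin\,\oneout$, which equals $\Omega$ directly from the definitions of $\zeroin$ and $\Omega$, since composing the trailing $\oneout \otimes 1_1$ with $\oneout$ produces $\oneout \otimes \oneout$.

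In this last case $f = \Omega \otimes \hat{y'}$ with $\hat{y'} \colon 0 \to n$ total. To identify it with $\Omega_{0,n}$ I would invoke Lemma \ref{lemma:tensorOmegaDegen}(i): as $\Omega$ is a scalar it commutes with tensoring, so $(\Omega \otimes \hat{y'}) \otimes \Omega = (\Omega \otimes \Omega) \otimes \hat{y'} = \Omega \otimes \hat{y'}$ using $\Omega \otimes \Omega = \Omega$ (Lemma \ref{lem:omegaotimesomega}); hence $\Omega \otimes \hat{y'} = \Omega_{0,n}$, closing the induction. The only delicate point is the sliding argument that isolates a trailing $\oneout$, and I expect this to be the main obstacle, since it must be justified from the symmetric monoidal structure on representatives rather than merely asserted diagrammatically.
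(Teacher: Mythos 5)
Your proof is correct, and it rests on the same core facts as the paper's own argument: Lemma \ref{lemma:behaveasexpected}, the dichotomy $\onein\oneout = 1_0$ (by \ref{CNTvi}) versus $\zeroin\oneout = \Omega$, and the absorbing behaviour of $\Omega$ (Lemmas \ref{lem:omegaotimesomega} and \ref{lemma:tensorOmegaDegen}). The route differs in how the induction is organized: the paper inducts on the total number of gates, peeling one generator at a time off the right end and maintaining the invariant ``total or degenerate,'' whereas you induct on the number of $\oneout$ occurrences, absorbing the entire $\oneout$-free part in a single application of Lemma \ref{lemma:behaveasexpected} and carrying the strictly stronger invariant ``degenerate or equal to some $\hat{y}$.'' Your strengthening buys some rigour: in the paper's $\oneout$ case, the phrase ``the gate to the left of $\oneout$ is either $\onein$ or $\zeroin$'' tacitly assumes the total prefix has already been put in the form $\hat{y}$ (as does its appeal to Lemma \ref{lemma:behaveasexpected} in the $\cnot$ case), and this is exactly what your induction hypothesis supplies explicitly. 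The sliding step you flag as the delicate point is indeed sound: writing the chosen occurrence as $f = A\,(1_i \otimes \oneout \otimes 1_j)\,B$, naturality of the symmetry gives $1_i \otimes \oneout \otimes 1_j = (c_{i,1}\otimes 1_j)(\oneout \otimes 1_{i+j})$, and the interchange law gives $(\oneout \otimes 1_{i+j})\,B = (1_1\otimes B)(\oneout\otimes 1_n)$, so $f = g\,(\oneout \otimes 1_n)$ with $g := A\,(c_{i,1}\otimes 1_j)(1_1\otimes B)$ containing one fewer $\oneout$; both equations hold in any symmetric monoidal category and hence in $\CNOT$.
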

\begin{proof}
Consider any circuit $f:0\to n$ for any $n \in \mathbb{N}$.  We prove that $f$ is either total or degenerate by induction.
\begin{itemize}
\item If $|f|=0$, then $f$ is a permutation, and is therefore total.

\item Inductively suppose that $f$ is either total or degenerate.  Consider any circuit $g:n\to m$ such that $|g|=1$.  If $f$ is degenerate, then $fg$ is degenerate as well by Lemma \ref{lemma:tensorOmegaDegen}.  Otherwise, suppose that $f$ is total.  There are three cases:

\begin{itemize}
\item If $\onein \in g$, then $\bar{fg} = \bar{f\otimes g} = \bar{f} \otimes \bar{g} = 1 \otimes 1 = 1$.
\item If $\oneout \in g$, then the gate to the left of $\oneout$ is either $\onein$ or $\zeroin$.  If it is $\onein$, then as $\onein\oneout=1$, it is total.  Otherwise, if it is $\oneout$, then $\onein\oneout=:\Omega$, so $fg$ is degenerate by Lemma \ref{lemma:tensorOmegaDegen}.
\item If $\cnot \in g$, then $fg$ is total by Lemma \ref{lemma:behaveasexpected}.
\end{itemize}
\end{itemize}
\end{proof}


To prove that $h_0$ takes maps in $\CNOT$ to well-defined maps in $\Par(\CTor_2)^*$, we construct a torsor-like operation in $\CNOT$ which gives the internal torsor structure which we mentioned above.  This will act as the para-multiplication when we project the last 1 out of 3 wires.

\begin{definition}
\label{defn:plusmap}
Define a family of maps $+_n:3n\to 3n$ in $\CNOT$ inductively such that such that on no wires, $+_0:=1_0$.  Furthermore, on one wire:

\begin{align*}
\begin{tikzpicture}[baseline={([yshift=-.5ex]current bounding box.center)}]
\begin{pgfonlayer}{nodelayer}
\node [style=rn] (0) at (0, 1) {};
\node [style=rn] (1) at (0, .5) {};
\node [style=rn] (2) at (0, 0) {};
\draw[fill=white] (1,0.5) circle [radius=.3] node (10) {$+_1$};
\node [style=rn] (20) at (2, 1) {};
\node [style=rn] (21) at (2, .5) {};
\node [style=rn] (22) at (2, 0) {};
\end{pgfonlayer}
\begin{pgfonlayer}{edgelayer}
\draw [style=simple, bend left] (0) to (10);
\draw [style=simple] (1) to (10);
\draw [style=simple, bend right] (2) to (10);
\draw [style=simple, bend right] (20) to (10);
\draw [style=simple] (21) to (10);
\draw [style=simple, bend left] (22) to (10);
\end{pgfonlayer}
\end{tikzpicture}
:=
\begin{tikzpicture}[baseline={([yshift=-.5ex]current bounding box.center)}]
\begin{pgfonlayer}{nodelayer}
\node [style=rn] (0) at (0, 1) {};
\node [style=rn] (1) at (.5, 1) {};
\node [style=dot] (2) at (1, 1) {};
\node [style=rn] (3) at (1.5, 1) {};
\node [style=rn] (10) at (0, 0.5) {};
\node [style=dot] (11) at (.5, 0.5) {};
\node [style=rn] (12) at (1, 0.5) {};
\node [style=rn] (13) at (1.5, 0.5) {};
\node [style=rn] (20) at (0, 0) {};
\node [style=oplus] (21) at (.5, 0) {};
\node [style=oplus] (22) at (1, 0) {};
\node [style=rn] (23) at (1.5, 0) {};
\end{pgfonlayer}
\begin{pgfonlayer}{edgelayer}
\draw [style=simple] (0) to (3);
\draw [style=simple] (10) to (13);
\draw [style=simple] (20) to (23);
\draw [style=simple] (2) to (22);
\draw [style=simple] (11) to (21);
\end{pgfonlayer}
\end{tikzpicture}
\end{align*}

Furthermore, for any $n>1$:

$$
\begin{tikzpicture}[baseline={([yshift=-.5ex]current bounding box.center)}]
\begin{pgfonlayer}{nodelayer}
\node [style=rn] (0) at (0, 1) {};
\node [style=rn] (1) at (0, .5) {};
\node [style=rn] (2) at (0, 0) {};
\draw[fill=white] (1,0.5) circle [radius=.3] node (10) {$+_n$};
\node [style=rn] (20) at (2, 1) {};
\node [style=rn] (21) at (2, .5) {};
\node [style=rn] (22) at (2, 0) {};
\end{pgfonlayer}
\begin{pgfonlayer}{edgelayer}
\draw [style=simple, bend left] (0) to (10);
\draw [style=simple] (1) to (10);
\draw [style=simple, bend right] (2) to (10);
\draw [style=simple, bend right] (20) to (10);
\draw [style=simple] (21) to (10);
\draw [style=simple, bend left] (22) to (10);
\end{pgfonlayer}
\end{tikzpicture}
:=
\begin{tikzpicture}[baseline={([yshift=-.5ex]current bounding box.center)}]
\begin{pgfonlayer}{nodelayer}
\node [style=rn] (-000) at (-1, 2.5) {};
\node [style=rn] (-001) at (-1, 2) {};
\node [style=rn] (-002) at (-1, 1.5) {};
\node [style=rn] (-100) at (-1, 1) {};
\node [style=rn] (-101) at (-1, .5) {};
\node [style=rn] (-102) at (-1, 0) {};
\node [style=rn] (000) at (0, 2.5) {};
\node [style=rn] (001) at (0, 2) {};
\node [style=rn] (002) at (0, 1.5) {};
\draw[fill=white] (1,2) circle [radius=0.52] node (010) {$+_{n-1}$};
\node [style=rn] (020) at (2, 2.5) {};
\node [style=rn] (021) at (2, 2) {};
\node [style=rn] (022) at (2, 1.5) {};
\node [style=rn] (100) at (0, 1) {};
\node [style=rn] (101) at (0, .5) {};
\node [style=rn] (102) at (0, 0) {};
\draw[fill=white] (1,0.5) circle [radius=.3] node (110) {$+_1$};
\node [style=rn] (120) at (2, 1) {};
\node [style=rn] (121) at (2, .5) {};
\node [style=rn] (122) at (2, 0) {};
\node [style=rn] (200) at (3, 2.5) {};
\node [style=rn] (201) at (3, 2) {};
\node [style=rn] (202) at (3, 1.5) {};
\node [style=rn] (300) at (3, 1) {};
\node [style=rn] (301) at (3, .5) {};
\node [style=rn] (302) at (3, 0) {};
\end{pgfonlayer}
\begin{pgfonlayer}{edgelayer}
\draw plot [smooth, tension=0.3] coordinates { (-000) (000) (010)};
\draw plot [smooth, tension=0.3] coordinates { (-001) (100) (110)};
\draw plot [smooth, tension=0.3] coordinates { (-002) (001) (010)};
\draw plot [smooth, tension=0.3] coordinates { (-101) (002) (010)};
\draw plot [smooth, tension=0.3] coordinates { (-102) (102) (110)};
\draw plot [smooth, tension=0.3] coordinates { (-100) (101) (110)};
\draw plot [smooth, tension=0.3] coordinates { (302) (122) (110)};
\draw plot [smooth, tension=0.3] coordinates { (301) (022) (010)};
\draw plot [smooth, tension=0.3] coordinates { (300) (121) (110)};
\draw plot [smooth, tension=0.3] coordinates { (201) (120) (110)};
\draw plot [smooth, tension=0.3] coordinates { (202) (021) (010)};
\draw plot [smooth, tension=0.3] coordinates { (200) (020) (010)};
\end{pgfonlayer}
\end{tikzpicture}
$$
\end{definition}

Now we show that $\Total(\CNOT)(0,\_)$ really does produces maps which preserve torsor structure.


\begin{lemma}
\label{lemma:pluscommutes}
For any map $f:n\to m$ in $\CNOT$, $(f\otimes f\otimes f)+_m = +_n(f\otimes f\otimes f)$.
\end{lemma}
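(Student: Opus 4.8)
The plan is to prove the identity by structural induction on the circuit $f$, exactly paralleling the proof that $\Delta$ is natural (Lemma~\ref{lem:fanoutNatural}). Since $\CNOT$ is freely generated as a symmetric monoidal category by the gates $\onein$, $\oneout$ and $\cnot$, it suffices to verify the equation $(f\otimes f\otimes f)+_m = +_n(f\otimes f\otimes f)$ on each of these three generators (together with the symmetry maps) and then observe that it is preserved under composition and tensor. For composition $f = gh$ this follows from functoriality of $(\_)\otimes(\_)\otimes(\_)$ and two applications of the inductive hypothesis, while for a tensor $f = g\otimes h$ it follows from the inductive definition of $+_n$ as the interleaving of $+$ on the $g$-block with $+$ on the $h$-block; a permutation $p$ commutes with $+_n$ because $+_n$ acts coordinate-wise on the interleaved triples, so $(p\otimes p\otimes p)$ (the same permutation on all three blocks) slides through.

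For the two ancilla generators the verification is short. Taking $f=\onein:0\to 1$, the left-hand side feeds three copies of $\onein$ into $+_1$, so the bottom wire is XOR-ed with two control bits both equal to $1$; resolving the two $\cnot$s using \ref{CNTiv} and then cancelling via \ref{CNTvii} and \ref{CNTii} returns $\onein\otimes\onein\otimes\onein$, which is exactly $+_0(\onein\otimes\onein\otimes\onein)$ since $+_0 = 1_0$. The case $f=\oneout$ is the horizontal mirror image: applying the involution $(\_)\cnv$ (which, as remarked after \ref{CNTix}, flips every diagram and is a functor) to the $\onein$ computation yields $+_1(\oneout\otimes\oneout\otimes\oneout) = \oneout\otimes\oneout\otimes\oneout$, again matching $+_0$.

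The crux is the base case $f=\cnot$, where one must show $(\cnot\otimes\cnot\otimes\cnot)+_2 = +_2(\cnot\otimes\cnot\otimes\cnot)$. Diagrammatically this is the statement that three parallel copies of a $\cnot$ gate slide through the coordinate-wise XOR adder $+_2$, i.e.\ the syntactic incarnation of the fact that a linear ($\cnot$-built) map commutes with vector addition. I expect this to be the main obstacle: it requires carefully tracking how the two control/target pairs of the three $\cnot$s interact with the adder's $\oplus$-gates, pushing gates past one another with the commutation moves \ref{CNTiii} and \ref{CNTv}, introducing and cancelling pairs of gates via \ref{CNTii}, and re-associating overlapping controls with the Yang--Baxter-type identity \ref{CNTviii}. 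Once the three generator cases are established, the closure under composition and tensor completes the structural induction and hence the proof.
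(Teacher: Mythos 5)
Your overall strategy coincides with the paper's: a structural induction over circuits that reduces everything to checking the equation on the generators $\onein$, $\oneout$ and $\cnot$ (the paper phrases this as induction on the gate count $|f|$, peeling one gate off at a time, which also absorbs your separate tensor and permutation cases). Your two ancilla cases are correct: the paper disposes of $\onein$ via the basis-state computation of Lemma \ref{lemma:behaveasexpected} and of $\oneout$ ``dually'', which is exactly your flip argument --- though note that your use of the involution $(\_)\cnv$ silently requires $+_n\cnv = +_n$, which holds because the two $\cnot$s making up each $+_1$ factor share their target wire and therefore commute by \ref{CNTv}. The composition step is routine, as you say.

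The genuine gap is that the case $f=\cnot$, which you yourself flag as the crux, is never actually proved: you state that $(\cnot\otimes\cnot\otimes\cnot)+_2 = +_2(\cnot\otimes\cnot\otimes\cnot)$ ought to follow by pushing gates around with \ref{CNTii}, \ref{CNTiii}, \ref{CNTv} and \ref{CNTviii}, but this identity is where essentially all of the content of the lemma lives, and inside the free presentation there is no semantic shortcut --- ``$\cnot$-built maps commute with addition'' is the thing being proved, not something one may invoke. The paper's verification of this single case is a derivation of roughly ten diagrammatic steps, and it needs not only the primitive identities you list but also the derived identity of Lemma \ref{lemmas} \ref{lem:helperLemma} (a $\cnot$ passes through a nested pair of $\cnot$s at the cost of creating a third), applied twice, interleaved with \ref{CNTviii}, \ref{CNTiii}, \ref{CNTv} and cancellations by \ref{CNTii}. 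So your toolkit is correctly identified and no step of your plan would fail, but as written the proposal is an outline whose central verification is missing rather than a proof.
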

\begin{proof}
For any map $f:n\to m$ in $\CNOT$, we prove $\otimes^3 f+_m = +_n \otimes^3 f$ by induction on the size of $f$.
\begin{itemize}
\item When $|f|=0$, $f$ is a permutation, so it is immediate that $\otimes^3 f+_m = +_n \otimes^3 f$.
\item Suppose that $\otimes^3 f+ = + \otimes^3 f$ for all $|f| \leq k$.
 Consider some $f:n\to m$ with $|f|=k+1$.  We can decompose $f=gh$ for some $|g|=1$ and $|h|=k$.
Note that $\otimes^3 f+ = \otimes^3 (gh) + = \otimes^3 g + \otimes^3 h$ by $|f|$.

We proceed by cases on the elements of $g$.

\begin{description}
\item[$\mathsf{cnot} \in g$:] 
\begin{flalign*}

\end{flalign*}

Therefore, $\otimes^3 f+ = \otimes^3 g + \otimes^3 h = + \otimes^3 (gh) = +\otimes^3 f$

\item[ $|1 \rangle \in g$ ]
Suppose otherwise that $\onein \in g$.  Then by Lemma \ref{lemma:behaveasexpected}, $\otimes^3 f+ = \otimes^3 g + \otimes^3 h = + \otimes^3 (gh) = +\otimes^3 f$.

\item[ $\langle 1 | \in g$ ]
If $\oneout \in g$, dually to the previous case, $\otimes^3 f+ = \otimes^3 g + \otimes^3 h = + \otimes^3 (gh) = +\otimes^3 f$.

\end{description}

\end{itemize}
\end{proof}
 
 We also show that $h_0$ produces well defined maps in $\Par(\CTor_2)$.

\begin{lemma}
\label{lemma:plusClosure}
Consider any $n,m \in \mathbb{N}$ and map $f\in\CNOT(n,m)$. For any $x,y,z \in \CNOT(0,n)$ such that $\bar{xf}=\bar{yf}=\bar{zf}=1_0$, it follows that $\bar{(x\otimes y \otimes z)+_n\otimes^3 f } = 1_0$.
\end{lemma}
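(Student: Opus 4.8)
The plan is to reduce the claim to the naturality of $+$ already established in Lemma \ref{lemma:pluscommutes}, together with the observation that totality is preserved by tensoring and composition. The hypothesis $\bar{xf}=\bar{yf}=\bar{zf}=1_0$ says precisely that each of $xf,yf,zf:0\to m$ is total: by Lemma \ref{lemma:totalordegen} each is either total or equal to $\Omega_{0,m}$, and $\bar{\Omega_{0,m}}\neq 1_0$, so the restriction condition forces totality. Hence what I must show is exactly that the composite $(x\otimes y\otimes z)\,+_n\,(\otimes^3 f)$ is total.

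First I would push $+_n$ past $\otimes^3 f$ using Lemma \ref{lemma:pluscommutes}, which gives $+_n(\otimes^3 f)=(\otimes^3 f)+_m$. Substituting this and then applying functoriality of the tensor (the interchange law) to rewrite $(x\otimes y\otimes z)(f\otimes f\otimes f)=(xf)\otimes(yf)\otimes(zf)$, the composite becomes
\[
(x\otimes y\otimes z)\,+_n\,(\otimes^3 f)=\big((xf)\otimes(yf)\otimes(zf)\big)\,+_m .
\]
Now each factor $xf,yf,zf$ is total, so $(xf)\otimes(yf)\otimes(zf)$ is total because $\bar{a\otimes b}=\bar a\otimes\bar b$ and $1_0\otimes 1_0\otimes 1_0=1_0$. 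Moreover $+_m$ is total, since by Definition \ref{defn:plusmap} it is built from $\cnot$ gates (and wire permutations) alone, with no ancill\ae, and both $\cnot$ and permutations are total. Since total maps are closed under composition (using $\bar{g\,h}=\bar{g\,\bar h}=\bar g$ whenever $\bar h=1$), the right-hand side is total, i.e.\ $\bar{\big((xf)\otimes(yf)\otimes(zf)\big)+_m}=1_0$, which is what we want.

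The argument is therefore essentially a bookkeeping consequence of the genuinely hard naturality result Lemma \ref{lemma:pluscommutes}, and requires no new induction. The only points needing any care --- and hence the main obstacle, such as it is --- are the two preservation facts: that the monoidal restriction structure of $\CNOT$ satisfies $\bar{a\otimes b}=\bar a\otimes\bar b$, so that a tensor of total states is total, and that $+_m$ is total. Both follow immediately from $\CNOT$ being a discrete inverse category and from the explicit $\cnot$-only description of $+_m$, so the proof reduces to assembling these observations in the correct order.
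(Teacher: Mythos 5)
Your proof is correct and follows essentially the same route as the paper's: commute $+_n$ past $\otimes^3 f$ via Lemma \ref{lemma:pluscommutes}, rewrite by the interchange law as $((xf)\otimes(yf)\otimes(zf))+_m$, and conclude using totality of $+_m$ together with $\bar{g\,h}=\bar{g\,\bar h}$ and $\bar{a\otimes b}=\bar a\otimes\bar b$. The only difference is cosmetic: your appeal to Lemma \ref{lemma:totalordegen} is unnecessary, since $\bar{xf}=1_0$ is already the definition of totality.
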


\begin{proof}
Consider an arbitrary map $f\in\CNOT(n,m)$ and any $x,y,z \in \CNOT(0,n)$ such that $\bar{xf}=\bar{yf}=\bar{zf}=1_0$.
By Lemma \ref{lemma:pluscommutes}:
\begin{align*}
\bar{(x\otimes y \otimes z) +_n \otimes^3f }
&=\bar{(x\otimes y \otimes z)\otimes^3f  +_m}
= \bar{(xf \otimes yf \otimes zf) +_m}\\
&=\bar{(xf \otimes yf \otimes zf) \bar{+_m}}
=\bar{(xf \otimes yf \otimes zf) 1_{3m}}\\
&=\bar{xf \otimes yf \otimes zf}
=\bar{xf} \otimes \bar{yf} \otimes \bar{zf}
= 1_0 \otimes 1_0 \otimes 1_0
= 1_0
\end{align*}
\end{proof}


We now prove that $H_0: \CNOT \rightarrow \Par(\Tor_2)^*$ is a functor.

\begin{lemma}
\label{lem:FisFunctor}
$h_0$ can be factored as $H_0 \Par(U)$ and $h_0$ preserves torsor structure. Thus, $\CNOT$ has internal torsor structure which is preserved by $h_0$.
\end{lemma}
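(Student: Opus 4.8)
The plan is to define the functor $H_0$ on objects by equipping each set $h_0(n) = \{x \in \CNOT(0,n) : \bar x = 1_0\}$ with the torsor structure induced by the circuit $+_n$, and on morphisms by $H_0(f) := h_0(f)$, leaving the underlying partial function untouched. Once this is done, the equation $\Par(U)\,H_0 = h_0$ holds by construction, since $H_0$ merely decorates the objects and maps in the image of $h_0$ with torsor data without altering them. Thus the real content of the lemma is (a) that each $h_0(n)$ is an object of $\CTor_2$ and (b) that each $h_0(f)$ is a homomorphism of torsors.

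For the para-multiplication I would set, for total maps $x,y,z : 0 \to n$,
$$\mu_n(x,y,z) := \pi_3\big((x \otimes y \otimes z)\,+_n\big),$$
using that the map $\hat{\_}_n : \Z_2^n \to h_0(n)$ of Definition \ref{defn:hat} is surjective — every total map $0 \to n$ equals some $\hat b$, which follows by structural induction from Lemma \ref{lemma:behaveasexpected} and Lemma \ref{lemma:totalordegen} — so that $h_0(3n) \cong h_0(n)^3$ and the projection $\pi_3$ onto the third block makes sense. By Lemma \ref{lemma:plusClosure} applied with $f = 1_n$, the composite $(x\otimes y\otimes z)\,+_n$ is total, so $\mu_n$ indeed lands in $h_0(n)$. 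Reading off $+_n$ on the hatted generators gives $\mu_n(\hat a,\hat b,\hat c) = \hat{a\oplus b\oplus c}$, so the para-associative and para-identity laws follow at once, because they hold for $\_\oplus\_\oplus\_$ on $\Z_2^n$ and every element of $h_0(n)$ is hatted, exactly as in Proposition \ref{rem:tor2}. Commutativity and characteristic $2$ are immediate from the symmetry of $\oplus$ and $a\oplus a = 0$, and finite generation with nonemptiness hold since $h_0(n)$ is the surjective image of the finite set $\Z_2^n$ and contains $\hat 0$. Hence $(h_0(n),\mu_n) \in \CTor_2$.

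For (b), the homomorphism property is where the preceding two lemmas do the work. Given $f : n \to m$, the partial function $h_0(f)$ sends $x \mapsto xf$ whenever $\bar{xf} = 1_0$. Using the naturality of $+$ from Lemma \ref{lemma:pluscommutes}, for total $x,y,z$ with the relevant restrictions equal to $1_0$,
$$(xf \otimes yf \otimes zf)\,+_m = (x\otimes y\otimes z)(f\otimes f\otimes f)\,+_m = (x\otimes y\otimes z)\,+_n\,(f\otimes f\otimes f),$$
so projecting onto the third block yields $\mu_m(h_0(f)(x),h_0(f)(y),h_0(f)(z)) = \mu_n(x,y,z)\,f = h_0(f)(\mu_n(x,y,z))$; Lemma \ref{lemma:plusClosure} guarantees that the partiality conditions on the two sides agree, so $h_0(f)$ is a partial torsor homomorphism and therefore a map of $\Par(\CTor_2)^*$. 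Functoriality of $H_0$ (preservation of identities and composition) is inherited verbatim from Lemma \ref{lemma:restrictionfunctorgeneral}, since $H_0$ and $h_0$ coincide on underlying data.

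I expect the main obstacle to be the bookkeeping of the internal torsor object rather than the homomorphism property: one must confirm that the monoidal structure of $\CNOT$ is carried by $h_0$ to the restriction product structure of $\Par(\Sets)$ so that $h_0(3n) \cong h_0(n)^3$ coherently, and that the surjectivity of $\hat{\_}_n$ holds uniformly in $n$. Once these identifications are in place, Lemmas \ref{lemma:pluscommutes} and \ref{lemma:plusClosure} make the homomorphism and closure conditions essentially formal; the object $n$ thereby carries the internal torsor operation furnished by $+_n$, which $h_0$ transports to $\mu_n$, and the factorization $h_0 = H_0\,\Par(U)$ follows because $H_0$ adds structure without disturbing the underlying sets and functions.
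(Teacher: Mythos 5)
Your proposal is correct and takes essentially the same approach as the paper: the internal torsor structure comes from the circuits $+_n$ (read on the last $n$ wires), with Lemma \ref{lemma:plusClosure} supplying definedness/closure of the operation and Lemma \ref{lemma:pluscommutes} supplying the homomorphism property, which is exactly how the paper argues. The paper's proof is merely terser---it dispatches the degenerate case in one line and leaves the identification $\Total(\CNOT)(0,n)\cong\Z_2^n$ via $\hat{\_}_n$ implicit, where your write-up spells out this bookkeeping.
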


\begin{proof}
First, we prove for every $f:n\to m$ in $\CNOT$, $h_0(f)$ can be regarded as a map in $\Par(\CTor_2)^*$. Consider an arbitrary map $f:n\to m$ in $\CNOT$.  If $f$ is degenerate, then $h_0(f)$ vacuously preserves torsor structure.  Suppose otherwise that there exists some $x,y,z\in \mathbb{Z}_2^n$ and $x',y',z'  \in \mathbb{Z}_2^m$ such that $\hat{x'}=\hat{x}f$, $\hat{y'}=\hat{y}f$ and $\hat{z'}=\hat{z}f$. Forgetting the first $2$ out of $3$ wires, by Lemma \ref{lemma:plusClosure}, $h_0(f)(x\oplus y \oplus z)$ is defined; moreover, by Lemma \ref{lemma:pluscommutes}, $h_0(f)(x\oplus y \oplus z)=h_0(f)(x)\oplus h_0(f)(y) \oplus h_0(f)(z)$, so $h_0(f)$ preserves the para-multiplication.
\end{proof}


\begin{corollary}
\label{rem:pullback}
The functor $h_0$ can be lifted to a functor $\tilde H_0: \CNOT \rightarrow \ParIso(\CTor_2)^*$.
\end{corollary}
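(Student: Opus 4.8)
The plan is to assemble the functor $\tilde{H}_0$ as a mediating map into a pullback, using the machinery developed in the preceding lemmas. The key structural fact I would rely on is that the square
\[
\xymatrix{
\ParIso(\CTor_2)^* \ar[rr]^{\ParIso(U)} \ar@{^{(}->}[d] && \ParIso(\Sets) \ar@{^{(}->}[d] \\
\Par(\CTor_2)^* \ar[rr]_{\Par(U)} && \Par(\Sets)
}
\]
is a pullback of categories. So the task reduces to producing two functors out of $\CNOT$ that agree after composing into $\Par(\Sets)$, namely a lift $H_0:\CNOT \to \Par(\CTor_2)^*$ and a lift $\tilde{h}_0:\CNOT \to \ParIso(\Sets)$, both of which descend to the same underlying $h_0:\CNOT \to \Par(\Sets)$.

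First I would invoke Lemma \ref{lemma:restrictionfunctorgeneral}, which already gives us that $h_0 := \Total(\CNOT)(0,\_)$ is a genuine restriction functor from $\CNOT$ into $\Par(\Sets)$. Since $\CNOT$ is an inverse category (Theorem \ref{thm:CNOTINV}), every map it produces is a partial isomorphism; concretely, for $f$ a map in $\CNOT$, the partial inverse of $h_0(f)$ is realised by $h_0(f\cnv)$, using \ref{INV:2}. This yields the factorisation $\tilde{h}_0:\CNOT \to \ParIso(\Sets)$ through the inclusion $\ParIso(\Sets)\hookrightarrow \Par(\Sets)$. Second, Lemma \ref{lem:FisFunctor} supplies the other factorisation: it shows that each $h_0(f)$ preserves the internal torsor structure (via the para-multiplication map $+_n$ and the computations in Lemmas \ref{lemma:pluscommutes} and \ref{lemma:plusClosure}), so $h_0$ lands in $\Par(\CTor_2)^*$ after applying the underlying-set functor $\Par(U)$. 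This gives $H_0:\CNOT \to \Par(\CTor_2)^*$ with $H_0\,\Par(U) = h_0$.

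With both lifts in hand, the universal property of the pullback furnishes a unique functor $\tilde{H}_0:\CNOT \to \ParIso(\CTor_2)^*$ whose two legs recover $H_0$ and $\tilde{h}_0$, and which therefore descends to $h_0$. That is the entire content of the corollary: it is a formal consequence of the two factorisations established just above, together with the pullback characterisation of $\ParIso(\CTor_2)^*$. I would write the proof as essentially a one-line citation: \emph{by Lemma \ref{lem:FisFunctor} the functor $h_0$ factors through $\Par(U)$ as $H_0$, and since $\CNOT$ is an inverse category it also factors through $\ParIso(\Sets)$ as $\tilde{h}_0$; the pullback then yields $\tilde{H}_0$.}

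The only subtlety — and the point I would take care to verify rather than wave at — is that the non-emptiness decoration (the asterisk in $\ParIso(\CTor_2)^*$ and $\Par(\CTor_2)^*$) is respected. Here Lemma \ref{lemma:totalordegen} is the crucial input: every map $0\to n$ in $\CNOT$ is either total or degenerate, and the degenerate case $\Omega$ is exactly what corresponds to the empty torsor. One must confirm that $h_0$ never produces the empty object except on genuinely degenerate circuits, so that the image lands in the starred (non-empty) subcategories as claimed; this is why $\Par(\CTor_2)^*$ rather than $\Par(\CTor_2)$ appears in the pullback. Apart from checking this bookkeeping, the corollary requires no new computation, since all the hard algebraic identities have been discharged in the lemmas it cites.
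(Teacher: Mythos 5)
Your proof is correct and follows the paper's own argument exactly: factor $h_0$ through $\ParIso(\Sets)$ via Lemma \ref{lemma:restrictionfunctorgeneral} (using that $\CNOT$ is an inverse category, so every map in the image is a partial isomorphism) and through $\Par(\CTor_2)^*$ via Lemma \ref{lem:FisFunctor}, then invoke the universal property of the pullback. Your added check on the non-emptiness decoration is sound but essentially automatic bookkeeping --- on objects $h_0(n)=\Total(\CNOT)(0,n)\cong\mathbb{Z}_2^n$ is never empty (the degenerate circuit $\Omega$ yields a nowhere-defined partial map, not an empty object) --- which is why the paper leaves it implicit.
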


\begin{proof}
As $h_0$ can be factored through $\ParIso(\Sets)$ by Lemma \ref{remark:restrictionfunctor} and $\Par(\CTor_2)^*$ by  Lemma \ref{lem:FisFunctor}, it is also a functor to $\tilde H_0:\CNOT\to\ParIso(\CTor_2)^*$ by pullback.
\end{proof}


\begin{lemma}
\label{lemma:PreservesInverseProducts}
$\tilde H_0:\CNOT \to \ParIso(\CTor_2)^*$ preserves inverse products.
\end{lemma}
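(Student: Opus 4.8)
The plan is to verify the three ingredients that together constitute preservation of inverse products: that $\tilde H_0$ is a strong symmetric monoidal functor, that it preserves the involution $(\_)\cnv$, and that it carries the diagonal $\Delta$ of $\CNOT$ (Definition \ref{defn:fanoutFanin}) to the canonical diagonal of $\ParIso(\CTor_2)^*$. Once these hold, the inverse-product axioms \ref{DNV:1}--\ref{DNV:4} are preserved automatically, since each is an equation between composites of $\Delta$, $\nabla = \Delta\cnv$, the symmetry and the associator, all of which a structure-preserving functor respects. Both domain and codomain are already known to be discrete inverse categories, by Theorem \ref{thm:CNOTINV} and Proposition \ref{prop:torDiscreteInverse} respectively, so there is genuine structure on each side to compare.

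For monoidality, recall that on objects $\tilde H_0(n) = h_0(n) = \Total(\CNOT)(0,n)$, which by Lemma \ref{lemma:behaveasexpected} and Proposition \ref{rem:tor2} is identified with the torsor $\mathbb{Z}_2^n$; since $\mathbb{Z}_2^{n+m} \cong \mathbb{Z}_2^n \oplus \mathbb{Z}_2^m$, the comparison $\tilde H_0(n) \otimes \tilde H_0(m) \to \tilde H_0(n\otimes m)$ sending a pair of total states to their tensor is an isomorphism, and Lemma \ref{lemma:totalordegen} guarantees that no degenerate state intrudes to spoil the bijection. Equivalently, and more cleanly, one may use that $\CNOT$ is the symmetric monoidal category freely generated by $\cnot$, $\onein$ and $\oneout$ subject to \ref{CNTi}--\ref{CNTix}: interpreting these generators as the evident partial isomorphisms of $\mathbb{Z}_2$-torsors and checking the relations yields a strong symmetric monoidal functor, which one then identifies with the pullback-constructed $\tilde H_0$. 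Preservation of $(\_)\cnv$ is immediate from Lemma \ref{remark:restrictionfunctor}: $h_0$ is a restriction functor, and a restriction functor preserves the equations $\bar f = f g$ and $\bar g = g f$ defining a partial inverse; as partial inverses are unique, $\tilde H_0(f\cnv) = \tilde H_0(f)\cnv$.

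It remains to show $\tilde H_0(\Delta_n) = \Delta_{\tilde H_0(n)}$. For the base case $n=1$, unfolding Definition \ref{defn:fanoutFanin} and applying $\tilde H_0$ to the defining circuit, namely a $\zeroin$ on the fresh wire followed by a $\cnot$ controlled from the input, shows that $\tilde H_0(\Delta_1)$ sends $x \in \mathbb{Z}_2$ to $(0 \oplus x,\, x) = (x,x)$, which is precisely the canonical copy map on $\mathbb{Z}_2$. Because $\Delta_n$ is defined inductively as a suitable tensor of copies of $\Delta_1$ and $\tilde H_0$ is strong monoidal, a straightforward induction on $n$ gives $\tilde H_0(\Delta_n) = \Delta_{\tilde H_0(n)}$, and dually $\tilde H_0(\nabla_n) = \nabla_{\tilde H_0(n)}$ by preservation of $(\_)\cnv$.

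The main obstacle is the verification of strong monoidality: confirming not merely that the object-level comparisons are bijections but that they assemble into coherent, natural monoidal data. The free-generation viewpoint sidesteps most of this bookkeeping at the cost of having to identify the two descriptions of the functor, which is the delicate point; the base-case computation and the induction for $\Delta$ are then routine.
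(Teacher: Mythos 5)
Your overall architecture is sound, and two of its three ingredients are handled correctly: the dagger-preservation argument (a restriction functor preserves the equations $\bar{f}=fg$ and $\bar{g}=gf$, and partial inverses are unique, so $\tilde H_0(f\cnv)=\tilde H_0(f)\cnv$) is exactly right, and your base-case computation $\tilde H_0(\Delta_1)(x)=(0\oplus x,x)=(x,x)$ together with the induction over Definition \ref{defn:fanoutFanin} is what is needed for the $\Delta$ part. The gap is that the step you defer as ``the main obstacle'' --- establishing $\tilde H_0(f\otimes g)=\tilde H_0(f)\otimes\tilde H_0(g)$ as an equation of \emph{maps}, not merely that the object-level comparisons $\mathbb{Z}_2^n\oplus\mathbb{Z}_2^{n'}\cong\mathbb{Z}_2^{n+n'}$ are bijections --- is the central content of the lemma, and neither of your proposed remedies is carried out. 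The direct route still needs the naturality squares verified; the free-generation route needs (i) all nine identities \ref{CNTi}--\ref{CNTix} checked to hold in $\ParIso(\CTor_2)^*$ and (ii) the induced functor identified with the pullback-constructed $\tilde H_0$, and you do neither. Indeed, the remark in the paper immediately following this lemma explains that the authors deliberately avoided the free-generation route precisely because of burden (i).

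What your plan misses is that the hom-functor definition of $h_0$ turns the deferred step into a short pointwise computation rather than a coherence problem. Since $\tilde H_0(f)(b_1,\ldots,b_n)$ is by construction the value of $\hat{b_1,\ldots,b_n}f$ at the point $*$, and since states split across the tensor, $\hat{b_1,\ldots,b_{n+n'}}=\hat{b_1,\ldots,b_n}\otimes\hat{b_{n+1},\ldots,b_{n+n'}}$ by Definition \ref{defn:hat}, one immediately gets $\tilde H_0(f\otimes g)(b_1,\ldots,b_{n+n'})=\tilde H_0\bigl(\hat{b_1,\ldots,b_n}f\otimes\hat{b_{n+1},\ldots,b_{n+n'}}g\bigr)(*)=(\tilde H_0(f)\otimes\tilde H_0(g))(b_1,\ldots,b_{n+n'})$; this four-line check, together with the analogous pointwise observation that composites with $\Delta$ act as the set-theoretic diagonal on states, \emph{is} the paper's entire proof. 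So your proposal is rescuable by exactly this computation, but as written the lemma's key claim is asserted and flagged as an obstacle rather than proved.
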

\begin{proof}
We prove that $\tilde H_0$ preserves inverse products by the examination of components within $\ParIso(\CTor_2)^*$.

Consider two arbitrary circuits $f:n\to m$ and $f':n'\to m'$ in $\CNOT$.  Moreover, consider any $(b_1,\cdots, b_{n+n'}) \in \mathbb{Z}_2^{n+n'}$.  Then by construction of $\tilde H_0$:
\begin{align*}
\tilde H_0 (f\otimes g)(b_1, \cdots ,b_{n+n'})
&=\tilde H_0 (\hat{b_1, \cdots ,b_{n+n'}}(f\otimes g))(*)\\
&= \tilde H_0 (\hat{b_1, \cdots ,b_{n}}f)\otimes \tilde H_0 (\hat{b_{n+1}, \cdots ,b_{n+n'}}g)(*)\\
&= \tilde H_0 (\hat{b_1, \cdots ,b_{n}}f\otimes \hat{b_{n+1}, \cdots ,b_{n+n'}}g)(*)\\
&=(\tilde H_0 (f)\otimes \tilde H_0 (g)) (b_1, \cdots ,b_{n+n'})
\end{align*}
Moreover:
\begin{align*}
\tilde H_0 (f\Delta)(b_1, \cdots ,b_{n+n'})
=\tilde H_0 (f \otimes f)(b_1, \cdots ,b_{n+n'})
=(\tilde H_0 (f)\otimes \tilde H_0 (f  )) (b_1, \cdots ,b_{n+n'})
\end{align*}

Therefore, $\tilde H_0$ preserves inverse products.
\end{proof}

\begin{remark} {\em 
As mentioned in the main body of the paper, an alternative way to define $\tilde H_0$ would be to use the fact that $\CNOT$ is defined freely on gates with relations.  Thus, it would have sufficed to provide the interpretation of the gates and then verify the relations.  Our proof produces the same functor but it avoids direct verification of the identities.  The direct proof may, in fact, be more straightforward; however, some of the lemmas which we have established in our proof will be reused.}
\end{remark}

As we will reduce the full and faithfulness of $\tilde H_0$ to its full and faithfulness on restriction idempotents, the next section is dedicated to describing a normal form for restriction idempotents in $\CNOT$ and 
establishing the full and faithfulness of $\tilde H_0$ on restriction idempotents.


\subsection{Clausal form for restriction idempotents in \texorpdfstring{$\CNOT$}{CNOT}}
\label{C.2}

The restriction idempotents of $\Par(\CTor)^{*}$ are determined by finite sets of torsor equations.  The restriction idempotents of $\CNOT$ also have a normal form,  as a conjunction of clauses: we call this the clausal form for restriction idempotents in $\CNOT$.  As torsor equations can be translated into clauses it follows that $\tilde H_0$ is full on restriction idempotents.   Furthermore, by showing  that one can perform Gaussian elimination on these 
clauses we prove that $\tilde{H}_0$ is faithful on restriction idempotents.

\begin{definition} For any $n \in \mathbb{N}$, $1 \leq i \leq n$, define the map $\mathsf{swap}_{(i,n)}: n \rightarrow n$ is a map in $\CNOT$ inductively as follows:
$$
\swap_{(i,n)} :=
\begin{cases}
1_n & \text{If } $i=0$\\
(  1_{n-(i+1)} \otimes \swap \otimes 1_{i-1}) (\swap_{(i+1,n)})  & \text{Otherwise}
\end{cases}
$$

\end{definition}

For example, consider the circuit $\swap_{(4,5)}$:

$$

\end{align*}

\end{enumerate}

\end{lemma}
\begin{lemma}
\label{lemma:clause_idempotent}
In $\CNOT$ clauses are idempotent.

\end{lemma}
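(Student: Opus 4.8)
The plan is to prove $cc = c$ by a direct string-diagram rewriting, viewing the clause as the restriction idempotent that enforces a single affine $\mathbb{Z}_2$-equation on the $n$ identity wires. Writing $L := l_{i_1,n}\cdots l_{i_m,n}$, so that $c = (\zeroin\otimes 1_n)\,L\,(a\otimes 1_n)$, unfolding the composite gives $cc = (\zeroin\otimes 1_n)\,L\,(a\otimes 1_n)(\zeroin\otimes 1_n)\,L\,(a\otimes 1_n)$, and all the work is concentrated at the junction $(a\otimes 1_n)(\zeroin\otimes 1_n)$, which is the map $a;\zeroin$ acting on the clause wire (identity elsewhere). Throughout I would use two preliminary facts about literals: each literal is self-inverse, $l_{i,n}l_{i,n}=1_{n+1}$, since $\swap_{(i,n+1)}\swap_{(i,n+1)}^{\circ}=1$ and $\cnot\,\cnot=1$ by \ref{CNTii}; and each $l_{i,n}$ leaves the $n$ identity wires unchanged (the clause wire is the target and the identity wires are the controls), so that after the first $L$ the clause wire carries the accumulated value $v=\bigoplus_j x_{i_j}$ while the identity wires are untouched.

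The first step is to open the junction. In the case $a=\zeroout$ the junction is the cut $\zeroout;\zeroin$, and Lemma~\ref{3Results}(i) rewrites it as a copy $\Delta_1(1\otimes\zeroout)$. After this rewrite the value $v$ produced by the first $L$ is duplicated: one copy is consumed by the first $\zeroout$, enforcing $v=0$ (which is exactly the first clause), and the other copy is fed into the second $L$. The case $a=\oneout$ is handled dually using Lemma~\ref{3Results}(ii), with one extra $\mathsf{not}$ to account for the fact that the junction there is $\oneout;\zeroin$ rather than the symmetric $\oneout;\onein$; this only adds bookkeeping and does not change the shape of the argument.

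The second step is to collapse the recomputation. The copied clause value $v$ enters the second $L$, which XORs the same controls $x_{i_1},\dots,x_{i_m}$ back onto it; the goal is to show that this forces the second clause wire to the constant $0$ (resp. $1$), so that the final ancilla $a$ acts on a constant wire and is redundant. This is the heart of the proof: one pushes the second block's controlled-nots back across the copy $\Delta_1$ using naturality of $\Delta$ (Lemma~\ref{lem:fanoutNatural}) together with the copy-interaction identity of Lemma~\ref{lemmas}\ref{lem:helperLemma}, and then pairs them against the corresponding gates of the first $L$ via $\cnot\,\cnot=1$ (\ref{CNTii}). After the cancellations the second clause wire holds precisely the constant that the final ancilla tests for, and the resulting $\zeroin$-$\cdots$-$\zeroout$ (resp. $\onein$-$\cdots$-$\oneout$) segment is removed using the vanishing identity Lemma~\ref{lemmas}\ref{lem:zerovanish} and \ref{CNTvi}. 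What survives is exactly $(\zeroin\otimes 1_n)\,L\,(a\otimes 1_n)=c$.

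The main obstacle is this second step: turning the intuitively obvious statement ``recomputing and rechecking the same affine form is redundant'' into a rewriting that uses only the stated identities. Concretely, the delicate point is that a controlled-not targeting one branch of a copy cannot simply be slid across the copy, so one must route through the bialgebra-style copy-interaction lemmas (naturality of $\Delta$ and Lemma~\ref{lemmas}\ref{lem:helperLemma}) to relate the second block's gates to those of the first block before $\cnot\,\cnot=1$ can be applied. A secondary, purely technical point is the small asymmetry between the $\zeroout$ and $\oneout$ cases at the junction, which is absorbed by the extra $\mathsf{not}$ noted above.
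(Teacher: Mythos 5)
Your argument is sound, but it is not the paper's: the two proofs run in opposite directions and use different key lemmas. The paper never analyses the composite $cc$ at all. It starts from the single clause $c$ and \emph{duplicates} it: the input ancilla is split as $\zeroin=(\zeroin\otimes\zeroin)\nabla$ (naturality of $\Delta$ plus separability), the $\nabla$ is then pushed rightward through the literals by repeated use of Lemma~\ref{3Results}~(iii), which turns each literal on the merged wire into one literal on each pre-merge wire, and finally the output ancilla is split by $\nabla\la b| = \la b|\otimes\la b|$. The result is two parallel clauses, i.e.\ $cc$; nothing is ever cancelled, and both choices of output ancilla are handled uniformly, so there is no case split and no $\mathsf{not}$ bookkeeping. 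Your route instead works on $cc$ directly, opening the junction $a\,\zeroin$ via Lemma~\ref{3Results}~(i)/(ii) and then cancelling the second block against the first via \ref{CNTii}; that is a perfectly viable, more ``computational'' proof, bought at the price of the case split on $a$, the extra $\mathsf{not}$, and the cancellation machinery.

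One point you must make explicit for the cancellation step to be honest. Transporting a literal backwards across $\Delta_1$ is governed (unfold $\Delta_1$ as $\zeroin$ plus a $\cnot$ and use Lemma~\ref{lemmas}~\ref{lem:helperLemma} together with \ref{CNTiii} and \ref{CNTv}) by the identity: a literal acting on one output branch of the copy equals that literal on the input wire, followed by the copy, followed by a \emph{compensating} literal on the \emph{other} output branch. So every gate you move across $\Delta_1$ to pair against the first $L$ deposits a copy of itself on the dead branch. After the \ref{CNTii} cancellations, the branch feeding the final ancilla does carry a constant and vanishes by Lemma~\ref{lemmas}~\ref{lem:zerovanish} and \ref{CNTvi}, but the clause that survives is reassembled from these compensating literals, terminated by the junction's ancilla --- it is not the original first block surviving untouched. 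Stated without the compensators, your step 2 would erase every literal and reduce $cc$ to the wrong circuit; with them, your outline is correct and all the lemmas you cite suffice.
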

\begin{proof}
To show idempotence of a clause, we describe how to duplicate it.   First, using the naturality of $\Delta$, split the input ancilla bit on the clause wire  $\hat{0} = (\hat{0} \otimes  \hat{0})\nabla$.  Then by repeatedly using Lemma \ref{3Results} {\em (iii)}, copy all of the literals onto both of the new clause wires.  Then use the naturality of $\nabla$ on the output ancillary bit $\nabla \la b | := \la b | \otimes \la b |$ to split both clauses apart.

\end{proof}

\begin{proposition}
In $\CNOT$
\begin{enumerate}[label=(\roman*)]
\item Every restriction idempotent is equivalent to a circuit in clausal form.
\item Every circuit in clausal form is a restriction idempotent.
\end{enumerate}
\end{proposition}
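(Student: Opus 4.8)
The plan is to treat the two parts separately, disposing of (ii) first since it follows almost immediately from the discrete inverse structure established in Theorem~\ref{thm:CNOTINV}. By Lemma~\ref{lemma:clause_idempotent} every clause $c:n\to n$ is idempotent, and because $\CNOT$ is an inverse category every idempotent is a restriction idempotent; hence each clause satisfies $c=\bar{c}$. It then suffices to observe that restriction idempotents on a fixed object are closed under composition: if $e=\bar{e}$ and $d=\bar{d}$, then by \ref{R.2} they commute, $ed=\bar{e}\,\bar{d}=\bar{d}\,\bar{e}=de$, whence $(ed)(ed)=e(de)d=e(ed)d=ed$, so $ed$ is idempotent and therefore (again by the inverse-category fact) a restriction idempotent. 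A trivial induction on the number of clauses shows that any composite $c_1\cdots c_k$ is a restriction idempotent, which is (ii).

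For (i), since every restriction idempotent $e$ satisfies $e=\bar{e}$, it is enough to prove the stronger claim that $\bar{f}$ admits a clausal form for \emph{every} circuit $f:n\to m$, which I would establish by induction on the size $|f|$. When $|f|=0$ the circuit $f$ is a permutation, so $\bar{f}=1_n$, the empty clausal form. For the inductive step I reuse the reduction from Proposition~\ref{prop:symmetricLatchable}: pre-composing with a permutation $p$ conjugates the restriction, $\overline{ph}=p\,\bar{h}\,p\cnv$ by \ref{R.4}, and conjugating a clausal form by a permutation merely relabels the wires on which the literals act, so it stays in clausal form. Thus I may assume the leftmost generator $G$ acts on the top wires, write $f=(G\otimes 1)h$ with $|h|<|f|$, and apply the inductive hypothesis to obtain $\bar{h}$ in clausal form. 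In all cases $\bar{f}=(G\otimes 1)\,\bar{h}\,(G\cnv\otimes 1)$.

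The argument then splits by generator. If $G=\cnot$ (a total isomorphism, self-inverse), then $\bar{f}=G\,\bar{h}\,G\cnv$, and inserting $G\cnv G=1$ between consecutive clauses reduces the task to showing that the conjugate $G\,c\,G\cnv$ of a single clause is again expressible in clausal form; semantically this is nothing more than the fact that a linear change of coordinates carries one affine equation to another, and syntactically it is obtained by sliding $G$ through the literals of $c$ using \ref{CNTiii}, \ref{CNTv} and \ref{CNTviii}. If $G=\onein$ (resp.\ $\oneout$), then $\bar{f}$ sandwiches $\bar{h}$ between $\onein\otimes 1$ and $\oneout\otimes 1$ (resp.\ $\oneout\otimes 1$ and $\onein\otimes 1$); the prepared-and-measured top wire is pinned to a fixed value, so any clause of $\bar{h}$ controlled from that wire has its controlling literal absorbed --- toggling the terminal ancilla $\oneout\leftrightarrow\zeroout$ via \ref{CNTvii} --- while clauses not referring to it are unchanged, yielding a clausal form on the remaining $n$ wires. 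Lemma~\ref{3Results} supplies the rewrites needed to move the ancillary bits past the literals.

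I expect the CNOT-conjugation case to be the main obstacle: re-expressing $G\,c\,G\cnv$ in clausal form amounts to carrying out, purely inside the graphical calculus, the linear substitution that a $\cnot$ induces on the affine equation encoded by $c$, and this requires repeatedly commuting the conjugating gate past every literal and then re-collecting the result, which is where the bulk of the diagrammatic bookkeeping lives. The two ancilla cases are conceptually simpler, since they merely implement substitution/projection of a single coordinate, but they still rely on Lemma~\ref{3Results} to keep the circuit in the required normal form throughout.
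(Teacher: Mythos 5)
Your proposal is correct and takes essentially the same route as the paper's own proof: part (ii) rests on Lemma~\ref{lemma:clause_idempotent} together with the fact that idempotents in an inverse category are restriction idempotents, and part (i) proceeds by induction on circuit size, reducing via permutations and case-splitting on the leftmost generator, with the very same rewrite machinery (\ref{CNTiii}, \ref{CNTv}, Lemma~\ref{lemmas}\ref{lem:helperLemma}, Lemma~\ref{3Results}) doing the work. Your two organizational variations --- conjugating the $\cnot$ clause-by-clause after inserting $G\cnv G=1$ rather than pushing the two gates to meet in the middle, and deriving commutation of clauses abstractly from \ref{R.2} instead of from \ref{CNTv} --- are harmless refinements of the same argument, not a different approach.
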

\begin{proof}\
\begin{enumerate}[label=(\roman*)]
\item
Given a restriction idempotent $e:n\to n$ for some $n\in \mathbb{N}$, we prove $e$ is in clausal form by induction on the size of the circuit.

\begin{itemize}
\item $1_n$ is in clausal form as $1_n = 1_n\otimes 1_0 = 1_n \otimes \onein\oneout$.

\item Suppose inductively for all circuits $f$ such that $|f|<k$, the restriction idempotent $\bar f = ff\cnv$ is in clausal form. Consider some restriction idempotent $f:n\to n$ such that $|f|= k$.  Decompose $f$ into circuits $g$ and $h$ such that $|g|=1$, $|h|=k-1$ and $f=gh$.  Consider the three following cases.

\begin{itemize}
\item
Suppose that $\oneout \in g$.

Note that $\bar f = ff\cnv = ghh\cnv g\cnv = g\bar h g\cnv$.  The two circuits $g$ and $g\cnv$ form a clause, by Lemma \ref{3Results} {\em(iv)}.
Recall that $\bar h$ is in clausal form by supposition, as $|h|=k-1<k$; therefore, $\bar f$ is the composition of clauses, and thus a clause itself.

\item
Suppose that $\onein \in g$.

As $\bar f = ff\cnv = ghh\cnv g\cnv = g\bar h g\cnv$ and $\bar h$ is in clausal form by Lemma \ref{lemma:behaveasexpected}, we push $g$ and $g \cnv$ toward the middle through $\bar h$ until the two ancill\ae\ meet and annihilate each other by \ref{CNTii}.  This process may have turned some literals into $\mathsf{not}$ gates, so slide the {\sf not} gates to the right side of the clause wires with \ref{CNTv}.  This may toggle the output ancill\ae\ on some clause wires; however, as the output ancilla of a clause can be either $\zeroout$ or $\oneout$, $\bar f$ is in clausal form.

\item
Suppose that $\cnot \in g$.

Again, we push both $\cnot$ gates inward through $\bar h$.  If the $\cnot$ gates reach each other by $\ref{CNTii}$, we are done.  If this does not happen immediately, as $\bar h$ is in clausal form, there are two cases: the control or operating bits may be adjacent to a control bit of a clause.  In the first case, the two control bits commute by \ref{CNTiii}.  In the other case, by Lemma \ref{lemmas} \ref{lem:helperLemma}, we may add another literal to the clause and pass the operating bit through.
\end{itemize}
\end{itemize}
\item
Given a circuit $t = d_1 d_2 \cdots d_m$ in clausal form where $d_1, \cdots , d_m$ are clauses,
\[ tt = d_1 \cdots d_m d_1 \cdots d_m = d_1^2 d_2^2\cdots d_m ^2 = d_1 d_2\cdots d_m \]
as clauses commute by $\ref{CNTv}$ and are idempotent by Lemma \ref{lemma:clause_idempotent}.  Therefore, as $\CNOT$ is an inverse category $t$ is a restriction idempotent.
\end{enumerate}
\end{proof}

\begin{theorem}
\label{Theorem: restriction faithful}
$\tilde H_0: \CNOT \rightarrow \ParIso(\CTor_2)^*$ is full and faithful on restriction idempotents.
\end{theorem}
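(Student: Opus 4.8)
The plan is to exploit the clausal normal form just established, setting up a dictionary between clauses and affine equations over $\Z_2$, and then treating a circuit in clausal form as an augmented matrix on which Gaussian elimination can be carried out inside $\CNOT$. First I would pin down the image of a single clause under $\tilde H_0$. A clause $c = (\zeroin \otimes 1_n) l_{i_1,n}\cdots l_{i_m,n}(a\otimes 1_n)$ sends a fresh $0$ down the clause wire, successively adds the controlling bits $x_{i_1},\dots,x_{i_m}$ to it, and then measures with $a$. Using the description of $\tilde H_0$ through $h_0 = \Total(\CNOT)(0,\_)$ together with Lemma \ref{lemma:behaveasexpected}, this forces the relation $x_{i_1}\oplus\cdots\oplus x_{i_m}=b$, where $b=1$ if $a=\oneout$ and $b=0$ if $a=\zeroout$; hence $\tilde H_0(c)$ is precisely the restriction idempotent of $\ParIso(\CTor_2)^*$ cutting down to the affine hyperplane defined by that equation. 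Since $\tilde H_0$ preserves inverse products (Lemma \ref{lemma:PreservesInverseProducts}), a composite of clauses maps to the restriction onto the intersection of the corresponding hyperplanes, that is, onto the solution set of the associated affine system. Finally, recall that, all monics in $\CTor_2$ being regular, every restriction idempotent of $\ParIso(\CTor_2)^*$ is exactly the restriction onto some such affine solution set.

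Fullness on restriction idempotents is then almost immediate. Given a restriction idempotent $e$ on $\Z_2^n$, write its domain affine subspace as the solution set of a finite system of affine $\Z_2$-equations (the inconsistent case yielding the degenerate circuit $\Omega_{n,n}$, which is itself a one-clause circuit), translate each equation into a clause by the dictionary, and compose them into a circuit $t$ in clausal form. By the preceding proposition $t$ is a restriction idempotent, and by the dictionary $\tilde H_0(t)=e$.

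The substance is faithfulness. Given restriction idempotents $t_1,t_2$ with $\tilde H_0(t_1)=\tilde H_0(t_2)$, I would first put both in clausal form, so that each presents a system of affine equations whose solution set is the image subspace; equality of images forces the two solution sets to coincide. The key is to show that the $\CNOT$ identities realize the elementary row operations of Gaussian elimination on clauses while leaving the image fixed: clauses may be reordered since they commute by \ref{CNTv}, repeated clauses collapse since clauses are idempotent (Lemma \ref{lemma:clause_idempotent}), the vacuous equation $0=0$ is erasable, and --- the crucial move --- one clause can be added to another (adding one row to another) by copying its literals along the clause wire using Lemma \ref{3Results} and Lemma \ref{lemmas} \ref{lem:helperLemma}. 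Since these moves generate all of Gaussian elimination and never change the solution set, each clausal form can be rewritten into the reduced row echelon form of its augmented matrix, a canonical presentation depending only on the solution set. As the solution sets of $t_1$ and $t_2$ agree, both rewrite to the same canonical circuit, whence $t_1=t_2$.

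The main obstacle will be this last argument, and specifically the circuit implementation of row addition: I must verify that copying the literals of one clause onto another clause wire really does combine the two equations --- including the $\Z_2$ sum of their right-hand ancill\ae\ --- and that the move can be reversed, so that it is a genuine invertible row operation, and I must check that the resulting reduced echelon normal form is literally unique as a circuit up to the already-established commutations. Care is also needed with the boundary cases, namely inconsistent systems, which must all collapse to $\Omega_{n,n}$ by Lemma \ref{lemma:tensorOmegaDegen}, and vacuous clauses, so that the normal form is well defined across the board.
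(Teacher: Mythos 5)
Your proposal is correct and takes essentially the same route as the paper's own proof: the clause--equation dictionary yields fullness (with the inconsistent case handled by a degenerate clause), and faithfulness is obtained by realizing Gaussian elimination on clauses inside $\CNOT$, with the crucial row-addition step implemented exactly as you anticipate --- joining the two clause wires via the naturality of $\Delta$ and copying literals through it using Lemma \ref{3Results}, then renormalizing the input ancilla and output ancilla with \ref{CNTii}. The paper simply argues that any two equivalent systems are connected by such realizable elimination steps rather than normalizing both circuits to a reduced-echelon canonical form, which is a cosmetic rather than substantive difference.
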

\begin{proof}
A restriction in $\Par(\Tor_2)$ is given by a span in which both legs are equal and, thus, monic.   Thus, restrictions correspond precisely to subobjects in $\Tor_2$.  However, these are determined by sets of torsor equations  of the form:
$$\left\{ \sum_j b_{i,j}  = a_i\right\}_i$$
Each equation, $\sum_j b_{i,j}  = a_i$, corresponds in turn to a clause which picks out the wires $b_{i,j}$ and has output ancillary bit $\la a_i |$.  This immediately means that $\tilde H_0$ is full on restriction idempotents.

Consider now an arbitrary restriction idempotent expressed as a circuit in clausal form,  under $\tilde H_0$ it corresponds to a set of equations.
We must show that two restriction idempotents in $\CNOT$, whose corresponding sets of equations are equivalent in $\CTor_2$, must be equal in $\CNOT$.   This amounts to showing that we can perform Gaussian elimination on 
clauses in $\CNOT$, as two sets of equations are equivalent in $\CTor_2$ if and only if they can be shown so by Gaussian elimination steps.

Given two clauses $c$ and $c'$ we show that we can perform the Gaussian elimination step
$$\{c,c'\} \mapsto \{c,c+c'\}$$
and maintain equality.

We first join the input ancill\ae\ of both clause wires into $(\zeroin\otimes\zeroin) := \zeroin \Delta$ using naturality of $\Delta$.

By \ref{CNTii}, we copy two copies of each literal in $c$ to the right of the input ancilla.  By  Lemma \ref{3Results} {\em (iii)} push one copy of each new literal through $\Delta$.  On one wire all of the literals will annihilate, and on the other only the common literals between $c$ and $c'$ will annihilate.  Use Lemma \ref{3Results} (i) and \ref{3Results} {\em (ii)} to split the literals to the left and right of $\Delta$.  This may have shifted the input ancilla of the second clause to be $\onein$.  In this case, use $\ref{CNTii}$ to push a {\sf not} gate from the left to right of the clause wire and negate the output ancillary bit of the second clause.  The result is two clauses corresponding to the Gaussian elimination step.  Therefore, we can perform Gaussian elimination on clauses in $\CNOT$.

For example, suppose we are given a circuit determined by the equations:
$$
\begin{bmatrix}
1 & 0 & 1\\
1 & 1 & 0
\end{bmatrix}
\begin{bmatrix}
x_1\\
x_2\\
x_3
\end{bmatrix}
=
\begin{bmatrix}
1\\
0
\end{bmatrix}
$$
We can perform Gaussian elimination as follows
\begin{align*}
\begin{tikzpicture}[baseline={([yshift=-.5ex]current bounding box.center)}]
\begin{pgfonlayer}{nodelayer}
\node [style=rn] (-1) at (-.5, 1) {$x_1$};
\node [style=rn] (-2) at (-.5, .5) {$x_2$};
\node [style=rn] (-3) at (-.5, 0) {$x_3$};
\node [style=zeroin] (0) at (0, 1.5) {};
\node [style=oplus] (1) at (.5, 1.5) {};
\node [style=oplus] (2) at (1, 1.5) {};
\node [style=oneout] (3) at (1.5, 1.5) {};
\node [style=zeroin] (4) at (2, 1.5) {};
\node [style=oplus] (5) at (2.5, 1.5) {};
\node [style=oplus] (6) at (3, 1.5) {};
\node [style=zeroout] (7) at (3.5, 1.5) {};
\node [style=rn] (10) at (0, 1) {};
\node [style=dot] (11) at (.5, 1) {};
\node [style=rn] (12) at (1, 1) {};
\node [style=rn] (13) at (1.5, 1) {};
\node [style=rn] (14) at (2, 1) {};
\node [style=dot] (15) at (2.5, 1) {};
\node [style=rn] (16) at (3, 1) {};
\node [style=rn] (17) at (3.5, 1) {};
\node [style=rn] (20) at (0, .5) {};
\node [style=rn] (21) at (.5, .5) {};
\node [style=rn] (22) at (1, .5) {};
\node [style=rn] (23) at (1.5, .5) {};
\node [style=rn] (24) at (2, .5) {};
\node [style=rn] (25) at (2.5, .5) {};
\node [style=dot] (26) at (3, .5) {};
\node [style=rn] (27) at (3.5, .5) {};
\node [style=rn] (30) at (0, 0) {};
\node [style=rn] (31) at (.5, 0) {};
\node [style=dot] (32) at (1, 0) {};
\node [style=rn] (33) at (1.5, 0) {};
\node [style=rn] (34) at (2, 0) {};
\node [style=rn] (35) at (2.5, 0) {};
\node [style=rn] (36) at (3, 0) {};
\node [style=rn] (37) at (3.5, 0) {};
\end{pgfonlayer}
\begin{pgfonlayer}{edgelayer}
\draw plot [smooth, tension=1] coordinates {(0) (3)};
\draw plot [smooth, tension=1] coordinates {(4) (7)};
\draw plot [smooth, tension=1] coordinates {(10) (17)};
\draw plot [smooth, tension=1] coordinates {(20) (27)};
\draw plot [smooth, tension=1] coordinates {(30) (37)};
\draw plot [smooth, tension=1] coordinates {(1) (11)};
\draw plot [smooth, tension=1] coordinates {(5) (15)};
\draw plot [smooth, tension=1] coordinates {(2) (32)};
\draw plot [smooth, tension=1] coordinates {(6) (26)};
\end{pgfonlayer}
\end{tikzpicture}
&=
\begin{tikzpicture}[baseline={([yshift=-.5ex]current bounding box.center)}]
\begin{pgfonlayer}{nodelayer}
\node [style=zeroin] (0) at (0, 2) {};
\node [style=fanout] (1) at (1, 2) {};
\node [style=rn] (1a0) at (1.5, 2.5) {};
\node [style=oplus] (1a1) at (2, 2.5) {};
\node [style=oplus] (1a2) at (2.5, 2.5) {};
\node [style=rn] (1a3) at (3, 2.5) {};
\node [style=rn] (1a4) at (3.5, 2.5) {};
\node [style=oneout] (1a5) at (4, 2.5) {};
\node [style=rn] (1b0) at (1.5, 1.5) {};
\node [style=rn] (1b1) at (2, 1.5) {};
\node [style=rn] (1b2) at (2.5, 1.5) {};
\node [style=oplus] (1b3) at (3, 1.5) {};
\node [style=oplus] (1b4) at (3.5, 1.5) {};
\node [style=zeroout] (1b5) at (4, 1.5) {};
\node [style=rn] (10) at (0, 1) {};
\node [style=dot] (11) at (2, 1) {};
\node [style=dot] (12) at (3, 1) {};
\node [style=rn] (13) at (4, 1) {};
\node [style=rn] (20) at (0, .5) {};
\node [style=dot] (21) at (3.5, .5) {};
\node [style=rn] (22) at (4, .5) {};
\node [style=rn] (30) at (0, 0) {};
\node [style=dot] (31) at (2.5, 0) {};
\node [style=rn] (32) at (4, 0) {};
\end{pgfonlayer}
\begin{pgfonlayer}{edgelayer}
\draw plot [smooth, tension=.5] coordinates {(1) (0)};
\draw plot [smooth, tension=.5] coordinates {(1) (1a0) (1a1) (1a2) (1a3) (1a4) (1a5)};
\draw plot [smooth, tension=.5] coordinates {(1) (1b0) (1b1) (1b2) (1b3) (1b4) (1b5)};
\draw plot [smooth, tension=.5] coordinates {(10) (13)};
\draw plot [smooth, tension=.5] coordinates {(20) (22)};
\draw plot [smooth, tension=.5] coordinates {(30) (32)};
\draw plot [smooth, tension=.5] coordinates {(11) (1a1)};
\draw plot [smooth, tension=.5] coordinates {(12) (1b3)};
\draw plot [smooth, tension=.5] coordinates {(21) (1b4)};
\draw plot [smooth, tension=.5] coordinates {(31) (1a2)};
\end{pgfonlayer}
\end{tikzpicture}&\\
&=
\begin{tikzpicture}[baseline={([yshift=-.5ex]current bounding box.center)}]
\begin{pgfonlayer}{nodelayer}
\node [style=zeroin] (-4) at (-2, 2) {};
\node [style=oplus] (-3) at (-1.5, 2) {};
\node[style=oplus] (-2) at (-1, 2) {};
\node [style=oplus] (-1) at (-.5, 2) {};
\node [style=oplus] (0) at (0, 2) {};
\node [style=fanout] (1) at (1, 2) {};
\node [style=rn] (1a0) at (1.5, 2.5) {};
\node [style=oplus] (1a1) at (2, 2.5) {};
\node [style=oplus] (1a2) at (2.5, 2.5) {};
\node [style=rn] (1a3) at (3, 2.5) {};
\node [style=rn] (1a4) at (3.5, 2.5) {};
\node [style=oneout] (1a5) at (4, 2.5) {};
\node [style=rn] (1b0) at (1.5, 1.5) {};
\node [style=rn] (1b1) at (2, 1.5) {};
\node [style=rn] (1b2) at (2.5, 1.5) {};
\node [style=oplus] (1b3) at (3, 1.5) {};
\node [style=oplus] (1b4) at (3.5, 1.5) {};
\node [style=zeroout] (1b5) at (4, 1.5) {};
\node [style=rn] (-13) at (-2, 1) {};
\node [style=dot] (-12) at (-1.5, 1) {};
\node [style=dot] (-11) at (-.5, 1) {};
\node [style=rn] (10) at (0, 1) {};
\node [style=dot] (11) at (2, 1) {};
\node [style=dot] (12) at (3, 1) {};
\node [style=rn] (13) at (4, 1) {};
\node [style=rn] (-21) at (-2, .5) {};
\node [style=rn] (20) at (0, .5) {};
\node [style=dot] (21) at (3.5, .5) {};
\node [style=rn] (22) at (4, .5) {};
\node [style=rn] (-33) at (-2, 0) {};
\node [style=dot] (-32) at (-1, 0) {};
\node [style=dot] (-31) at (0, 0) {};
\node [style=rn] (30) at (0, 0) {};
\node [style=dot] (31) at (2.5, 0) {};
\node [style=rn] (32) at (4, 0) {};
\end{pgfonlayer}
\begin{pgfonlayer}{edgelayer}
\draw plot [smooth, tension=.5] coordinates {(-4) (1)};
\draw plot [smooth, tension=.5] coordinates {(1) (1a0) (1a1) (1a2) (1a3) (1a4) (1a5)};
\draw plot [smooth, tension=.5] coordinates {(1) (1b0) (1b1) (1b2) (1b3) (1b4) (1b5)};
\draw plot [smooth, tension=.5] coordinates {(-13) (13)};
\draw plot [smooth, tension=.5] coordinates {(-21) (22)};
\draw plot [smooth, tension=.5] coordinates {(-33) (32)};
\draw plot [smooth, tension=.5] coordinates {(11) (1a1)};
\draw plot [smooth, tension=.5] coordinates {(12) (1b3)};
\draw plot [smooth, tension=.5] coordinates {(21) (1b4)};
\draw plot [smooth, tension=.5] coordinates {(31) (1a2)};
\draw plot [smooth, tension=.5] coordinates {(-3) (-12)};
\draw plot [smooth, tension=.5] coordinates {(-2) (-32)};
\draw plot [smooth, tension=.5] coordinates {(-1) (-11)};
\draw plot [smooth, tension=.5] coordinates {(0) (-31)};
\end{pgfonlayer}
\end{tikzpicture}&\ref{CNTii}\\
&=
\begin{tikzpicture}[baseline={([yshift=-.5ex]current bounding box.center)}]
\begin{pgfonlayer}{nodelayer}
\node [style=zeroin] (-2) at (0, 2) {};
\node [style=oplus] (-1) at (.5, 2) {};
\node [style=oplus] (0) at (1, 2) {};
\node [style=fanout] (1) at (2, 2) {};
\node [style=rn] (1a0) at (2.5, 2.5) {};
\node [style=oneout] (1a1) at (3, 2.5) {};
\node [style=rn] (1b0) at (2.5, 1.5) {};
\node [style=oplus] (1b1) at (3, 1.5) {};
\node [style=oplus] (1b2) at (3.5, 1.5) {};
\node [style=zeroout] (1b3) at (4, 1.5) {};
\node [style=rn] (10) at (0, 1) {};
\node [style=dot] (11) at (.5, 1) {};
\node [style=rn] (12) at (4, 1) {};
\node [style=rn] (20) at (0, .5) {};
\node [style=dot] (22) at (3.5, .5) {};
\node [style=rn] (23) at (4, .5) {};
\node [style=rn] (30) at (0, 0) {};
\node [style=dot] (31) at (1, 0) {};
\node [style=dot] (32) at (3, 0) {};
\node [style=rn] (33) at (4, 0) {};
\end{pgfonlayer}
\begin{pgfonlayer}{edgelayer}
\draw plot [smooth, tension=.5] coordinates {(-2) (-1) (0) (1)};
\draw plot [smooth, tension=.5] coordinates {(1) (1a0) (1a1)};
\draw plot [smooth, tension=.5] coordinates {(1) (1b0) (1b1) (1b2) (1b3)};
\draw plot [smooth, tension=.5] coordinates {(10) (13)};
\draw plot [smooth, tension=.5] coordinates {(20) (23)};
\draw plot [smooth, tension=.5] coordinates {(30) (33)};
\draw plot [smooth, tension=.5] coordinates {(11) (-1)};
\draw plot [smooth, tension=.5] coordinates {(31) (0)};
\draw plot [smooth, tension=.5] coordinates {(32) (1b1)};
\draw plot [smooth, tension=.5] coordinates {(22) (1b2)};
\end{pgfonlayer}
\end{tikzpicture}&\text{Lemma \ref{3Results}, \ref{CNTii}}\\
&=
\begin{tikzpicture}[baseline={([yshift=-.5ex]current bounding box.center)}]
\begin{pgfonlayer}{nodelayer}
\node [style=zeroin] (0) at (0, 1.5) {};
\node [style=oplus] (1) at (.5, 1.5) {};
\node [style=oplus] (2) at (1, 1.5) {};
\node [style=oneout] (3) at (1.5, 1.5) {};
\node [style=onein] (4) at (2, 1.5) {};
\node [style=oplus] (5) at (2.5, 1.5) {};
\node [style=oplus] (6) at (3, 1.5) {};
\node [style=zeroout] (7) at (3.5, 1.5) {};
\node [style=rn] (10) at (0, 1) {};
\node [style=dot] (11) at (.5, 1) {};
\node [style=rn] (12) at (1, 1) {};
\node [style=rn] (13) at (1.5, 1) {};
\node [style=rn] (14) at (2, 1) {};
\node [style=rn] (15) at (2.5, 1) {};
\node [style=rn] (16) at (3, 1) {};
\node [style=rn] (17) at (3.5, 1) {};
\node [style=rn] (20) at (0, .5) {};
\node [style=rn] (21) at (.5, .5) {};
\node [style=rn] (22) at (1, .5) {};
\node [style=rn] (23) at (1.5, .5) {};
\node [style=rn] (24) at (2, .5) {};
\node [style=rn] (25) at (2.5, .5) {};
\node [style=dot] (26) at (3, .5) {};
\node [style=rn] (27) at (3.5, .5) {};
\node [style=rn] (30) at (0, 0) {};
\node [style=rn] (31) at (.5, 0) {};
\node [style=dot] (32) at (1, 0) {};
\node [style=rn] (33) at (1.5, 0) {};
\node [style=rn] (34) at (2, 0) {};
\node [style=dot] (35) at (2.5, 0) {};
\node [style=rn] (36) at (3, 0) {};
\node [style=rn] (37) at (3.5, 0) {};
\end{pgfonlayer}
\begin{pgfonlayer}{edgelayer}
\draw plot [smooth, tension=1] coordinates {(0) (3)};
\draw plot [smooth, tension=1] coordinates {(4) (7)};
\draw plot [smooth, tension=1] coordinates {(10) (17)};
\draw plot [smooth, tension=1] coordinates {(20) (27)};
\draw plot [smooth, tension=1] coordinates {(30) (37)};
\draw plot [smooth, tension=1] coordinates {(1) (11)};
\draw plot [smooth, tension=1] coordinates {(5) (35)};
\draw plot [smooth, tension=1] coordinates {(2) (32)};
\draw plot [smooth, tension=1] coordinates {(6) (26)};
\end{pgfonlayer}
\end{tikzpicture} & \text{Lemma } \ref{3Results} (ii)\\
&=
\begin{tikzpicture}[baseline={([yshift=-.5ex]current bounding box.center)}]
\begin{pgfonlayer}{nodelayer}
\node [style=zeroin] (0) at (0, 1.5) {};
\node [style=oplus] (1) at (.5, 1.5) {};
\node [style=oplus] (2) at (1, 1.5) {};
\node [style=oneout] (3) at (1.5, 1.5) {};
\node [style=zeroin] (4) at (2, 1.5) {};
\node [style=oplus] (5) at (2.5, 1.5) {};
\node [style=oplus] (6) at (3, 1.5) {};
\node [style=oneout] (7) at (3.5, 1.5) {};
\node [style=rn] (10) at (0, 1) {};
\node [style=dot] (11) at (.5, 1) {};
\node [style=rn] (12) at (1, 1) {};
\node [style=rn] (13) at (1.5, 1) {};
\node [style=rn] (14) at (2, 1) {};
\node [style=rn] (15) at (2.5, 1) {};
\node [style=rn] (16) at (3, 1) {};
\node [style=rn] (17) at (3.5, 1) {};
\node [style=rn] (20) at (0, .5) {};
\node [style=rn] (21) at (.5, .5) {};
\node [style=rn] (22) at (1, .5) {};
\node [style=rn] (23) at (1.5, .5) {};
\node [style=rn] (24) at (2, .5) {};
\node [style=rn] (25) at (2.5, .5) {};
\node [style=dot] (26) at (3, .5) {};
\node [style=rn] (27) at (3.5, .5) {};
\node [style=rn] (30) at (0, 0) {};
\node [style=rn] (31) at (.5, 0) {};
\node [style=dot] (32) at (1, 0) {};
\node [style=rn] (33) at (1.5, 0) {};
\node [style=rn] (34) at (2, 0) {};
\node [style=dot] (35) at (2.5, 0) {};
\node [style=rn] (36) at (3, 0) {};
\node [style=rn] (37) at (3.5, 0) {};
\end{pgfonlayer}
\begin{pgfonlayer}{edgelayer}
\draw plot [smooth, tension=1] coordinates {(0) (3)};
\draw plot [smooth, tension=1] coordinates {(4) (7)};
\draw plot [smooth, tension=1] coordinates {(10) (17)};
\draw plot [smooth, tension=1] coordinates {(20) (27)};
\draw plot [smooth, tension=1] coordinates {(30) (37)};
\draw plot [smooth, tension=1] coordinates {(1) (11)};
\draw plot [smooth, tension=1] coordinates {(5) (35)};
\draw plot [smooth, tension=1] coordinates {(2) (32)};
\draw plot [smooth, tension=1] coordinates {(6) (26)};
\end{pgfonlayer}
\end{tikzpicture}
\end{align*}

Which represents the reduced system of linear equations:
$$
\begin{bmatrix}
1 & 0 & 1\\
0 & 1 & 1
\end{bmatrix}
\begin{bmatrix}
x_1\\
x_2\\
x_3
\end{bmatrix}
=
\begin{bmatrix}
1\\
1
\end{bmatrix}
$$

Hence, if the image of two circuits are equal under the functor $\tilde H_0$, then they are the same.
\end{proof}

\subsection{\texorpdfstring{$\tilde H_0: \CNOT \to \ParIso(\CTor_2)^*$}{H0} is essentially surjective}


In order to prove that $\tilde H_0$ is essentially surjective, we invoke the alternative characterization of $\CTor_2$ given by  Proposition \ref{rem:tor2}.

\begin{proposition}
\label{prop:essentiallysurjective}
$\tilde H_0$ is essentially surjective.
\end{proposition}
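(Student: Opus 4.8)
The plan is to reduce everything to the classification of objects supplied by Proposition~\ref{rem:tor2}. Every object $T$ of $\ParIso(\CTor_2)^{*}$ is a non-empty, finitely generated commutative torsor of characteristic $2$, so by Proposition~\ref{rem:tor2} there is some $n \in \mathbb{N}$ with $T \cong \mathbb{Z}_2^n$, where $\mathbb{Z}_2^n$ carries the para-multiplication $(a,b,c)\mapsto a\oplus b\oplus c$. Since isomorphic objects have the same essential-image status, it therefore suffices to prove that $\tilde H_0(n)\cong \mathbb{Z}_2^n$ in $\ParIso(\CTor_2)^{*}$ for every $n$; taking the $n$ produced above then places each $T$ in the essential image.

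To identify $\tilde H_0(n)$, recall that on objects $\tilde H_0(n)=H_0(n)$ has underlying set $h_0(n)=\Total(\CNOT)(0,n)$, equipped with the internal torsor structure transported by $h_0$ (Lemma~\ref{lem:FisFunctor}). I would take $\hat{\_}_n:\mathbb{Z}_2^n\to \Total(\CNOT)(0,n)$ from Definition~\ref{defn:hat} as the candidate isomorphism and verify two things. First, that it is a bijection: surjectivity follows from Lemma~\ref{lemma:totalordegen} (a state $0\to n$ is total or degenerate) together with Lemma~\ref{lemma:behaveasexpected}, by a structural induction showing any total state rewrites in $\CNOT$ to some $\hat{x}$; injectivity is the distinctness of the $\hat{x}$. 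Second, that it is a torsor homomorphism: the induced para-multiplication on $\tilde H_0(n)$ is computed by applying $+_n$ of Definition~\ref{defn:plusmap} and projecting the last of each triple of wires, and the one-wire computation of $+_1$ (which sends $(a,b,c)$ to $(a,b,a\oplus b\oplus c)$) combined with Lemma~\ref{lemma:pluscommutes} shows this equals $(a,b,c)\mapsto a\oplus b\oplus c$. Hence $\hat{\_}_n$ is an isomorphism $\mathbb{Z}_2^n\xrightarrow{\sim}\tilde H_0(n)$ of torsors, a fortiori a total isomorphism in $\ParIso(\CTor_2)^{*}$.

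A cleaner packaging, which I would mention as an alternative, exploits that $\tilde H_0$ preserves inverse products (Lemma~\ref{lemma:PreservesInverseProducts}) and the tensor. As the object $n$ of $\CNOT$ is the $n$-fold tensor $\underbrace{1\otimes\cdots\otimes 1}_{n}$ with unit $0$, monoidality gives $\tilde H_0(n)\cong \tilde H_0(1)^{\otimes n}$ in $\CTor_2$; since the tensor there is the categorical product and the $n$-fold product of $\mathbb{Z}_2$ is $\mathbb{Z}_2^n$, the whole claim reduces to the single identification $\tilde H_0(1)\cong \mathbb{Z}_2$, i.e.\ to $\Total(\CNOT)(0,1)=\{\zeroin,\onein\}$.

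I expect the genuine obstacle to be the injectivity half of the bijection, i.e.\ the distinctness of the states $\hat{x}$, which at bottom is the assertion $\onein\neq\zeroin$ in $\CNOT$ (note that essential surjectivity really does \emph{require} this, since $\onein=\zeroin$ would collapse every $\tilde H_0(n)$ to a one-point torsor). Surjectivity and the torsor-homomorphism property are carried by the already-established Lemmas~\ref{lemma:totalordegen}, \ref{lemma:behaveasexpected}, and \ref{lemma:pluscommutes}, but distinctness needs an invariant separating the states, and this must be supplied from outside the hom-set $\Total(\CNOT)(0,1)$ itself. The natural source is the direct interpretation of $\CNOT$ on the literal sets $\mathbb{Z}_2^n$ (the alternative description of $\tilde H_0$ noted in the Remark following Lemma~\ref{lemma:PreservesInverseProducts}), under which $\onein$ and $\zeroin$ manifestly pick out the two distinct points $1,0\in\mathbb{Z}_2$; I would invoke this to conclude $|\Total(\CNOT)(0,1)|=2$ and hence $|\Total(\CNOT)(0,n)|=2^n$. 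Everything else is bookkeeping, consistent with the paper's remark that essential surjectivity is straightforward.
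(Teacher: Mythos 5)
Your proposal is correct and takes essentially the same route as the paper: invoke Proposition~\ref{rem:tor2} to reduce the claim to exhibiting, for each $n$, an isomorphism $\tilde H_0(n)\cong(\mathbb{Z}_2^n,\_\oplus\_\oplus\_)$, which the paper settles by simply asserting $\Total(\CNOT)(0,n)=\mathbb{Z}_2^n$. Your elaboration of that asserted equality---surjectivity of $\hat{\_}_n$ via Lemmas~\ref{lemma:totalordegen} and~\ref{lemma:behaveasexpected}, the torsor-structure check via $+_n$ and Lemma~\ref{lemma:pluscommutes}, and in particular the observation that injectivity (i.e.\ $\onein\neq\zeroin$ in $\CNOT$) genuinely requires an external invariant such as the direct interpretation of the generators into $\ParIso(\CTor_2)^{*}$---is a legitimate filling-in of details the paper leaves implicit, not a departure from its argument.
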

\begin{proof}
Consider any torsor $(X,\times)$ in $\ParIso(\CTor_2)^*$.  There is some $n \in \mathbb{N}$ such that $(X,\times)\cong(\mathbb{Z}_2^n,\_\oplus\_\oplus\_)$ by Proposition \ref{rem:tor2}.  However, $\Total(\CNOT)(0,n)= \mathbb{Z}_2^n$.
\end{proof}


\subsection{\texorpdfstring{$\tilde H_0: \CNOT \to \ParIso(\CTor_2)^{*}$}{H0} is full}


In order to prove that $\tilde H_0$ is full, we will prove two useful results:

\begin{lemma}
\label{lemma:fullCopy}
Let $F:\X \to \Y$ be an inverse product preserving functor between discrete inverse categories. Let $f$ be a partial isomorphism in $\mathbb{Y}$. If $\la \bar{f}, f \ra := \Delta(\bar f \otimes {f})$ and $\la \bar{f\cnv}, f\cnv \ra:=  \Delta(\bar{f\cnv} \otimes {f\cnv})$ are in the image of $F$, then $f$ and $f\cnv$ are also in the image of $F$.
\end{lemma}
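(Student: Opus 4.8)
The plan is to reconstruct $f$ (and dually $f\cnv$) from the two graphs by a formula built entirely out of structure that $F$ preserves, and then to apply $F$. The heart of the argument is the reconstruction identity, valid for any partial isomorphism $f\colon A\to B$ in a discrete inverse category:
\[
f \;=\; \la\bar f,f\ra\; c_{A,B}\; \la\bar{f\cnv},f\cnv\ra\cnv ,
\]
where $c_{A,B}\colon A\otimes B\to B\otimes A$ is the symmetry. First I would verify this. By cocommutativity of $\Delta$ (\ref{DNV:1}) and naturality of the symmetry, $\la\bar f,f\ra\,c_{A,B}=\Delta_A(\bar f\otimes f)c_{A,B}=\Delta_A c_{A,A}(f\otimes\bar f)=\Delta_A(f\otimes\bar f)$; and since $\otimes$ preserves $(\_)\cnv$ and $\Delta\cnv=\nabla$, one has $\la\bar{f\cnv},f\cnv\ra\cnv=(\bar{f\cnv}\otimes f)\nabla_B$. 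Composing and simplifying each tensor factor by the inverse-category axioms — $f\,\bar{f\cnv}=f f\cnv f=f$ by \ref{INV:2} and $\bar f f=f$ by \ref{R.1} — collapses the middle to $\Delta_A(f\otimes f)\nabla_B$. Naturality of $\Delta$ (Lemma \ref{lem:fanoutNatural}) rewrites this as $f\,\Delta_B\nabla_B$, and separability of $\Delta$ (Lemma \ref{lemma:fanoutFaninInverses}), i.e. $\Delta_B\nabla_B=\Delta_B\Delta_B\cnv=1_B$, yields $f$.

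Next I would record the dual identity, obtained by interchanging $f$ with $f\cnv$ and $A$ with $B$:
\[
f\cnv \;=\; \la\bar{f\cnv},f\cnv\ra\; c_{B,A}\; \la\bar f,f\ra\cnv .
\]

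Finally I would apply $F$. Write $\la\bar f,f\ra=F(p)$ and $\la\bar{f\cnv},f\cnv\ra=F(q)$, which exist by hypothesis; note that the domains of these graphs are $A$ and $B$, so both objects lie in the image of $F$. Because $F$ preserves the inverse product structure it is a (strict) symmetric monoidal functor preserving $\Delta$, $\nabla=\Delta\cnv$ and $(\_)\cnv$; in particular $F(p)\cnv=F(p\cnv)$, $F(q)\cnv=F(q\cnv)$, and the symmetry $c_{A,B}$ is $F(c_{A',B'})$ for chosen object-preimages $A',B'$ of $A,B$. Hence every factor of each reconstruction identity is an $F$-image, and assembling the corresponding preimages in $\X$ exhibits $f$ and $f\cnv$ in the image of $F$.

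The main obstacle is finding and justifying the reconstruction identity: one must realize that a single graph does not suffice, since there is no counit available to discard the spurious copy of the domain, so the graph of $f\cnv$ is needed precisely to convert and absorb that copy through $f\cnv$, after which separability $\Delta\nabla=1$ performs the collapse. The only bookkeeping point in the last step is that the composite of the three $F$-images be realized as $F$ of a single $\X$-morphism; this is immediate once one observes that the intervening symmetry sits between the tensor objects $A\otimes B$ and $B\otimes A$, both tensors of image objects, so it carries a preimage symmetry in $\X$ that composes with the preimages of the two graphs.
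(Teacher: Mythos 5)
Your proof is correct, but it reconstructs $f$ by a genuinely different---and shorter---identity than the paper's. You recover $f$ as the three-factor composite $f=\la\bar f,f\ra\,c_{A,B}\,\la\bar{f\cnv},f\cnv\ra\cnv$, and your verification needs only cocommutativity \ref{DNV:1}, naturality of the symmetry and of $\Delta$, the inverse-category identities \ref{INV:2} and \ref{R.1}, and separability $\Delta\Delta\cnv=1$ (which the paper notes is equivalent to totality of $\Delta$). The paper instead establishes the longer identity
\[
f \;=\; \la\bar f,f\ra\,\bigl(1_A\otimes\la\bar{f\cnv},f\cnv\ra\bigr)\,\sigma\,\bigl(1_B\otimes\nabla_A\bigr)\,\la\bar{f\cnv},f\cnv\ra\cnv
\]
(with $\sigma$ the structural rearrangement $A\otimes(B\otimes A)\to B\otimes(A\otimes A)$), in which the graph of $f\cnv$ enters twice---once directly and once as a converse---and an auxiliary $\nabla_A$ is introduced; collapsing the resulting loop is exactly what forces the paper to invoke the semi-Frobenius property \ref{DNV:3} on top of naturality and separability. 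So your route buys economy: fewer factors, and no use of \ref{DNV:3} at all. Both arguments leave the same bookkeeping implicit at the end: each factor must be realized as the $F$-image of \emph{composable} maps of $\X$, which requires $F$ to preserve the symmetric monoidal structure (preservation of $(\_)\cnv$ is automatic for any functor between inverse categories, by uniqueness of partial inverses), and the paper is no more explicit about this than you are. One small slip worth fixing: to justify naturality and separability of $\Delta$ in the abstract category $\Y$ you cite Lemmas \ref{lem:fanoutNatural} and \ref{lemma:fanoutFaninInverses}, which are the $\CNOT$-specific statements; in $\Y$ these facts hold by the definition of inverse products (naturality is part of that definition, and separability follows from totality of $\Delta$), not by those lemmas.
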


\begin{proof}  Observe that in $\mathbb{Y}$ we have:
\begin{align*}
&
\begin{tikzpicture}[baseline={([yshift=-.5ex]current bounding box.center)}]
\begin{pgfonlayer}{nodelayer}
\node [style=rn] (0) at (0, 1) {};
\draw[fill=white] (1,1) circle [radius=.6] node (1) {$\la\bar{f}, f \ra$};
\node [style=rn] (2) at (3, 2) {};
\node [style=rn] (3) at (3, .5) {};
\draw[fill=white] (4,1.6) circle [radius=.65] node (4) {$\la\bar{f\cnv}, {f\cnv} \ra$};
\node [style=fanin] (5) at (6, .75) {};
\draw[fill=white] (8,1) circle [radius=.75] node (6) {$\la\bar{f\cnv}, {f\cnv} \ra\cnv$};
\node [style=rn] (7) at (9, 1) {};
\end{pgfonlayer}
\begin{pgfonlayer}{edgelayer}
\draw plot [smooth, tension=0.3] coordinates { (0) (1)};
\draw plot [smooth, tension=0.3] coordinates { (1) (3) (4)};
\draw plot [smooth, tension=0.3] coordinates { (1) (5) (6)};
\draw plot [smooth, tension=0.3] coordinates { (6) (7)};
\draw plot [smooth, tension=0.3] coordinates { (4) (5)};
\draw plot [smooth, tension=0.3] coordinates { (4) (6)};
\end{pgfonlayer}
\end{tikzpicture}\\
&:=
\begin{tikzpicture}[baseline={([yshift=-.5ex]current bounding box.center)}]
\begin{pgfonlayer}{nodelayer}
\node [style=rn] (0) at (0, 1) {};
\node [style=fanout] (1) at (1, 1) {};
\draw[fill=white] (2,1) circle [radius=.3] node (2) {$f$};
\draw[fill=white] (2,2) circle [radius=.3] node (3) {$\bar{f}$};
\node [style=rn] (4) at (3.25, .25) {};
\node [style=fanout] (5) at (3, 1.5) {};
\draw[fill=white] (4,2) circle [radius=.3] node (6) {$\bar {f\cnv}$};
\draw[fill=white] (4,.75) circle [radius=.3] node (7) {${f\cnv}$};
\node [style=fanin] (8) at (5, 0) {};
\draw[fill=white] (6,2) circle [radius=.45] node (9) {${\bar{f\cnv}}\cnv$};
\draw[fill=white] (6,0) circle [radius=.3] node (10) {$f$};
\node [style=fanin] (11) at (7, 1) {};
\node [style=rn] (12) at (8, 1) {};
\end{pgfonlayer}
\begin{pgfonlayer}{edgelayer}
\draw plot [smooth, tension=0.3] coordinates { (0) (1)};
\draw plot [smooth, tension=0.3] coordinates { (1) (2)};
\draw plot [smooth, tension=0.3] coordinates { (1) (3)};
\draw plot [smooth, tension=0.3] coordinates { (2) (5)};
\draw plot [smooth, tension=0.3] coordinates { (3) (4) (8)};
\draw plot [smooth, tension=0.3] coordinates { (5) (7) (8)};
\draw plot [smooth, tension=0.3] coordinates { (5) (6) (9) (11)};
\draw plot [smooth, tension=0.3] coordinates { (8) (10) (11)};
\draw plot [smooth, tension=0.3] coordinates { (11) (12)};
\end{pgfonlayer}
\end{tikzpicture}\\
&=
\begin{tikzpicture}[baseline={([yshift=-.5ex]current bounding box.center)}]
\begin{pgfonlayer}{nodelayer}
\node [style=rn] (0) at (0, 1) {};
\node [style=fanout] (1) at (1, 1) {};
\draw[fill=white] (2,2) circle [radius=.3] node (2) {$ f$};
\draw[fill=white] (2,0) circle [radius=.3] node (3) {$\bar{f}$};
\node [style=rn] (4) at (3.25, .25) {};
\node [style=fanout] (5) at (3, 2) {};
\draw[fill=white] (4,2) circle [radius=.3] node (6) {$\bar {f\cnv}$};
\draw[fill=white] (4,1) circle [radius=.3] node (7) {${f\cnv}$};
\node [style=fanin] (8) at (5, 0) {};
\draw[fill=white] (6,2) circle [radius=.45] node (9) {${\bar {f\cnv}}\cnv$};
\draw[fill=white] (6,0) circle [radius=.3] node (10) {$f$};
\node [style=fanin] (11) at (7, 1) {};
\node [style=rn] (12) at (8, 1) {};
\end{pgfonlayer}
\begin{pgfonlayer}{edgelayer}
\draw plot [smooth, tension=0.3] coordinates { (0) (1)};
\draw plot [smooth, tension=0.3] coordinates { (1) (2)};
\draw plot [smooth, tension=0.3] coordinates { (1) (3)};
\draw plot [smooth, tension=0.3] coordinates { (2) (5)};
\draw plot [smooth, tension=0.3] coordinates { (3) (8)};
\draw plot [smooth, tension=0.3] coordinates { (5) (7) (8)};
\draw plot [smooth, tension=0.3] coordinates { (5) (6) (9) (11)};
\draw plot [smooth, tension=0.3] coordinates { (8) (10) (11)};
\draw plot [smooth, tension=0.3] coordinates { (11) (12)};
\end{pgfonlayer}
\end{tikzpicture}&\text{As $\Delta$ is cocommutative}\\
&=
\begin{tikzpicture}[baseline={([yshift=-.5ex]current bounding box.center)}]
\begin{pgfonlayer}{nodelayer}
\node [style=rn] (0) at (0, 1) {};
\node [style=fanout] (1) at (1, 1) {};
\draw[fill=white] (2,2) circle [radius=.3] node (2) {$ f$};
\draw[fill=white] (2,0) circle [radius=.3] node (3) {$\bar{f}$};
\node [style=rn] (4) at (3.25, .25) {};
\node [style=fanout] (5) at (3, 2) {};
\draw[fill=white] (4,1) circle [radius=.3] node (7) {$f\cnv$};
\node [style=fanin] (8) at (5, 0) {};
\draw[fill=white] (6,2) circle [radius=.3] node (9) {$\bar{f\cnv}$};
\draw[fill=white] (6,0) circle [radius=.3] node (10) {$f$};
\node [style=fanin] (11) at (7, 1) {};
\node [style=rn] (12) at (8, 1) {};
\end{pgfonlayer}
\begin{pgfonlayer}{edgelayer}
\draw plot [smooth, tension=0.3] coordinates { (0) (1)};
\draw plot [smooth, tension=0.3] coordinates { (1) (2)};
\draw plot [smooth, tension=0.3] coordinates { (1) (3)};
\draw plot [smooth, tension=0.3] coordinates { (2) (5)};
\draw plot [smooth, tension=0.3] coordinates { (3) (8)};
\draw plot [smooth, tension=0.3] coordinates { (5) (7) (8)};
\draw plot [smooth, tension=0.3] coordinates { (5) (9) (11)};
\draw plot [smooth, tension=0.3] coordinates { (8) (10) (11)};
\draw plot [smooth, tension=0.3] coordinates { (11) (12)};
\end{pgfonlayer}
\end{tikzpicture}\\
&=
\begin{tikzpicture}[baseline={([yshift=-.5ex]current bounding box.center)}]
\begin{pgfonlayer}{nodelayer}
\node [style=rn] (0) at (0, 1) {};
\node [style=fanout] (1) at (1, 1) {};
\draw[fill=white] (2,2) circle [radius=.3] node (2) {$ f$};
\draw[fill=white] (2,0) circle [radius=.4] node (3) {$\bar{f}f$};
\node [style=rn] (4) at (3.25, .25) {};
\node [style=fanout] (5) at (3, 2) {};
\draw[fill=white] (4,1) circle [radius=.4] node (7) {$f\cnv f$};
\node [style=fanin] (8) at (5, 0) {};
\draw[fill=white] (6,2) circle [radius=.3] node (9) {$\bar{f\cnv}$};
\node [style=rn] (10) at (6,0) {};
\node [style=fanin] (11) at (7, 1) {};
\node [style=rn] (12) at (8, 1) {};
\end{pgfonlayer}
\begin{pgfonlayer}{edgelayer}
\draw plot [smooth, tension=0.3] coordinates { (0) (1)};
\draw plot [smooth, tension=0.3] coordinates { (1) (2)};
\draw plot [smooth, tension=0.3] coordinates { (1) (3)};
\draw plot [smooth, tension=0.3] coordinates { (2) (5)};
\draw plot [smooth, tension=0.3] coordinates { (3) (8)};
\draw plot [smooth, tension=0.3] coordinates { (5) (7) (8)};
\draw plot [smooth, tension=0.3] coordinates { (5) (9) (11)};
\draw plot [smooth, tension=0.3] coordinates { (8) (10) (11)};
\draw plot [smooth, tension=0.3] coordinates { (11) (12)};
\end{pgfonlayer}
\end{tikzpicture} &\text{As $\Delta$ is natural}\\
&=
\begin{tikzpicture}[baseline={([yshift=-.5ex]current bounding box.center)}]
\begin{pgfonlayer}{nodelayer}
\node [style=rn] (0) at (0, 1) {};
\node [style=fanout] (1) at (1, 1) {};
\draw[fill=white] (2,2) circle [radius=.3] node (2) {$f$};
\draw[fill=white] (2,0) circle [radius=.3] node (3) {$f$};
\node [style=rn] (4) at (3.25, .25) {};
\node [style=fanout] (5) at (3, 2) {};
\draw[fill=white] (4,1) circle [radius=.3] node (7) {$\bar{f\cnv}$};
\node [style=fanin] (8) at (5, 0) {};
\draw[fill=white] (6,2) circle [radius=.3] node (9) {$\bar{f\cnv} $};
\node [style=rn] (10) at (6,0) {};
\node [style=fanin] (11) at (7, 1) {};
\node [style=rn] (12) at (8, 1) {};
\end{pgfonlayer}
\begin{pgfonlayer}{edgelayer}
\draw plot [smooth, tension=0.3] coordinates { (0) (1)};
\draw plot [smooth, tension=0.3] coordinates { (1) (2)};
\draw plot [smooth, tension=0.3] coordinates { (1) (3)};
\draw plot [smooth, tension=0.3] coordinates { (2) (5)};
\draw plot [smooth, tension=0.3] coordinates { (3) (8)};
\draw plot [smooth, tension=0.3] coordinates { (5) (7) (8)};
\draw plot [smooth, tension=0.3] coordinates { (5) (9) (11)};
\draw plot [smooth, tension=0.3] coordinates { (8) (10) (11)};
\draw plot [smooth, tension=0.3] coordinates { (11) (12)};
\end{pgfonlayer}
\end{tikzpicture}\\
&=
\begin{tikzpicture}[baseline={([yshift=-.5ex]current bounding box.center)}]
\begin{pgfonlayer}{nodelayer}
\node [style=rn] (0) at (0, 1) {};
\node [style=fanout] (1) at (1, 1) {};
\draw[fill=white] (2,2) circle [radius=.4] node (2) {$ f\bar{f\cnv}$};
\draw[fill=white] (2,0) circle [radius=.3] node (3) {$f$};
\node [style=rn] (4) at (3.25, .25) {};
\node [style=fanout] (5) at (3, 2) {};
\node [style=rn] (7) at (4,1) {};
\node [style=fanin] (8) at (5, 0) {};
\node [style=rn] (9) at (6,2) {};
\node [style=rn] (10) at (6,0) {};
\node [style=fanin] (11) at (7, 1) {};
\node [style=rn] (12) at (8, 1) {};
\end{pgfonlayer}
\begin{pgfonlayer}{edgelayer}
\draw plot [smooth, tension=0.3] coordinates { (0) (1)};
\draw plot [smooth, tension=0.3] coordinates { (1) (2)};
\draw plot [smooth, tension=0.3] coordinates { (1) (3)};
\draw plot [smooth, tension=0.3] coordinates { (2) (5)};
\draw plot [smooth, tension=0.3] coordinates { (3) (8)};
\draw plot [smooth, tension=0.3] coordinates { (5) (7) (8)};
\draw plot [smooth, tension=0.3] coordinates { (5) (9) (11)};
\draw plot [smooth, tension=0.3] coordinates { (8) (10) (11)};
\draw plot [smooth, tension=0.3] coordinates { (11) (12)};
\end{pgfonlayer}
\end{tikzpicture}&\text{As $\Delta$ is natural}\\
&=
\begin{tikzpicture}[baseline={([yshift=-.5ex]current bounding box.center)}]
\begin{pgfonlayer}{nodelayer}
\node [style=rn] (0) at (0, 1) {};
\node [style=fanout] (1) at (1, 1) {};
\draw[fill=white] (2,2) circle [radius=.3] node (2) {$ f$};
\draw[fill=white] (2,0) circle [radius=.3] node (3) {$f$};
\node [style=rn] (4) at (3.25, .25) {};
\node [style=fanout] (5) at (3, 2) {};
\node [style=rn] (7) at (4,1) {};
\node [style=fanin] (8) at (5, 0) {};
\node [style=rn] (9) at (6,2) {};
\node [style=rn] (10) at (6,0) {};
\node [style=fanin] (11) at (7, 1) {};
\node [style=rn] (12) at (8, 1) {};
\end{pgfonlayer}
\begin{pgfonlayer}{edgelayer}
\draw plot [smooth, tension=0.3] coordinates { (0) (1)};
\draw plot [smooth, tension=0.3] coordinates { (1) (2)};
\draw plot [smooth, tension=0.3] coordinates { (1) (3)};
\draw plot [smooth, tension=0.3] coordinates { (2) (5)};
\draw plot [smooth, tension=0.3] coordinates { (3) (8)};
\draw plot [smooth, tension=0.3] coordinates { (5) (7) (8)};
\draw plot [smooth, tension=0.3] coordinates { (5) (9) (11)};
\draw plot [smooth, tension=0.3] coordinates { (8) (10) (11)};
\draw plot [smooth, tension=0.3] coordinates { (11) (12)};
\end{pgfonlayer}
\end{tikzpicture}\\
&=
\begin{tikzpicture}[baseline={([yshift=-.5ex]current bounding box.center)}]
\begin{pgfonlayer}{nodelayer}
\node [style=rn] (-1) at (-1, 1) {};
\draw[fill=white] (0,1) circle [radius=.3] node (0) {$ f$};
\node [style=fanout] (1) at (1, 1) {};
\node [style=rn] (2) at (2,2) {};
\node [style=rn] (3) at (2,0) {};
\node [style=rn] (4) at (3.25, .25) {};
\node [style=fanout] (5) at (3, 2) {};
\node [style=rn] (7) at (4,1) {};
\node [style=fanin] (8) at (5, 0) {};
\node [style=rn] (9) at (6,2) {};
\node [style=rn] (10) at (6,0) {};
\node [style=fanin] (11) at (7, 1) {};
\node [style=rn] (12) at (8, 1) {};
\end{pgfonlayer}
\begin{pgfonlayer}{edgelayer}
\draw plot [smooth, tension=0.3] coordinates { (-1) (0) (1)};
\draw plot [smooth, tension=0.3] coordinates { (1) (2) (5)};
\draw plot [smooth, tension=0.3] coordinates { (1) (3) (8)};
\draw plot [smooth, tension=0.3] coordinates { (5) (7) (8)};
\draw plot [smooth, tension=0.3] coordinates { (5) (9) (11)};
\draw plot [smooth, tension=0.3] coordinates { (8) (10) (11)};
\draw plot [smooth, tension=0.3] coordinates { (11) (12)};
\end{pgfonlayer}
\end{tikzpicture}&\text{As $\Delta$ is natural}\\
&=
\begin{tikzpicture}[baseline={([yshift=-.5ex]current bounding box.center)}]
\begin{pgfonlayer}{nodelayer}
\node [style=rn] (0) at (-1, 1) {};
\draw[fill=white] (0,1) circle [radius=.3] node (1) {$ f$};
\node [style=fanout] (2) at (1, 1) {};
\node [style=fanin] (3) at (2, 1) {};
\node [style=fanout] (4) at (3, 1) {};
\node [style=fanin] (5) at (4, 1) {};
\node [style=rn] (6) at (5, 1) {};
\node [style=rn] (10) at (1.5, 1.5) {};
\node [style=rn] (11) at (3.5, 1.5) {};
\node [style=rn] (20) at (1.5, .5) {};
\node [style=rn] (21) at (3.5, .5) {};
\end{pgfonlayer}
\begin{pgfonlayer}{edgelayer}
\draw plot [smooth, tension=.7] coordinates {(1) (0) (2)};
\draw plot [smooth, tension=.7] coordinates {(3) (4)};
\draw plot [smooth, tension=.7] coordinates {(5) (6)};
\draw plot [smooth, tension=.7] coordinates {(2) (10) (3)};
\draw plot [smooth, tension=.7] coordinates {(2) (20) (3)};
\draw plot [smooth, tension=.7] coordinates {(4) (11) (5)};
\draw plot [smooth, tension=.7] coordinates {(4) (21) (5)};
\end{pgfonlayer}
\end{tikzpicture}&\text{By the semi-Frobenius property}\\
&=
\begin{tikzpicture}[baseline={([yshift=-.5ex]current bounding box.center)}]
\begin{pgfonlayer}{nodelayer}
\node [style=rn] (0) at (-1, 1) {};
\draw[fill=white] (0,1) circle [radius=.3] node (1) {$ f$};
\node [style=rn] (2) at (1, 1) {};
\end{pgfonlayer}
\begin{pgfonlayer}{edgelayer}
\draw plot [smooth, tension=.7] coordinates {(0) (1) (2)};
\end{pgfonlayer}
\end{tikzpicture}&\text{As $\Delta$ is separable}
\end{align*}
Therefore, $f \in F(\mathbb{X})$ and by symmetry $f\cnv \in F(\mathbb{X})$ as well.

\end{proof}

\begin{lemma}
\label{simulating_total_maps}
If $f\in \CTor_2(\mathbb{Z}_2^n, \mathbb{Z}_2^m)$, then there is a map $g \in \CNOT(n,m)$ with $\tilde H_0(g) = \la 1, f\ra$.
\end{lemma}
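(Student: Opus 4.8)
The plan is to combine the concrete description of total maps from Proposition \ref{rem:tor2} with a direct construction of the graph circuit out of the three generators. By Proposition \ref{rem:tor2}, a torsor homomorphism $f \in \CTor_2(\mathbb{Z}_2^n, \mathbb{Z}_2^m)$ is exactly an affine map, so I may fix a matrix $A \in \mathbb{Z}_2^{m\times n}$ and a vector $b \in \mathbb{Z}_2^m$ with $f(x) = Ax \oplus b$ (matrix multiplication taken over $\mathbb{Z}_2$). Recall that $\langle 1, f\rangle := \Delta(1 \otimes f)$ is the graph $x \mapsto (x, f(x))$, a total partial isomorphism $\mathbb{Z}_2^n \to \mathbb{Z}_2^{n+m}$. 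Since $\tilde H_0$ is computed on total maps pointwise via the family $\hat{\_}$ — namely $\tilde H_0(g)(x) = y$ precisely when $\hat{x}g = \hat{y}$ — and since two total maps of torsors that agree on every element of $\mathbb{Z}_2^n$ coincide, it suffices to produce a circuit $g$ (from $n$ to $n+m$ wires) satisfying $\hat{x}g = \hat{x, f(x)}$ for every $x \in \mathbb{Z}_2^n$.

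First I would write down $g$ explicitly. Tensor the $n$ input wires with $m$ fresh ancilla wires, taking the $j$-th such wire to be the input ancilla $\hat{b_j}$ (that is, $\onein$ when $b_j = 1$ and the derived $\zeroin$ when $b_j = 0$). At this stage the first $n$ wires carry $x$ and the last $m$ carry $b$. Then, for each pair $(j,i)$ with $A_{j,i} = 1$, insert a $\cnot$ gate whose control is input wire $i$ and whose operating bit is the $j$-th output ancilla wire. Because a $\cnot$ leaves its control bit untouched, after all these gates the first $n$ wires still carry $x$ — this is what preserves the $\langle 1, -\rangle$ copy — while the $j$-th output wire comes to carry $b_j \oplus \bigoplus_i A_{j,i} x_i = f(x)_j$. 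Crucially, $g$ contains only $\cnot$ gates and input ancillae, so it introduces no output ancillae.

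Next I would verify the defining equation $\hat{x}g = \hat{x, f(x)}$, and this is where Lemma \ref{lemma:behaveasexpected} carries the argument. Precomposing the input ancillae with $\hat{x}$ puts $g$ in the form of basis states $\hat{\_}$ fed into a $\cnot$-network with no output ancillae, so Lemma \ref{lemma:behaveasexpected} applies: each $\cnot$ sends $\hat{u}$ to $\hat{u'}$ where $u'$ is obtained from $u$ by XOR-ing the control coordinate into the target coordinate. Proceeding by induction on the number of $\cnot$ gates (equivalently on the nonzero entries of $A$), each gate updates exactly the output coordinate it targets and leaves the control wires fixed, so the accumulated effect on the last $m$ wires is precisely $x \mapsto Ax \oplus b$, yielding $\hat{x}g = \hat{x, Ax \oplus b} = \hat{x, f(x)}$ and hence $\tilde H_0(g) = \langle 1, f\rangle$.

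The main obstacle is the bookkeeping in this last verification rather than any conceptual difficulty: one must track which wire each $\cnot$ acts on under the implicit permutations hidden in the phrase ``control wire $i$, operating wire $n+j$'', and confirm that the control wires are genuinely invariant so that the $n$-fold identity part of the graph survives intact. All of this reduces to repeated application of Lemma \ref{lemma:behaveasexpected}, but one must be careful to check that no output ancilla is ever created inside $g$, since that hypothesis is exactly what makes Lemma \ref{lemma:behaveasexpected} applicable and guarantees that $g$ represents a total map whose value at each basis point is the graph of $f$.
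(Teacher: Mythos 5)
Your proof is correct and is essentially the paper's own argument: the paper likewise writes $f$ as a linear map plus a shift $b$, builds $g$ as $1_n \otimes \hat{b_1,\cdots,b_m}$ followed by a $\cnot$ from input wire $i$ to the $j$-th ancilla wire exactly when the matrix entry $a_{i,j}$ is $1$, and verifies $\hat{x}g = \hat{x,f(x)}$ pointwise using Lemma \ref{lemma:behaveasexpected}. Your version is if anything slightly more careful in the bookkeeping (making explicit that the control wires survive untouched and that no output ancill\ae\ appear), but there is no substantive difference in approach.
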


\begin{proof}
Consider $f\in \CTor_2(\mathbb{Z}_2^n, \mathbb{Z}_2^m)$.
Recall that $f$ may be regarded as a linear map $t:\mathbb{Z}_2^n\to \mathbb{Z}_2^m$ with a shift $(b_1, \cdots, b_m)$.  Consider the standard bases $\{e_i\}$ and $\{m_j\}$ of $\mathbb{Z}_2^n$ and $\mathbb{Z}_2^m$, respectively.  As $t$ is a linear map, for any $1\leq i \leq n$ there are unique coefficients $a_{i,j}\in\mathbb{Z}_2$ for all  $1\leq j \leq n$ such that:
$$f(e_i) = \sum_{i=1}^m a_{i,j}m_i$$
However, as $\mathbb{Z}_2^n$ is a vector space over $\mathbb{Z}_2$, the coefficients $a_{i,j}$ are either $0$ or $1$ so they determine for each $i$ a subset of the $m_i$.

Construct a circuit by starting with $1_n\otimes \hat{ b_1, \cdots, b_m}$ and condition on $a_{ij}$; apply  a $\cnot$ gate from the $i$th wire to the $i+j$th wire for all $1\leq i \leq n$ and $1\leq j \leq m$.  Call this new circuit $g$.
Given any $(c_1,\cdots, c_n) \in \mathbb{Z}_2^n$, by Lemma $\ref{lemma:behaveasexpected}$:

\[
|c_1,\cdots, c_n\ra g
= \bigotimes_{j=1}^m \left| b_j+\sum_{i=1}^n a_{i,j}c_i \right\ra
\]

Therefore, $\tilde H_0 (g)(c_1,\cdots, c_n) = f(c_1,\cdots, c_n)$ and thus $\tilde H_0 (g) = f$.

For example, consider the map $f\in\CTor_2(\mathbb{Z}_2^3, \mathbb{Z}_2^2)$ given by the affine transformation with a linear component $T$ and shift $S$ such that:

\[
T=
\begin{bmatrix}
1 & 0 & 1\\
1 & 0 & 0
\end{bmatrix}
\hspace*{1cm}
\text{and}
\hspace*{1cm}
S=
\begin{bmatrix}
0 \\
1 
\end{bmatrix}
\]

Then the corresponding circuit $g$ such that $\tilde H_0(g) = \la 1,f \ra$ is:

$$
\begin{tikzpicture}[baseline={([yshift=-.5ex]current bounding box.center)}]
\begin{pgfonlayer}{nodelayer}
\node [style=rn] (40) at (-1, 2) {};
\node [style=dot] (41) at (.5, 2) {};
\node [style=dot] (42) at (1, 2) {};
\node [style=rn] (43) at (1.5, 2) {};
\node [style=rn] (44) at (2, 2) {};
\node [style=rn] (30) at (-1, 1.5) {};
\node [style=rn] (31) at (.5, 1.5) {};
\node [style=rn] (32) at (1, 1.5) {};
\node [style=rn] (33) at (1.5, 1.5) {};
\node [style=rn] (34) at (2, 1.5) {};
\node [style=rn] (20) at (-1, 1) {};
\node [style=rn] (21) at (.5, 1) {};
\node [style=rn] (22) at (1, 1) {};
\node [style=dot] (23) at (1.5, 1) {};
\node [style=rn] (24) at (2, 1) {};
\node [style=zeroin] (10) at (-.5, .5) {};
\node [style=oplus] (11) at (.5, .5) {};
\node [style=rn] (12) at (1, .5) {};
\node [style=rn] (13) at (1.5, .5) {};
\node [style=rn] (14) at (2, .5) {};
\node [style=onein] (0) at (-.5, 0) {};
\node [style=rn] (1) at (.5, 0) {};
\node [style=oplus] (2) at (1, 0) {};
\node [style=oplus] (3) at (1.5, 0) {};
\node [style=rn] (4) at (2, 0) {};
\end{pgfonlayer}
\begin{pgfonlayer}{edgelayer}
\draw [style=simple] (40) to (44);
\draw [style=simple] (30) to (34);
\draw [style=simple] (20) to (24);
\draw [style=simple] (10) to (14);
\draw [style=simple] (0) to (4);
\draw [style=simple] (41) to (11);
\draw [style=simple] (42) to (2);
\draw [style=simple] (23) to (3);
\end{pgfonlayer}
\end{tikzpicture}
$$
\end{proof}

We are now ready to prove:

\begin{proposition}
\label{prop:full}
$\tilde H_0:\CNOT\to\ParIso(\CTor_2)^*$ is full.
\end{proposition}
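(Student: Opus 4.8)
The plan is to reduce the fullness of $\tilde H_0$ to three results already in hand: that $\tilde H_0$ simulates the graph of any \emph{total} torsor map (Lemma \ref{simulating_total_maps}), that it is full on restriction idempotents (Theorem \ref{Theorem: restriction faithful}), and the combinatorial Lemma \ref{lemma:fullCopy}. Since $\CNOT$ and $\ParIso(\CTor_2)^*$ are discrete inverse categories (Theorem \ref{thm:CNOTINV}, Proposition \ref{prop:torDiscreteInverse}) and $\tilde H_0$ preserves inverse products (Lemma \ref{lemma:PreservesInverseProducts}), Lemma \ref{lemma:fullCopy} applies: to place an arbitrary partial isomorphism $f \in \ParIso(\CTor_2)^*(\Z_2^n, \Z_2^m)$ in the image of $\tilde H_0$ it suffices to exhibit both $\la \bar f, f\ra$ and $\la \bar{f\cnv}, f\cnv\ra$ in that image. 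As $f\cnv$ is again a partial isomorphism of $\CTor_2$, it is enough to establish the single claim: for every partial isomorphism $f$ the graph $\la \bar f, f\ra$ lies in the image of $\tilde H_0$.

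First I would show that every such $f$ is dominated by a total map, i.e. $f = \bar f\, t$ for some total affine map $t : \Z_2^n \to \Z_2^m$. The domain of $f$ is an affine subspace of $\Z_2^n$ on which $f$ acts as the restriction of an affine map; choosing images on a complementary set of coordinates (and any values whatsoever when the domain is empty) extends this to a total affine $t$ agreeing with $f$ wherever $f$ is defined, so $\bar f\, t = f$. Naturality of $\Delta$ (Lemma \ref{lem:fanoutNatural}) then yields
\[
\la \bar f, f\ra = \Delta(\bar f \otimes \bar f\, t) = \Delta(\bar f \otimes \bar f)(1\otimes t) = \bar f\, \Delta(1 \otimes t) = \bar f\, \la 1, t\ra .
\]
Now $\la 1, t\ra$ is in the image of $\tilde H_0$ by Lemma \ref{simulating_total_maps}, say $\tilde H_0(g) = \la 1, t\ra$, and the restriction idempotent $\bar f$ is in the image by Theorem \ref{Theorem: restriction faithful}, say $\tilde H_0(e) = \bar f$ for a restriction idempotent $e$ of $\CNOT$. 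Functoriality gives $\tilde H_0(e g) = \bar f\, \la 1, t\ra = \la \bar f, f\ra$, proving the claim. Applying it to $f$ and to $f\cnv$ and invoking Lemma \ref{lemma:fullCopy} places $f$ in the image of $\tilde H_0$; as $f$ was arbitrary, $\tilde H_0$ is full.

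The main obstacle is the domination step. It is \emph{not} a formal restriction-category fact that partial maps factor through total ones; it relies essentially on the affine structure of $\CTor_2$, namely that an affine map defined on an affine subspace of $\Z_2^n$ always extends to the whole space (which holds because every subspace over $\Z_2$ has a complement). Once this affine-extension fact is secured, the rest is a short composition of the three cited results with the naturality of $\Delta$, and the empty-domain case is absorbed uniformly: there $\bar f$ is the empty restriction idempotent, still simulated by a degenerate circuit via Theorem \ref{Theorem: restriction faithful} and Lemma \ref{lemma:tensorOmegaDegen}, and the choice of $t$ is immaterial since $\bar f\, t = f$ regardless.
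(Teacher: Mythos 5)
Your proposal is correct and follows essentially the same route as the paper's own proof: reduce fullness to simulating graphs via Lemma \ref{lemma:fullCopy}, dominate the partial isomorphism by a total affine map (the paper obtains this from injectivity of the span's apex as a $\Z_2$-vector space, you from extendability of affine maps defined on affine subspaces --- the same linear-algebra fact over $\Z_2$), then compose the simulation of $\la 1, t\ra$ from Lemma \ref{simulating_total_maps} with the restriction idempotent supplied by Theorem \ref{Theorem: restriction faithful}. The empty-domain case is also handled in the same way in both arguments, via the degenerate circuit $\Omega_{n,m}$.
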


\begin{proof}
Suppose $\xymatrix@1{A & A' \ar[l]_{f} \ar[r]^{g} & B}$ is a partial isomorphism in $\Par(\CTor_2)^{*}$.  Thus $f$ and $g$ are monics.  If $A'$ is empty we can simulate the map as $\tilde H_0(\Omega_{n,m})$ for some $n,m \in \N$.  On the otherhand if  $A'$ is non-empty
then there is a total map $r$ with $fr = 1_{A'}$ as the object $A'$ is injective (as it is injective as a $\mathbb{Z}_2$-vector space).  This means the 
total map $\xymatrix@1{A & A \ar@{=}[l] \ar[r]^{rg} & B}$ extends $(f,g)$ (so $(f,g) \leq (1_A,rg)$) as 
$$\xymatrix{ & A' \ar[dd]^f \ar[ld]_f \ar[rd]^g \\ A & & B \\ & A \ar@{=}[lu] \ar[ru]_{rg} }$$
But by Lemma \ref{simulating_total_maps} there is a map $k \in \CNOT$ with $\tilde H_0(k) = \la 1, rg \ra$.  By the fullness of $\tilde H_0$ on restriction idempotents there is an $e$ with $\tilde H_0(e) = (f,f)$ but then
$$\tilde H_0(ek) = \tilde H_0(e)\tilde H_0(k) = (f,f)\la 1_a,rg \ra= \la \bar{(f,g)}, (f,g) \ra.$$
Similarly we can implement  $\la \bar{(g,f)}, (g,f) \ra$ and therefore by Lemma \ref{lemma:fullCopy} we can implement $(f,g)$.

Thus $\tilde H_0$  is full.
\end{proof}

\subsection{\texorpdfstring{$\tilde H_0: \CNOT \to \ParIso(\CTor_2)^{*}$}{H0} is faithful}

We reduce the faithfulness of $\tilde H_0$ to being faithful on restriction idempotents in two stages.

\begin{lemma}
\label{lemma:faithfulidempotents}
A restriction functor $F:\mathbb{X}\to \mathbb{Y}$ between inverse categories is faithful if and only if it reflects and is faithful on restriction idempotents.
\end{lemma}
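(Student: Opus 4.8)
The plan is to prove the two implications separately, the substance lying in the direction that reflecting restriction idempotents together with faithfulness on them forces full faithfulness.

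For the easy direction, suppose $F$ is faithful. Then it is trivially faithful on restriction idempotents, being injective on every hom-set. To see that it reflects them, take an endomorphism $e:A\to A$ with $F(e)$ a restriction idempotent in $\mathbb{Y}$. Since $F$ is a restriction functor, $F(e)=\overline{F(e)}=F(\bar e)$, so faithfulness gives $e=\bar e$; that is, $e$ is a restriction idempotent.

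For the converse, assume $F$ reflects restriction idempotents and is faithful on them, and suppose $F(f)=F(g)$ for parallel maps $f,g\colon A\to B$. First I would record that a restriction functor between inverse categories automatically preserves partial inverses: both $F(f\cnv)$ and $F(f)\cnv$ satisfy the defining equations of a partial inverse of $F(f)$, namely $F(f)F(f\cnv)=F(ff\cnv)=\overline{F(f)}$ and $F(f\cnv)F(f)=F(f\cnv f)=\overline{F(f\cnv)}$, so by uniqueness of partial inverses they coincide. Consequently $F(fg\cnv)=F(f)F(g)\cnv=F(f)F(f)\cnv=\overline{F(f)}$, a restriction idempotent in $\mathbb{Y}$. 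As $fg\cnv\colon A\to A$ is an endomorphism and $F$ reflects restriction idempotents, $fg\cnv$ is a restriction idempotent; and since $F(fg\cnv)=\overline{F(f)}=F(\bar f)$ with both $fg\cnv$ and $\bar f$ restriction idempotents, faithfulness on restriction idempotents yields $fg\cnv=\bar f$. Applying the same faithfulness to $F(\bar f)=\overline{F(f)}=\overline{F(g)}=F(\bar g)$ gives $\bar f=\bar g$.

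With these two reduced identities in hand, the conclusion would follow from the restriction axioms via the computation
\begin{align*}
g=\bar g g=\bar f g=(fg\cnv)g=f\,\overline{g\cnv}=\overline{fg\cnv}\,f=\overline{\bar f}\,f=\bar f f=f,
\end{align*}
where I use \ref{R.1} for $\bar g g=g$ and $\bar f f=f$, the identity $\overline{g\cnv}=g\cnv g$ (using \ref{INV:1}), axiom \ref{R.4} applied to the pair $(f,g\cnv)$ to pivot $f\,\overline{g\cnv}=\overline{fg\cnv}\,f$, and finally $fg\cnv=\bar f$ together with $\overline{\bar f}=\bar f$. This establishes $f=g$, so $F$ is faithful. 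I expect the only real obstacle to be bookkeeping rather than depth: correctly tracking domains and codomains so that ``reflects restriction idempotents'' is invoked on the genuine endomorphism $fg\cnv$, deriving preservation of partial inverses from uniqueness rather than assuming it, and choosing \ref{R.4} to move $\overline{g\cnv}$ past $f$. The computation only closes up once the two facts $fg\cnv=\bar f$ and $\bar f=\bar g$ have been extracted first.
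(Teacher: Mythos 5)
Your proof is correct and follows essentially the same route as the paper: the crux in both is to apply reflection and idempotent-faithfulness to the endomorphism $fg\cnv$, obtaining $fg\cnv = \bar f$ (the paper writes this as $ff\cnv$). The remaining differences are cosmetic --- the paper finishes by noting that $g\cnv$ is then the partial inverse of $f$ and invoking uniqueness of partial inverses, whereas you finish with a direct computation from \ref{R.1}, \ref{R.4} and $\bar f = \bar g$; moreover your explicit uniqueness argument that $F(g\cnv)=F(g)\cnv$ fills in a step the paper uses implicitly.
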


To reflect restriction idempotents means that, whenever $h: A \to A$ is an endomorphism with $F(h)$ a restriction idempotent, then $h$ is a restriction idempotent itself.

\begin{proof}~

\begin{description}
\item[$\Rightarrow$]: Suppose $F$ is faithful then it is faithful on restriction idempotents. If $F(g) = \bar{F(g)}$, then $F(g)F(g) = F(gg) = F(g)$. So $g$ is an idempotent and thus a restriction idempotent (as all idempotents are restriction idempotents in an inverse category).

\item[$\Leftarrow$]: Suppose $F$ reflects and is faithful on restriction idempotents and that $F(f) = F(g)$ for $f,g$ which are parallel maps in $\X$. This means that \[ \bar{F(f)} = F(f)F(f)\cnv = F(f) F(g) \cnv = F(fg \cnv) \]
So $fg \cnv$ is a restriction idempotent as is $g \cnv f$. But \[ F(fg \cnv) = F(f) F(g) \cnv = F(f) F(f) \cnv = F(ff \cnv) \] So $fg \cnv = ff \cnv$. Thus $g \cnv$ is the partial inverse of $f$, and hence $g \cnv = f \cnv$.

\end{description}

\end{proof}

Therefore, as $\tilde H_0$ is a restriction functor, it suffices to prove that $\tilde H_0$ reflects and is faithful on restriction idempotents.  However, we can do better for discrete inverse categories:

\begin{lemma}
\label{lemma:faithfullemma}
A restriction functor $F:\mathbb{X}\to \mathbb{Y}$ between discrete inverse categories which preserves the inverse product is faithful if and only if it is faithful on restriction idempotents.
\end{lemma}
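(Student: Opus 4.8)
For the forward implication I would note immediately that restriction idempotents form a subclass of the maps of $\X$, so any faithful $F$ is in particular faithful on them; the content of the lemma is entirely in the converse. For that direction the plan is to invoke Lemma \ref{lemma:faithfulidempotents}: since $F$ is a restriction functor between inverse categories, it is faithful as soon as it both is faithful on restriction idempotents and \emph{reflects} them. Faithfulness on restriction idempotents is the hypothesis, so the whole problem collapses to showing that $F$ reflects restriction idempotents, and this is exactly where the preservation of inverse products is used.

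The key step is therefore the claim: if $h\colon A\to A$ is such that $F(h)$ is a restriction idempotent, then $h$ is itself a restriction idempotent. To prove it I would first record that, because $F$ preserves the inverse product (hence $\Delta$, $\otimes$ and $\Delta\cnv$, the latter since a restriction functor between inverse categories automatically preserves partial inverses) as well as the restriction, it preserves the meet $f\cap g:=\Delta(f\otimes g)\Delta\cnv$; that is, $F(f\cap g)=F(f)\cap F(g)$. Now set $e:=h\cap 1_A$. As the meet is the binary infimum for the restriction order, $e\le 1_A$, and any map below an identity equals its own restriction (from $f=\bar f 1_A$), so $e$ is a restriction idempotent. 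Applying $F$ gives $F(e)=F(h)\cap 1_{FA}$, and since $F(h)$ is a restriction idempotent it lies below $1_{FA}$; meets being infima, $F(h)\cap 1_{FA}=F(h)$, whence $F(e)=F(h)$.

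Next I would compare $e$ with $\bar h$, which is always a restriction idempotent. Because $F$ preserves restrictions and $F(h)$ is a restriction idempotent, $F(\bar h)=\overline{F(h)}=F(h)=F(e)$. Both $\bar h$ and $e$ are restriction idempotents with the same image, so faithfulness on restriction idempotents forces $\bar h=e=h\cap 1_A$. Finally $h\cap 1_A\le h$ yields $\bar h\le h$, and unwinding the order via \ref{R.1} gives $\bar h=\overline{\bar h}\,h=\bar h\,h=h$; hence $h=\bar h$ is a restriction idempotent, which is precisely the reflection property. Combined with the hypothesis and Lemma \ref{lemma:faithfulidempotents}, this finishes faithfulness.

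The genuinely delicate point is the reduction move itself: realizing that the meet, available because the categories are discrete inverse and preserved because $F$ respects the inverse product, lets one replace the arbitrary endomorphism $h$ by the honest restriction idempotent $h\cap 1_A$ without altering its image under $F$. Once that substitution is made, the remainder is a short manipulation of the restriction order. The one place I would take care to spell out is why $F$ preserves $\cap$, i.e. that $F\Delta\cnv=(F\Delta)\cnv$ and $F(f\otimes g)=F(f)\otimes F(g)$ combine to give $F(f\cap g)=F(f)\cap F(g)$; everything else is routine.
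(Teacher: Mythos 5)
Your proof is correct and follows essentially the same route as the paper: reduce faithfulness to reflection of restriction idempotents via Lemma \ref{lemma:faithfulidempotents}, use preservation of meets (coming from preservation of the inverse product) to produce a restriction idempotent with the same image as $\bar{h}$, then apply faithfulness on restriction idempotents and close with $\bar{h}\leq h \Rightarrow h=\bar{h}$. The only cosmetic difference is that you work with $h\cap 1_A$ where the paper works with $f\cap\bar{f}$ --- these coincide, and the two arguments finish identically.
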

\begin{proof}~

\begin{description}

\item[$\Rightarrow$]: If $F$ is faithful it is certainly faithful on restriction idempotents.

\item[$\Leftarrow$]: By Lemma \ref{lemma:faithfulidempotents}, it suffices to prove that $F$ reflects restriction idempotents. 

Suppose $F(f) = F( \bar{f} )$, then 

\[ F(\bar{f}) = F(f) \cap \bar{F(f)} = F(f) \cap F(\bar{f}) = F( f \cap \bar{f}) \]

Since $\bar{f}$ and $f \cap \bar{f}$ are restriction idempotents and $F$ is faithful on restriction idempotents, then $\bar{f} = f \cap \bar{f} \leq f$. But then $\bar{f} \leq f$ iff $\bar{f}f = \bar{f}$. So $f=\bar{f}$ as $\bar{f}f = f$.

\end{description}

\end{proof}

Therefore, by Lemma \ref{lemma:faithfullemma} and Lemma \ref{lemma:PreservesInverseProducts}, it suffices to show that $\tilde H_0$ is faithful on restriction idempotents to prove that it is faithful.  However, we already 
have proven that $\tilde H_0$ is faithful on idempotents in Theorem \ref{Theorem: restriction faithful} so we have:

\begin{proposition}
\label{lemma:Gfaithful}
$\tilde H_0:\CNOT\to \ParIso(\CTor_2)^*$ is faithful.
\end{proposition}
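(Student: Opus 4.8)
The plan is to assemble this result directly from the two reduction lemmas and the restriction-idempotent case that have already been established. The statement to prove is Proposition \ref{lemma:Gfaithful}, namely that $\tilde H_0:\CNOT\to\ParIso(\CTor_2)^*$ is faithful. The key observation is that all the hard work is already behind us: we have Lemma \ref{lemma:faithfullemma}, which says that a restriction functor between discrete inverse categories preserving the inverse product is faithful if and only if it is faithful on restriction idempotents; and we have Theorem \ref{Theorem: restriction faithful}, which establishes that $\tilde H_0$ is full and faithful on restriction idempotents.

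First I would verify that the hypotheses of Lemma \ref{lemma:faithfullemma} are met. This requires three things: that $\CNOT$ and $\ParIso(\CTor_2)^*$ are discrete inverse categories, that $\tilde H_0$ is a restriction functor, and that it preserves the inverse product. The first holds by Theorem \ref{thm:CNOTINV} (for $\CNOT$) and Proposition \ref{prop:torDiscreteInverse} (for $\ParIso(\CTor_2)^*$). The second holds because $\tilde H_0$ factors through the restriction functor $h_0$ (see Lemma \ref{remark:restrictionfunctor} and Corollary \ref{rem:pullback}). The third is exactly the content of Lemma \ref{lemma:PreservesInverseProducts}.

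With these hypotheses confirmed, Lemma \ref{lemma:faithfullemma} reduces faithfulness of $\tilde H_0$ to faithfulness on restriction idempotents. But that is precisely what Theorem \ref{Theorem: restriction faithful} delivers (faithfulness being one half of its "full and faithful on restriction idempotents" conclusion). Composing these two facts yields the proposition immediately. Concretely, the proof is one line: by Lemma \ref{lemma:faithfullemma} and Lemma \ref{lemma:PreservesInverseProducts} it suffices to show $\tilde H_0$ is faithful on restriction idempotents, which is Theorem \ref{Theorem: restriction faithful}.

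There is no serious obstacle remaining at this point, since the genuine difficulty — establishing faithfulness on restriction idempotents via the clausal normal form and the simulation of Gaussian elimination inside $\CNOT$ — has already been discharged in Theorem \ref{Theorem: restriction faithful}. The only thing to be careful about is making sure each cited hypothesis of Lemma \ref{lemma:faithfullemma} is explicitly in place, so that the appeal to that lemma is clean; in particular one should not forget to invoke Lemma \ref{lemma:PreservesInverseProducts} to supply inverse-product preservation, which is the nontrivial structural condition the lemma demands.
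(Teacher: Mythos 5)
Your proof is correct and follows exactly the paper's own argument: the paper likewise combines Lemma \ref{lemma:faithfullemma} with Lemma \ref{lemma:PreservesInverseProducts} to reduce faithfulness to faithfulness on restriction idempotents, which is supplied by Theorem \ref{Theorem: restriction faithful}. Your additional explicit verification of the discrete-inverse-category and restriction-functor hypotheses is a fine (if implicit in the paper) bit of bookkeeping, but it changes nothing about the route.
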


Finally this gives the main theorem:

\begin{theorem}  \label{thm:CNOTEquiv}
There is an equivalence of categories between $\CNOT$ and $\ParIso(\CTor)^{*}$.
\end{theorem}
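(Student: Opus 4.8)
The plan is to assemble the equivalence from the pieces that have already been established, since by this point almost all of the genuine work is done. The strategy is to appeal to the standard criterion that a functor is an equivalence of categories precisely when it is full, faithful, and essentially surjective. I would therefore invoke the functor $\tilde H_0: \CNOT \to \ParIso(\CTor_2)^*$ constructed via the pullback (Corollary \ref{rem:pullback}), and cite the three properties established in the preceding subsections.

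Concretely, the proof is a short synthesis. First I would recall that $\tilde H_0$ is a well-defined functor preserving inverse products by Lemma \ref{lemma:PreservesInverseProducts}. Then I would cite the three defining properties in turn: $\tilde H_0$ is full by Proposition \ref{prop:full}, faithful by Proposition \ref{lemma:Gfaithful}, and essentially surjective by Proposition \ref{prop:essentiallysurjective}. A functor between categories that is full, faithful, and essentially surjective is an equivalence of categories (a standard fact), so $\tilde H_0$ exhibits the desired equivalence $\CNOT \simeq \ParIso(\CTor_2)^*$. I would note that the statement writes $\ParIso(\CTor)^{*}$ but the body of the paper consistently works with $\ParIso(\CTor_2)^{*}$, so I would interpret the theorem as asserting the equivalence with the characteristic-$2$ category.

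There is essentially no obstacle remaining at this stage: the theorem is a one-line corollary of the accumulated results. If anything, the only point requiring care is that the three properties were proven for the \emph{same} functor $\tilde H_0$ — one must make sure the fullness, faithfulness, and essential surjectivity all refer to $\tilde H_0$ as constructed by the pullback, rather than to any of the auxiliary functors $h_0$, $\tilde h_0$, or $H_0$ used along the way. Since Propositions \ref{prop:full}, \ref{lemma:Gfaithful}, and \ref{prop:essentiallysurjective} are all phrased directly in terms of $\tilde H_0$, this coherence is already in place, and the proof reduces to citing them together.

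Thus the proof I would write is simply:
\begin{proof}
By Corollary \ref{rem:pullback} there is a functor $\tilde H_0: \CNOT \to \ParIso(\CTor_2)^*$.  This functor is full by Proposition \ref{prop:full}, faithful by Proposition \ref{lemma:Gfaithful}, and essentially surjective by Proposition \ref{prop:essentiallysurjective}.  A functor which is full, faithful, and essentially surjective is an equivalence of categories; hence $\tilde H_0$ is an equivalence of categories between $\CNOT$ and $\ParIso(\CTor_2)^*$.
\end{proof}
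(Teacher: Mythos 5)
Your proposal is correct and is essentially identical to the paper's own proof, which likewise deduces the equivalence by citing Propositions \ref{prop:full}, \ref{lemma:Gfaithful}, and \ref{prop:essentiallysurjective} for fullness, faithfulness, and essential surjectivity of $\tilde H_0$. Your reading of $\ParIso(\CTor)^{*}$ in the theorem statement as the characteristic-$2$ category $\ParIso(\CTor_2)^{*}$ is also the correct interpretation.
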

\begin{proof}
The equivalence functor $\tilde H_0:\CNOT\to\ParIso(\CTor_2)^*$ is full, faithful, and essentially surjective by Propositions \ref{prop:full}, \ref{lemma:Gfaithful} and \ref{prop:essentiallysurjective}, respectively.
\end{proof}
\end{document}